\documentclass[11pt]{article}

% Packages - Start
\usepackage[T1]{fontenc}
\usepackage{lmodern}
\usepackage{hyperref}
\usepackage[letterpaper,margin=1.00in]{geometry}
\usepackage{amsmath, amssymb, amsthm, thmtools, amsfonts}
\usepackage{bbm}
\usepackage{appendix}
\usepackage{graphicx}
\usepackage{color}
\usepackage{algorithm}
\usepackage[noend]{algpseudocode}
\usepackage{xspace}
\usepackage{mathtools}
\usepackage{todonotes}
\usepackage{booktabs}
\usepackage{enumerate}
\usepackage{enumitem}
\newcommand{\ra}[1]{\renewcommand{\arraystretch}{#1}}
\usepackage{tikz}
\usetikzlibrary{math,calc}
\usepackage{caption}
\usepackage{subcaption}
\usepackage{tabularx}
\usepackage{tablefootnote}

\usepackage[
    style=numeric,
    citestyle=numeric,
    giveninits = true,
    url=false,
    doi=true,
]{biblatex}
\addbibresource{literature.bib}
% Packages - End

% tikz - Start
\usepackage{tikz}
%\usetikzlibrary{external}
%\tikzexternalize[prefix=tikz/]
\usetikzlibrary{math}
% Copyright (C) 2021 Andreas Padalkin.
%
% ABRC - amoebots with reconfigurable circuits
%
% Use within tikzmath environment.
% First, draw all amoebots. Then, draw the circuits.
% Otherwise, the circuits can be covered by the amoebots.
%
% \usepackage{tikz}
% \usetikzlibrary{math}
%
\tikzmath{
    % global variables:
    %
    % abrcversion           - version of this file
    \abrcversion = 0.01;
    %
    % scale                 - side length
    \scale = 1.0;
    %
    % mirror                - mirror on x=y if \mirror > 0
    \mirror = 0;
    %
    % pins                  - number of pins
    \pins = 0;
    %
    % line color
    % r                     - red
    \lr = 0;
    % g                     - green
    \lg = 0;
    % b                     - blue
    \lb = 0;
    %
    % fill color
    % r                     - red
    \fr = 255;
    % g                     - green
    \fg = 255;
    % b                     - blue
    \fb = 255;
    %
    % line                  - line style
    let \line = solid;
    %
    % ============================================================
    %
    % x - steps into NE     (steps into NE if \mirror > 0)
    % y - steps into N      (steps into E if \mirror > 0)
    % z - steps into SE     (steps into NW if \mirror > 0)
    function Amoebot(\x, \y, \z) {
        % auxiliary
        \t = \scale/2;
        \h = sqrt(\scale^2 - (\scale/2)^2);
        % translation
        \tx = (\x+\z) * 3*\t;
        \ty = (\y+0.5*\x-0.5*\z) * 2*\h;
        % amoebot
        if \mirror > 0 then {
            {
                \filldraw[color=black, fill={rgb,255:red,\fr; green,\fg; blue,\fb}]
                (0 + \ty,\t + \tx)
                --
                (0 + \ty,3*\t + \tx)
                --
                (\h + \ty,4*\t + \tx)
                --
                (2*\h + \ty,3*\t + \tx)
                --
                (2*\h + \ty,\t + \tx)
                --
                (\h + \ty,0 + \tx)
                --
                cycle;
            };
        } else {
            {
                \filldraw[color=black, fill={rgb,255:red,\fr; green,\fg; blue,\fb}]
                (\t + \tx,0 + \ty)
                --
                (3*\t + \tx,0 + \ty)
                --
                (4*\t + \tx,\h + \ty)
                --
                (3*\t + \tx,2*\h + \ty)
                --
                (\t + \tx,2*\h + \ty)
                --
                (0 + \tx,\h + \ty)
                --
                cycle;
            };
        };
        % pins
        if \pins > 0 then {
            for \p in {1,...,\pins}{
                \q = \p / (\pins+1);
                % NW/SW
                \xp = (1-\q) * 0 + \q * \t + \tx;
                \ypn = (1-\q) * \h + \q * 2*\h + \ty;
                \yps = (1-\q) * \h + \q * 0 + \ty;
                if \mirror > 0 then {
                    {
                        \filldraw (\ypn, \xp) circle (0.04); % SE
                        \filldraw (\yps, \xp) circle (0.04); % SW
                    };
                } else {
                    {
                        \filldraw (\xp, \ypn) circle (0.04); % NW
                        \filldraw (\xp, \yps) circle (0.04); % SW
                    };
                };
                % N/S
                \xp = (1-\q) * \t + \q * 3*\t + \tx;
                \ypn = 2*\h + \ty;
                \yps = 0 + \ty;
                if \mirror > 0 then {
                    {
                        \filldraw (\ypn, \xp) circle (0.04); % E
                        \filldraw (\yps, \xp) circle (0.04); % W
                    };
                } else {
                    {
                        \filldraw (\xp, \ypn) circle (0.04); % N
                        \filldraw (\xp, \yps) circle (0.04); % S
                    };
                };
                % NE/SE
                \xp = (1-\q) * 3*\t + \q * 4*\t + \tx;
                \ypn = (1-\q) * 2*\h + \q * \h + \ty;
                \yps = (1-\q) * 0 + \q * \h + \ty;
                if \mirror > 0 then {
                    {
                        \filldraw (\ypn, \xp) circle (0.04); % NE
                        \filldraw (\yps, \xp) circle (0.04); % NW
                    };
                } else {
                    {
                        \filldraw (\xp, \ypn) circle (0.04); % NE
                        \filldraw (\xp, \yps) circle (0.04); % SE
                    };
                };
            };
        };
    };
    %
    % ============================================================
    %
    % x - steps into NE     (steps into NE if \mirror > 0)
    % y - steps into N      (steps into E if \mirror > 0)
    % z - steps into SE     (steps into NW if \mirror > 0)
    % ba - bond a
    % pa - pin a
    % bb - bond b
    % pb - pin b
    % bondnumber:
    % 1 - S                 (W if \mirror > 0)
    % 2 - SE                (NW if \mirror > 0)
    % 3 - NE                (NE if \mirror > 0)
    % 4 - N                 (E if \mirror > 0)
    % 5 - NW                (SE if \mirror > 0)
    % 6 - SW                (SW if \mirror > 0)
    function Circuit(\x, \y, \z, \ba, \pa, \bb, \pb) {
        % auxiliary
        \t = \scale/2;
        \h = sqrt(\scale^2 - (\scale/2)^2);
        % translation
        \tx = (\x+\z) * 3*\t;
        \ty = (\y+0.5*\x-0.5*\z) * 2*\h;
        if \ba > 0 && \ba < 7 && \bb > 0 && \bb < 7 then {
            if \pa > 0 && \pa <= \pins && \pb > 0 && \pb <= \pins then {
                % coordinates of the first pin
                \q = \pa / (\pins+1);
                if \ba == 1 then {
                    \pax = (1-\q) * \t + \q * 3*\t + \tx;
                    \pay = 0 + \ty;
                };
                if \ba == 2 then {
                    \pax = (1-\q) * 3*\t + \q * 4*\t + \tx;
                    \pay = (1-\q) * 0 + \q * \h + \ty;
                };
                if \ba == 3 then {
                    \pax = (1-\q) * 3*\t + \q * 4*\t + \tx;
                    \pay = (1-\q) * 2*\h + \q * \h + \ty;
                };
                if \ba == 4 then {
                    \pax = (1-\q) * \t + \q * 3*\t + \tx;
                    \pay = 2*\h + \ty;
                };
                if \ba == 5 then {
                    \pax = (1-\q) * 0 + \q * \t + \tx;
                    \pay = (1-\q) * \h + \q * 2*\h + \ty;
                };
                if \ba == 6 then {
                    \pax = (1-\q) * 0 + \q * \t + \tx;
                    \pay = (1-\q) * \h + \q * 0 + \ty;
                };
                % coordinates of the second pin
                \q = \pb / (\pins+1);
                if \bb == 1 then {
                    \pbx = (1-\q) * \t + \q * 3*\t + \tx;
                    \pby = 0 + \ty;
                };
                if \bb == 2 then {
                    \pbx = (1-\q) * 3*\t + \q * 4*\t + \tx;
                    \pby = (1-\q) * 0 + \q * \h + \ty;
                };
                if \bb == 3 then {
                    \pbx = (1-\q) * 3*\t + \q * 4*\t + \tx;
                    \pby = (1-\q) * 2*\h + \q * \h + \ty;
                };
                if \bb == 4 then {
                    \pbx = (1-\q) * \t + \q * 3*\t + \tx;
                    \pby = 2*\h + \ty;
                };
                if \bb == 5 then {
                    \pbx = (1-\q) * 0 + \q * \t + \tx;
                    \pby = (1-\q) * \h + \q * 2*\h + \ty;
                };
                if \bb == 6 then {
                    \pbx = (1-\q) * 0 + \q * \t + \tx;
                    \pby = (1-\q) * \h + \q * 0 + \ty;
                };
                % draw circuit
                if \mirror > 0 then {
                    {
                        \draw[color={rgb,255:red,\lr; green,\lg; blue,\lb}, \line]
                        (\pay,\pax) -- (\pby,\pbx);
                    };
                } else {
                    {
                        \draw[color={rgb,255:red,\lr; green,\lg; blue,\lb}, \line]
                        (\pax,\pay) -- (\pbx,\pby);
                    };
                };
            };
        };
    };
    %
    % ============================================================
    %
    % x - steps into NE     (steps into NE if \mirror > 0)
    % y - steps into N      (steps into E if \mirror > 0)
    % z - steps into SE     (steps into NW if \mirror > 0)
    function Label(\x, \y, \z) {
        % auxiliary
        \t = \scale/2;
        \h = sqrt(\scale^2 - (\scale/2)^2);
        % translation
        \tx = (\x+\z) * 3*\t;
        \ty = (\y+0.5*\x-0.5*\z) * 2*\h;
        % pins
        if \pins > 0 then {
            for \p in {1,...,\pins}{
                integer \pinnr;
                \pinnr = \p;
                \q = \p / (\pins+1);
                % NW/SW
                \xp = (1-\q) * 0 + \q * \t + \tx;
                \ypn = (1-\q) * \h + \q * 2*\h + \ty;
                \yps = (1-\q) * \h + \q * 0 + \ty;
                if \mirror > 0 then {
                    {
                        \node[below right] at (\ypn, \xp) {\pinnr}; % SE
                        \node[below left] at (\yps, \xp) {\pinnr}; % SW
                    };
                } else {
                    {
                        \node[above left] at (\xp, \ypn) {\pinnr}; % NW
                        \node[below left] at (\xp, \yps) {\pinnr}; % SW
                    };
                };
                % N/S
                \xp = (1-\q) * \t + \q * 3*\t + \tx;
                \ypn = 2*\h + \ty;
                \yps = 0 + \ty;
                if \mirror > 0 then {
                    {
                        \node[right] at (\ypn + .1, \xp) {\pinnr}; % E
                        \node[left] at (\yps - .1, \xp) {\pinnr}; % W
                    };
                } else {
                    {
                        \node[above] at (\xp, \ypn + .1) {\pinnr}; % N
                        \node[below] at (\xp, \yps - .1) {\pinnr}; % S
                    };
                };
                % NE/SE
                \xp = (1-\q) * 3*\t + \q * 4*\t + \tx;
                \ypn = (1-\q) * 2*\h + \q * \h + \ty;
                \yps = (1-\q) * 0 + \q * \h + \ty;
                if \mirror > 0 then {
                    {
                        \node[above right] at (\ypn, \xp) {\pinnr}; % NE
                        \node[above left] at (\yps, \xp) {\pinnr}; % NW
                    };
                } else {
                    {
                        \node[above right] at (\xp, \ypn) {\pinnr}; % NE
                        \node[below right] at (\xp, \yps) {\pinnr}; % SE
                    };
                };
            };
        };
    };
}

\tikzmath{
    function heightEqt(\x) { % height of an equilateral triangle
        return sqrt(\x^2 - (\x/2)^2);
    };
}
% tikz - End

% Custom macros::begin %%%%%%%%%%%%%%%%%%%%%%%%%%%%%%%%%%%%%%%%%%%%%%%%
\newtheorem{theorem}{Theorem}[section]
\newtheorem{lemma}[theorem]{Lemma}

\newtheorem{meta-theorem}[theorem]{Meta-Theorem}

\newtheorem{remark}[theorem]{Remark}
\newtheorem{corollary}[theorem]{Corollary}

\newtheorem{observation}[theorem]{Observation}

\definecolor{darkgreen}{rgb}{0,0.5,0}
\definecolor{darkblue}{rgb}{0,0,0.5}
\usepackage{hyperref}
\hypersetup{
    unicode=false,
    colorlinks=true,
    linkcolor=darkblue,
    citecolor=darkgreen,
    filecolor=magenta,
    urlcolor=cyan
}
\usepackage[capitalize, nameinlink]{cleveref}
\crefname{theorem}{Theorem}{Theorems}
\Crefname{lemma}{Lemma}{Lemmas}
\Crefname{fact}{Fact}{Facts}
\Crefname{observation}{Observation}{Observations}
\Crefname{remark}{Remark}{Remarks}
\Crefname{invariant}{Invariant}{Invariants}
\Crefname{equation}{}{}

\newcommand{\s}{\mathit{State}}
\newcommand{\inactive}{\mathit{Inactive}}
\newcommand{\beep}{\mathit{Beep}}
\newcommand{\pulse}{\mathit{Pulse}}
\newcommand{\sleep}{\mathit{Sleep}}
\newcommand{\listen}{\mathit{Listen}}
\newcommand{\induced}{\mathit{Induced}}
\newcommand{\lock}{\mathit{Lock}}
\renewcommand{\paragraph}[1]{\vspace{0.15cm}\noindent {\bf #1}.}

\newcommand{\N}{\mathbb{N}}
\newcommand{\Z}{\mathbb{Z}}
\newcommand{\Q}{\mathbb{Q}}
\newcommand{\R}{\mathbb{R}}
\newcommand{\E}{\mathbb{E}}
\DeclareMathOperator{\lcm}{lcm}

\newcommand{\mfi}[1]{\todo[inline, color=yellow!50]{Michael: #1}}
\newcommand{\mf}[1]{\todo[color=yellow!50]{Michael: #1}}
\newcommand{\api}[1]{\todo[inline, color=red!35]{Andreas: #1}}
\newcommand{\ap}[1]{\todo[color=red!35]{Andreas: #1}}

\interfootnotelinepenalty=10000
% Custom macros::end %%%%%%%%%%%%%%%%%%%%%%%%%%%%%%%%%%%%%%%%%%%%%%%%

\title{Accelerating Amoebots via Reconfigurable Circuits \footnote{This work has been supported by the DFG Project SFB 901 (On-The-Fly Computing) and the DFG Project SCHE 1592/6-1 (PROGMATTER).
}
}

\author{
  Michael Feldmann\\
  \small Paderborn University \\
  \small michael.feldmann@upb.de\\
  \and
  Andreas Padalkin \\
  \small Paderborn University \\
  \small andreas.padalkin@upb.de\\
  \and
  Christian Scheideler\\
  \small Paderborn University \\
  \small scheideler@upb.de\\
  \and
  Shlomi Dolev\\
  \small Ben-Gurion University of the Negev\\
  \small dolev@cs.bgu.ac.il
}

\date{}

\begin{document}

\begin{titlepage}

\maketitle
\thispagestyle{empty}

\begin{abstract}
    We consider an extension to the geometric amoebot model that allows amoebots to form so-called \emph{circuits}.
    Given a connected amoebot structure, a circuit is a subgraph formed by the amoebots that permits the instant transmission of signals.
    We show that such an extension allows for significantly faster solutions to a variety of problems related to programmable matter.
    More specifically, we provide algorithms for leader election, consensus, compass alignment, chirality agreement and shape recognition.
    Leader election can be solved in $\Theta(\log n)$ rounds, w.h.p., consensus in $O(1)$ rounds and both, compass alignment and chirality agreement, can be solved in $O(\log n)$ rounds, w.h.p.
    For shape recognition, the amoebots have to decide whether the amoebot structure forms a particular shape.
    We show how the amoebots can detect a parallelogram with linear and polynomial side ratio within $\Theta(\log{n})$ rounds, w.h.p.
    Finally, we show that the amoebots can detect a shape composed of triangles within $O(1)$ rounds, w.h.p.
\end{abstract}

\end{titlepage}

\maketitle

\section{Introduction}
\label{sec:intro}

Programmable matter is a physical substance consisting of tiny, homogeneous robots (also called \emph{particles}) that is able to dynamically change its physical properties like shape or density.
Such a substance can be deployed, for example, for minimal invasive surgeries through injection into the human body (detecting cancer cells, repairing bones, closing blood vessels, etc.).
Programmable matter has been envisioned for 30 years~\cite{DBLP:journals/ijhsc/ToffoliM93} and is yet still to be realized in practice.
However, theoretical investigation on various models (such as the self-assembly model~\cite{DBLP:conf/stoc/RothemundW00}, the nubot model~\cite{10.1145/2422436.2422476} or the geometric amoebot model~\cite{DBLP:conf/spaa/DerakhshandehDGRSS14}) has already been started and is still continuing in the distributed computing community.

Shape formation algorithms are of particular interest.
Algorithms of polylogarithmic complexity are known for the nubot model \cite{10.1145/2422436.2422476}.
However, these assume particles on the molecular scale since it requires the rotation of entire substructures.
Due to the acting forces, this would not be possible on the micro or macro scale.
In contrast, many problems for the geometric amoebot model come with a natural lower bound of $\Omega(D)$, where $D$ is the diameter of the structure formed by the amoebots.
Problems like leader election require information about the entire structure to be collected in at least one particle so that a particle can declare itself to be a unique leader without running into the situation of having multiple particles declaring leadership.
The lower bound then results from the fact that information only travels particle by particle.
The main goal of our research is to formulate a model that is able to break this lower bound while still being reasonable on the micro or even macro scale.

Many of the various models for programmable matter take their inspiration from nature.
For example, the particles of the nubot model resemble molecules, and the locomotion of the particles of the amoebot model are inspired by amoeba.
However, many more fascinating forms of locomotion can be found in nature.
Our model is motivated by the muscular system.
Muscles are composed of muscle fibers, which can be stimulated to perform coordinated contractions.
These contractions (and their counterpart relaxations) allow for fast locomotion.
The stimuli are inflicted by the nervous system.
The nervous system consists of highly connected nerves.
These are able to rapidly transmit primitive signals (the stimuli) over long distances.
Our aim is to come up with a model for programmable matter incorporating both concepts: the muscular system and the nervous system.

Instead of proposing an entirely new model, we build our model on top of the geometric amoebot model.
This model is predestined for our purpose since it already provides contractions (and expansions) on a small scale of single particles.
Inspired by the nervous system described above, in this paper, we introduce reconfigurable circuits to the geometric amoebot model.
Each particle is allowed to create a constant amount of circuits with a subset of the particle structure.
A circuit formed by particles allows for the instantaneous transmission of primitive signals to all of these.

We start our investigation of the new possibilities on forming circuits in a stationary setting, i.e., for the case that the amoebots do not move.
More precisely, we study algorithms for \emph{leader election}, \emph{consensus}, \emph{compass alignment}, \emph{chirality agreement}, and \emph{shape recognition}.
We show that we are able to achieve a significant improvement on previous results with the help of the circuits.
Leader election has turned out as a crucial primitive to break symmetries in various geometric problems, e.g., shape formation (e.g., see \cite{DBLP:conf/spaa/DerakhshandehGR16,DBLP:journals/dc/LunaFSVY20}) and shape recognition.
Compass alignment and chirality agreement are important prerequisites for the coordinated movements of amoebots.
Therefore, this paper will lay the foundation for \emph{rapid} forms of shape transformation and object coating.

\subsection{Model}
\label{subsec:model}

We introduce an extension of the \emph{geometric amoebot model} from~\cite{DBLP:conf/spaa/DerakhshandehDGRSS14}, which we describe in the following.
In this model, a set of $n$ uniform amoebots is placed on the infinite regular triangular grid graph $G_{eqt} = (V, E)$ (see Figure~\ref{fig:G_eqt}).
The amoebots are anonymous, randomized finite state machines.
Since we are not dealing with movements in this paper, we assume that each amoebot occupies a single node\footnote{Actually, amoebots are allowed to occupy either a single node or a pair of adjacent nodes. W.l.o.g., we can assume the former for all amoebots. Otherwise, we let the latter simulate 2 individual amoebots.} and each node is occupied by at most one amoebot.
Alternatively to the triangular grid, we can consider a hexagonal tiling such that the centers of the hexagons coincide with the nodes of the grid.
We utilize this perspective to show the interior of the amoebots.

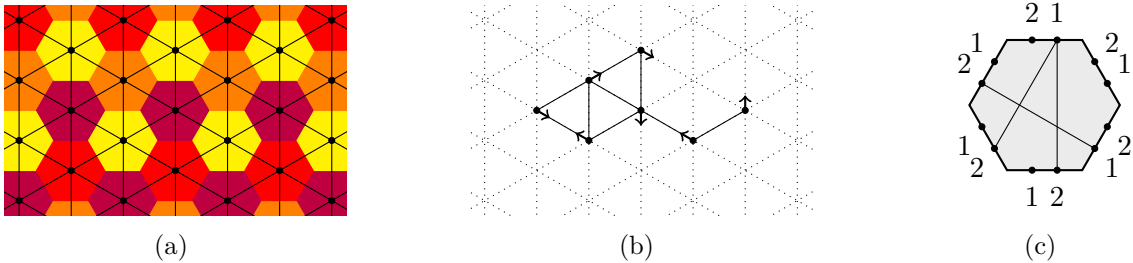
\begin{figure}[htp]
    \centering
    \begin{subfigure}[b]{0.35\textwidth}
        \centering
        \begin{tikzpicture}
            \clip (0.493,1) rectangle (5.050,3.8);
            \tikzmath{
                \scale = .8;
                \t = \scale/2;
                \h = sqrt(\scale^2 - (\scale/2)^2);
                \c = 8;
                % tiles
                for \j in {0,2,...,\c}{
                    for \i in {0,8}{
                        \x = \j*\h;
                        \y = \i*\t;
                        {
                            \filldraw[yellow]
                            (\x-1/3*\h, \y+\t) --
                            (\x+1/3*\h, \y+\t) --
                            (\x+2/3*\h, \y) --
                            (\x+1/3*\h, \y-\t) --
                            (\x-1/3*\h, \y-\t) --
                            (\x-2/3*\h, \y) --
                            cycle;
                        };
                        \x = \j*\h;
                        \y = \i*\t+2*\t;
                        {
                            \filldraw[orange]
                            (\x-1/3*\h, \y+\t) --
                            (\x+1/3*\h, \y+\t) --
                            (\x+2/3*\h, \y) --
                            (\x+1/3*\h, \y-\t) --
                            (\x-1/3*\h, \y-\t) --
                            (\x-2/3*\h, \y) --
                            cycle;
                        };
                        \x = \j*\h;
                        \y = \i*\t+4*\t;
                        {
                            \filldraw[red]
                            (\x-1/3*\h, \y+\t) --
                            (\x+1/3*\h, \y+\t) --
                            (\x+2/3*\h, \y) --
                            (\x+1/3*\h, \y-\t) --
                            (\x-1/3*\h, \y-\t) --
                            (\x-2/3*\h, \y) --
                            cycle;
                        };
                        \x = \j*\h;
                        \y = \i*\t+6*\t;
                        {
                            \filldraw[purple]
                            (\x-1/3*\h, \y+\t) --
                            (\x+1/3*\h, \y+\t) --
                            (\x+2/3*\h, \y) --
                            (\x+1/3*\h, \y-\t) --
                            (\x-1/3*\h, \y-\t) --
                            (\x-2/3*\h, \y) --
                            cycle;
                        };
                        \x = \j*\h+\h;
                        \y = \i*\t+\t;
                        {
                            \filldraw[red]
                            (\x-1/3*\h, \y+\t) --
                            (\x+1/3*\h, \y+\t) --
                            (\x+2/3*\h, \y) --
                            (\x+1/3*\h, \y-\t) --
                            (\x-1/3*\h, \y-\t) --
                            (\x-2/3*\h, \y) --
                            cycle;
                        };
                        \x = \j*\h+\h;
                        \y = \i*\t+3*\t;
                        {
                            \filldraw[purple]
                            (\x-1/3*\h, \y+\t) --
                            (\x+1/3*\h, \y+\t) --
                            (\x+2/3*\h, \y) --
                            (\x+1/3*\h, \y-\t) --
                            (\x-1/3*\h, \y-\t) --
                            (\x-2/3*\h, \y) --
                            cycle;
                        };
                        \x = \j*\h+\h;
                        \y = \i*\t+5*\t;
                        {
                            \filldraw[yellow]
                            (\x-1/3*\h, \y+\t) --
                            (\x+1/3*\h, \y+\t) --
                            (\x+2/3*\h, \y) --
                            (\x+1/3*\h, \y-\t) --
                            (\x-1/3*\h, \y-\t) --
                            (\x-2/3*\h, \y) --
                            cycle;
                        };
                        \x = \j*\h+\h;
                        \y = \i*\t+7*\t;
                        {
                            \filldraw[orange]
                            (\x-1/3*\h, \y+\t) --
                            (\x+1/3*\h, \y+\t) --
                            (\x+2/3*\h, \y) --
                            (\x+1/3*\h, \y-\t) --
                            (\x-1/3*\h, \y-\t) --
                            (\x-2/3*\h, \y) --
                            cycle;
                        };
                    };
                };
                % nodes
                for \j in {0,2,...,\c+1}{
                    for \i in {0,2,...,15}{
                        {
                            \filldraw (\j*\h, \i*\t) circle (0.04);
                            \filldraw (\j*\h+\h, \i*\t+\t) circle (0.04);
                        };
                    };
                };
                % edges
                for \i in {-16,-14,...,32}{
                    {
                        \draw (0*\h, \i*\t) -- (\c*\h+\h, \i*\t+\c*\t+\t);
                        \draw (0*\h, \i*\t) -- (\c*\h+\h, \i*\t-\c*\t-\t);
                    };
                };
                for \i in {0,1,...,\c+1}{
                    {
                        \draw (\i*\h, 0*\t) -- (\i*\h, 16*\t);
                    };
                };
            }
        \end{tikzpicture}
        \caption{\centering}
        \label{fig:G_eqt}
    \end{subfigure}
    \hfill
    \begin{subfigure}[b]{0.35\textwidth}
        \centering
        \begin{tikzpicture}
            \clip (0.493,1) rectangle (5.050,3.8);
            \tikzmath{
                \scale = .8;
                \t = \scale/2;
                \h = sqrt(\scale^2 - (\scale/2)^2);
                \c = 8;
                % grid
                for \i in {-16,-14,...,32}{
                    {
                        \draw[dotted] (0*\h, \i*\t) -- (\c*\h+\h, \i*\t+\c*\t+\t);
                        \draw[dotted] (0*\h, \i*\t) -- (\c*\h+\h, \i*\t-\c*\t-\t);
                    };
                };
                for \i in {0,1,...,\c+1}{
                    {
                        \draw[dotted] (\i*\h, 0*\t) -- (\i*\h, 16*\t);
                    };
                };
                % structure
                {
                    \filldraw (2*\h, 6*\t) circle (0.04);
                    \filldraw (3*\h, 5*\t) circle (0.04);
                    \filldraw (3*\h, 7*\t) circle (0.04);
                    \filldraw (4*\h, 6*\t) circle (0.04);
                    \filldraw (4*\h, 8*\t) circle (0.04);
                    \filldraw (5*\h, 5*\t) circle (0.04);
                    \filldraw (6*\h, 6*\t) circle (0.04);
                    \draw (2*\h, 6*\t) -- (4*\h, 8*\t);
                    \draw (3*\h, 5*\t) -- (4*\h, 6*\t);
                    \draw (5*\h, 5*\t) -- (6*\h, 6*\t);
                    \draw (3*\h, 5*\t) -- (3*\h, 7*\t);
                    \draw (4*\h, 6*\t) -- (4*\h, 8*\t);
                    \draw (2*\h, 6*\t) -- (3*\h, 5*\t);
                    \draw (3*\h, 7*\t) -- (5*\h, 5*\t);
                    \draw[thick,->] (2*\h, 6*\t) -- +(+\h/4,-\t/4);
                    \draw[thick,->] (3*\h, 5*\t) -- +(-\h/4,+\t/4);
                    \draw[thick,->] (3*\h, 7*\t) -- +(+\h/4,+\t/4);
                    \draw[thick,->] (4*\h, 6*\t) -- +(0,-\t/2);
                    \draw[thick,->] (4*\h, 8*\t) -- +(+\h/4,-\t/4);
                    \draw[thick,->] (5*\h, 5*\t) -- +(-\h/4,+\t/4);
                    \draw[thick,->] (6*\h, 6*\t) -- +(0,+\t/2);
                };
            }
        \end{tikzpicture}
        \caption{\centering}
        \label{fig:structure}
    \end{subfigure}
    \hfill
    \begin{subfigure}[b]{0.25\textwidth}
        \centering
        \begin{tikzpicture}
            \tikzmath{
                \scale = 1;
                \pins = 2;
                \x = 0;
                \y = 0;
                \z = 0;
                \fr = 235;
                \fg = 235;
                \fb = 235;
                % auxiliary
                \t = \scale/2;
                \h = sqrt(\scale^2 - (\scale/2)^2);
                % translation
                \tx = (\x+\z) * 3*\t;
                \ty = (\y+0.5*\x-0.5*\z) * 2*\h;
                % amoebot
                {
                    \filldraw[color=black, fill={rgb,255:red,\fr; green,\fg; blue,\fb}, thick]
                    (\t + \tx,0 + \ty)
                    --
                    (3*\t + \tx,0 + \ty)
                    --
                    (4*\t + \tx,\h + \ty)
                    --
                    (3*\t + \tx,2*\h + \ty)
                    --
                    (\t + \tx,2*\h + \ty)
                    --
                    (0 + \tx,\h + \ty)
                    --
                    cycle;
                };
                % pins
                if \pins > 0 then {
                    for \p in {1,...,\pins}{
                        integer \pinnr;
                        \pinnr = \p;
                        \q = \p / (\pins+1);
                        % NW/SW
                        \xp = (1-\q) * 0 + \q * \t + \tx;
                        \ypn = (1-\q) * \h + \q * 2*\h + \ty;
                        \yps = (1-\q) * \h + \q * 0 + \ty;
                        {
                            \filldraw (\xp, \yps) circle (0.04); % SW
                            \node[below left] at (\xp, \yps) {\pinnr}; % SW
                        };
                        % N/S
                        \xp = (1-\q) * \t + \q * 3*\t + \tx;
                        \ypn = 2*\h + \ty;
                        \yps = 0 + \ty;
                        {
                            \filldraw (\xp, \yps) circle (0.04); % S
                            \node[below] at (\xp, \yps - .1) {\pinnr}; % S
                        };
                        % NE/SE
                        \xp = (1-\q) * 3*\t + \q * 4*\t + \tx;
                        \ypn = (1-\q) * 2*\h + \q * \h + \ty;
                        \yps = (1-\q) * 0 + \q * \h + \ty;
                        {
                            \filldraw (\xp, \yps) circle (0.04); % SE
                            \node[below right] at (\xp, \yps) {\pinnr}; % SE
                        };
                    };
                };
                if \pins > 0 then {
                    for \p in {1,...,\pins}{
                        integer \pinnr;
                        \pinnr = \p;
                        \q = (\pins - \p + 1) / (\pins+1);
                        % NW/SW
                        \xp = (1-\q) * 0 + \q * \t + \tx;
                        \ypn = (1-\q) * \h + \q * 2*\h + \ty;
                        \yps = (1-\q) * \h + \q * 0 + \ty;
                        {
                            \filldraw (\xp, \ypn) circle (0.04); % NW
                            \node[above left] at (\xp, \ypn) {\pinnr}; % NW
                        };
                        % N/S
                        \xp = (1-\q) * \t + \q * 3*\t + \tx;
                        \ypn = 2*\h + \ty;
                        \yps = 0 + \ty;
                        {
                            \filldraw (\xp, \ypn) circle (0.04); % N
                            \node[above] at (\xp, \ypn + .1) {\pinnr}; % N
                        };
                        % NE/SE
                        \xp = (1-\q) * 3*\t + \q * 4*\t + \tx;
                        \ypn = (1-\q) * 2*\h + \q * \h + \ty;
                        \yps = (1-\q) * 0 + \q * \h + \ty;
                        {
                            \filldraw (\xp, \ypn) circle (0.04); % NE
                            \node[above right] at (\xp, \ypn) {\pinnr}; % NE
                        };
                    };
                };
                \q = 2 / (\pins+1);
                \pax = (1-\q) * 0 + \q * \t + \tx;
                \pay = (1-\q) * \h + \q * 0 + \ty;
                \q = 2 / (\pins+1);
                \pbx = (1-\q) * \t + \q * 3*\t + \tx;
                \pby = 2*\h + \ty;
                {
                    \draw (\pax,\pay) -- (\pbx,\pby);
                };
                \q = 2 / (\pins+1);
                \pax = (1-\q) * \t + \q * 3*\t + \tx;
                \pay = 2*\h + \ty;
                \q = 2 / (\pins+1);
                \pbx = (1-\q) * \t + \q * 3*\t + \tx;
                \pby = 0 + \ty;
                {
                    \draw (\pax,\pay) -- (\pbx,\pby);
                };
                \q = 1 / (\pins+1);
                \pax = (1-\q) * 3*\t + \q * 4*\t + \tx;
                \pay = (1-\q) * 0 + \q * \h + \ty;
                \q = 1 / (\pins+1);
                \pbx = (1-\q) * 0 + \q * \t + \tx;
                \pby = (1-\q) * \h + \q * 2*\h + \ty;
                {
                    \draw (\pax,\pay) -- (\pbx,\pby);
                };
            }
        \end{tikzpicture}
        \caption{\centering}
        \label{fig:pins}
    \end{subfigure}
    \caption{
        (a) shows a section of $G_{eqt}$ and the corresponding hexagonal tiling.
        (b) shows a connected amoebot structure.
        The nodes are amoebots.
        The solid lines are bonds.
        The arrows indicate the compass orientation.
        We omit the underlying triangular grid in the remaining figures.
        (c) shows the inner perspective of a single amoebot.
        The nodes are pins.
        The pin configuration is given by the edges between the pins.
        Here, the pins are labeled in a counterclockwise fashion.
    }
\end{figure}

A bond is formed between amoebots occupying adjacent nodes.
These amoebots are said to be neighbors.
Neighbors are able to exchange (constant sized) messages\footnote{The communication in previous work is mainly done by shared memories. However, this may lead to various write/read conflicts, which have to be resolved. Our model extension provides a nice solution for this problem, which is discussed in Section~\ref{sec:preliminaries}.}.
We discuss the implementation of these messages in Section~\ref{sec:preliminaries}.
Furthermore, an amoebot assigns a locally unique label to each edge incident to its occupied nodes.
Amoebots do neither agree on a common chirality nor on a common compass orientation.

Let the \emph{amoebot structure}~$S \subseteq V$ be the set of nodes occupied by the amoebots (see Figure~\ref{fig:structure}).
It is connected iff~$G_S$ is connected where $G_S = G_{eqt}|_S$ is the graph induced by~$S$.
We expect that the amoebot structure is connected\footnote{Amoebots have the ability to move by expansions and contractions. Since the presented algorithms work entirely stationary, we omit further details and refer to~\cite{DBLP:conf/spaa/DerakhshandehDGRSS14}. The amoebots have to maintain a connected structure at all times.}.
Furthermore, the amoebot structure is assumed to progress in synchronized rounds.
Each amoebot executes a \emph{look-compute-act} cycle in each round.
We explain the cycles at the end of this section.
The complexity of an algorithm is measured by the number of synchronized rounds required.

In our model extension, \emph{pins} are added to all incident edges of each amoebot (see Figure~\ref{fig:pins}).
The number of pins is equal for each edge and bounded by a constant.
The pins of neighboring amoebots coincide and are therefore viewed as a single set of pins.
The pins of each edge have a fixed order, i.e., each pin has a predecessor and a successor (except for the first and last pin of each edge).
This order is known to both incident amoebots.
However, each amoebot labels the pins according to its chirality.
As a consequence, neighboring amoebots do not agree on the labeling if they have the same chirality.
For this paper, we assume that each edge contains at least two pins (unless stated otherwise) because this allows us to solve the chirality agreement problem.

Amoebots can connect the pins of their incident edges by wires represented as undirected edges.
We call the set of these edges a \emph{pin configuration}.
We do not restrict the set in any way, e.g., by requiring a matching.
Ultimately, we are only interested in the connectivity of pins.
Thus, it does not matter whether a connected component formed by the pins is cycle-free or not.
More specifically, let $P$ be the set of all pins in the system and let $C$ be the set of all edges between pins.
Then, we define a \emph{circuit} as a connected component of graph $G_C = (P, C)$ (see Figure~\ref{fig:circuits}).
An amoebot is part of a circuit iff the circuit contains at least one pin on one of its bonds.

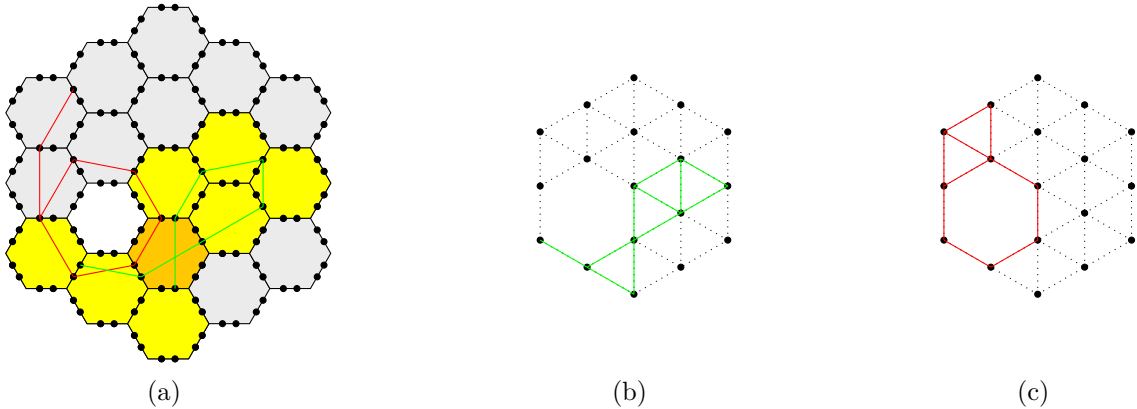
\begin{figure}[htp]
    \centering
    \begin{subfigure}[b]{0.35\textwidth}
        \centering
        \begin{tikzpicture}
            \tikzmath{
                \scale = .6*.9;
                \pins = 2;
                \fr = 235;
                \fg = 235;
                \fb = 235;
                for \i in {2,3,4}{
                    Amoebot(0,\i,0);
                };
                for \i in {1,3,4}{
                    Amoebot(1,\i,0);
                };
                for \i in {0,1,2,3,4}{
                    Amoebot(2,\i,0);
                };
                for \i in {0,1,2,3}{
                    Amoebot(3,\i,0);
                };
                for \i in {0,1,2}{
                    Amoebot(4,\i,0);
                };
                \fr = 255;
                \fg = 255;
                \fb = 0;
                Amoebot(0,2,0);
                Amoebot(1,1,0);
                Amoebot(2,0,0);
                Amoebot(2,2,0);
                Amoebot(3,1,0);
                Amoebot(3,2,0);
                Amoebot(4,1,0);
                \fr = 255;
                \fg = 200;
                \fb = 0;
                Amoebot(2,1,0);
                \lr = 255;
                \lg = 0;
                \lb = 0;
                Circuit(0,2,0,2,1,4,1);
                Circuit(0,3,0,1,1,4,1);
                Circuit(0,3,0,1,1,3,1);
                Circuit(0,4,0,1,1,3,1);
                Circuit(1,1,0,5,1,3,1);
                Circuit(1,3,0,6,1,2,1);
                Circuit(2,1,0,6,1,4,1);
                Circuit(2,2,0,1,1,5,1);
                \lr = 0;
                \lg = 255;
                \lb = 0;
                Circuit(1,1,0,5,2,3,2);
                Circuit(2,1,0,6,2,3,2);
                Circuit(3,1,0,6,2,3,2);
                Circuit(4,1,0,6,2,5,2);
                Circuit(3,2,0,6,2,2,2);
                Circuit(2,2,0,3,2,1,2);
                Circuit(2,1,0,4,2,1,2);
            }
        \end{tikzpicture}
        \caption{\centering}
    \end{subfigure}
    \hfill
    \begin{subfigure}[b]{0.25\textwidth}
        \centering
        \begin{tikzpicture}
            \tikzmath{
                \scale = .8*.9;
                \t = \scale/2;
                \h = sqrt(\scale^2 - (\scale/2)^2);
                % bounding box for better alignment
                {
                    \draw[color=white] (-\t, -2.5*\t) rectangle (4*\h+\t, 9*\t);
                };
                % particles
                for \i in {2,4,6}{
                    {
                        \filldraw (0*\h, \i*\t) circle (0.04);
                    };
                };
                for \i in {1,5,7}{
                    {
                        \filldraw (1*\h, \i*\t) circle (0.04);
                    };
                };
                for \i in {0,2,4,6,8}{
                    {
                        \filldraw (2*\h, \i*\t) circle (0.04);
                    };
                };
                for \i in {1,3,5,7}{
                    {
                        \filldraw (3*\h, \i*\t) circle (0.04);
                    };
                };
                for \i in {2,4,6}{
                    {
                        \filldraw (4*\h, \i*\t) circle (0.04);
                    };
                };
                {
                % bonds
                    % N - S
                    \draw[dotted] (0*\h, 2*\t) -- (0*\h, 6*\t);
                    \draw[dotted] (1*\h, 5*\t) -- (1*\h, 7*\t);
                    \draw[dotted] (2*\h, 0*\t) -- (2*\h, 8*\t);
                    \draw[dotted] (3*\h, 1*\t) -- (3*\h, 7*\t);
                    \draw[dotted] (4*\h, 2*\t) -- (4*\h, 6*\t);
                    % SW - NE
                    \draw[dotted] (0*\h, 6*\t) -- (2*\h, 8*\t);
                    \draw[dotted] (0*\h, 4*\t) -- (3*\h, 7*\t);
                    \draw[dotted] (2*\h, 4*\t) -- (4*\h, 6*\t);
                    \draw[dotted] (1*\h, 1*\t) -- (4*\h, 4*\t);
                    \draw[dotted] (2*\h, 0*\t) -- (4*\h, 2*\t);
                    % NW - SE
                    \draw[dotted] (2*\h, 8*\t) -- (4*\h, 6*\t);
                    \draw[dotted] (1*\h, 7*\t) -- (4*\h, 4*\t);
                    \draw[dotted] (0*\h, 6*\t) -- (4*\h, 2*\t);
                    \draw[dotted] (2*\h, 2*\t) -- (3*\h, 1*\t);
                    \draw[dotted] (0*\h, 2*\t) -- (2*\h, 0*\t);
                    % N - S
                    \draw[color=green] (2*\h, 0*\t) -- (2*\h, 4*\t);
                    \draw[color=green] (3*\h, 3*\t) -- (3*\h, 5*\t);
                    % SW - NE
                    \draw[color=green] (2*\h, 4*\t) -- (3*\h, 5*\t);
                    \draw[color=green] (1*\h, 1*\t) -- (4*\h, 4*\t);
                    % NW - SE
                    \draw[color=green] (3*\h, 5*\t) -- (4*\h, 4*\t);
                    \draw[color=green] (2*\h, 4*\t) -- (3*\h, 3*\t);
                    \draw[color=green] (0*\h, 2*\t) -- (2*\h, 0*\t);
                };
            }
        \end{tikzpicture}
        \caption{\centering}
    \end{subfigure}
    \hfill
    \begin{subfigure}[b]{0.25\textwidth}
        \centering
        \begin{tikzpicture}
            \tikzmath{
                \scale = .8*.9;
                \t = \scale/2;
                \h = sqrt(\scale^2 - (\scale/2)^2);
                % bounding box for better alignment
                {
                    \draw[color=white] (-\t, -2.5*\t) rectangle (4*\h+\t, 9*\t);
                };
                % particles
                for \i in {2,4,6}{
                    {
                        \filldraw (0*\h, \i*\t) circle (0.04);
                    };
                };
                for \i in {1,5,7}{
                    {
                        \filldraw (1*\h, \i*\t) circle (0.04);
                    };
                };
                for \i in {0,2,4,6,8}{
                    {
                        \filldraw (2*\h, \i*\t) circle (0.04);
                    };
                };
                for \i in {1,3,5,7}{
                    {
                        \filldraw (3*\h, \i*\t) circle (0.04);
                    };
                };
                for \i in {2,4,6}{
                    {
                        \filldraw (4*\h, \i*\t) circle (0.04);
                    };
                };
                {
                % bonds
                    % N - S
                    \draw[dotted] (0*\h, 2*\t) -- (0*\h, 6*\t);
                    \draw[dotted] (1*\h, 5*\t) -- (1*\h, 7*\t);
                    \draw[dotted] (2*\h, 0*\t) -- (2*\h, 8*\t);
                    \draw[dotted] (3*\h, 1*\t) -- (3*\h, 7*\t);
                    \draw[dotted] (4*\h, 2*\t) -- (4*\h, 6*\t);
                    % SW - NE
                    \draw[dotted] (0*\h, 6*\t) -- (2*\h, 8*\t);
                    \draw[dotted] (0*\h, 4*\t) -- (3*\h, 7*\t);
                    \draw[dotted] (2*\h, 4*\t) -- (4*\h, 6*\t);
                    \draw[dotted] (1*\h, 1*\t) -- (4*\h, 4*\t);
                    \draw[dotted] (2*\h, 0*\t) -- (4*\h, 2*\t);
                    % NW - SE
                    \draw[dotted] (2*\h, 8*\t) -- (4*\h, 6*\t);
                    \draw[dotted] (1*\h, 7*\t) -- (4*\h, 4*\t);
                    \draw[dotted] (0*\h, 6*\t) -- (4*\h, 2*\t);
                    \draw[dotted] (2*\h, 2*\t) -- (3*\h, 1*\t);
                    \draw[dotted] (0*\h, 2*\t) -- (2*\h, 0*\t);
                    % N - S
                    \draw[color = red] (0*\h, 2*\t) -- (0*\h, 6*\t);
                    \draw[color = red] (1*\h, 5*\t) -- (1*\h, 7*\t);
                    \draw[color = red] (2*\h, 2*\t) -- (2*\h, 4*\t);
                    % SW - NE
                    \draw[color = red] (0*\h, 6*\t) -- (1*\h, 7*\t);
                    \draw[color = red] (0*\h, 4*\t) -- (1*\h, 5*\t);
                    \draw[color = red] (1*\h, 1*\t) -- (2*\h, 2*\t);
                    % NW - SE
                    \draw[color = red] (0*\h, 6*\t) -- (2*\h, 4*\t);
                    \draw[color = red] (0*\h, 2*\t) -- (1*\h, 1*\t);
                };
            }
        \end{tikzpicture}
        \caption{\centering}
    \end{subfigure}
    \caption{
        (a) shows an amoebot structure with two pins per edge.
        The green and red edges are two disconnected circuits.
        Note that each unconnected pin resembles another circuit.
        The yellow and orange amoebots are connected to the green circuit.
        Note that the orange particle does not know that its connections belong to the same circuit.
        (b) and (c) show a simplified view on the green and red circuit, respectively.
    }
    \label{fig:circuits}
\end{figure}

Each amoebot can send a primitive signal (a \emph{beep}) through its circuits that is received by all amoebots of the same circuit in the next round.
%This includes the sending amoebot.
The amoebots receive a beep if at least one amoebot sends a beep on the circuit
but the amoebots neither know the origin of the signal nor the number of origins.
We have chosen a primitive signal instead of more complex messages to keep our extension as simple as possible.
However, these are enough to send whole messages if there is only one sender (see Section~\ref{sec:preliminaries}).

It remains to explain the \emph{look-compute-act} cycles.
In the look phase the amoebot gathers information.
It reads its local memory and it may receive beeps on its circuits.
In the compute phase it performs some calculations and may update its local memory.
In the act phase it may reconfigure the connections of its pins and send a signal through an arbitrary number of its circuits.
The signals are received during the next look phase.

\subsection{Why care about Circuits?}

In order to provide a concrete motivation why our proposed extension of the amoebot model via reconfigurable circuits has some benefits over the standard model, we first present a fast algorithm for the leader election problem (which is an extensively researched problem within the geometric amoebot model~\cite{DBLP:conf/sss/BazziB19, DBLP:conf/algosensors/DaymudeGRSS17, DBLP:conf/dna/DerakhshandehGS15, DBLP:conf/icalp/EmekKLM19, DBLP:conf/algosensors/GastineauAMT18, DBLP:journals/dc/LunaFSVY20}) and then describe how using circuits can be helpful in shape transformation, which may potentially lead to faster algorithms for the (extensively researched) shape formation problem~\cite{DBLP:conf/icdcn/DaymudeGHKSR20, DBLP:conf/nanocom/DerakhshandehGR15, DBLP:conf/spaa/DerakhshandehGR16, DBLP:conf/dna/DerakhshandehGS15, DBLP:journals/dc/LunaFSVY20}.

\subsubsection{Leader Election}
\label{sec:leader}

In this section we give an efficient solution for the leader election problem, where the amoebots have to agree on exactly one single amoebot, which becomes the leader.
The algorithm has already been proposed in a very similar manner for a clique of $n$ nodes in the beeping model (see the related work in Section~\ref{subsec:related_work}) by Gilbert and Newport~\cite{DBLP:conf/wdag/GilbertN15} and elects a leader in $O(\log^2 n)$ rounds w.h.p.\footnote{An event holds with high probability (w.h.p.) if it holds with probability at least $1 - 1/n^c$ where the constant $c$ can be made arbitrarily large.}
However, by carefully adjusting their termination subroutine, making it run in parallel to the election routine and executing it only a constant amount of times, we are able to improve their algorithm such that the leader is now elected after only $\Theta(\log n)$ rounds w.h.p.

Before starting the actual algorithm, the amoebots connect their all of their local pins to a clique (the algorithm only requires one pin per bond, but also works for multiple pins).
By doing so we establish a single circuit that contains all amoebots.
We call this circuit the \emph{global circuit}.
The algorithm only makes use of the global circuit and works in two phases (see Algorithm~\ref{alg:leader_election} in Appendix~\ref{app:leader} for the pseudocode and a detailed analysis).
Let $C_1 \subseteq S$ be the set of potential \emph{candidate} to be the leader.
Initially, $C_1 = S$.
In the first phase, we reduce the number of candidates from $n$ to $O(\log n)$.
The second phase then elects a leader among the remaining $O(\log n)$ candidates, w.h.p.

In the first phase we perform a \emph{tournament} on the candidates.
A single \emph{iteration} of the tournament works as follows.
Each candidate $u \in C_1$ tosses a coin that is either $\mathit{HEADS}$ or $\mathit{TAILS}$ and stores the result in a variable $u.c_1$.
Now consider two subsequent rounds $r_1, r_2$.
In round $r_1$ all candidates $u \in C_1$ with $u.c_1 = \mathit{HEADS}$ send a beep through the global circuit and in $r_2$ all candidates $u \in C_1$ with $u.c_1 = \mathit{TAILS}$ send a signal through the global circuit.
As signals are received in the next round by all particles, each particle is able to check if there is a candidate that beeped in $r_1$ and one that beeped in $r_2$.
If at least one candidate beeped in $r_1$, all candidates that beeped in $r_2$ are out of contention for being the leader (therefore, the set $C_1$ gets updated).
We continue performing iterations of the tournament until we reach an iteration where in either round $r_1$ or in $r_2$ no candidate beeped.
This finishes the first phase of our protocol and takes $\Theta(\log n)$ rounds, w.h.p.

In the second phase we continue the tournament from the first phase on the remaining $O(\log n)$ candidates in $C_1$ until there is only one single candidate left.
Unfortunately, we cannot check efficiently when this is the case because we cannot count the number of particles that have sent a signal in a single round through the global circuit.
Therefore, we aim to continue the tournament from the first phase for another $\Theta(\log n)$ rounds.
As particles cannot count to $\Theta(\log n)$ due to their constant-sized storage, we just perform a second tournament on the set $C_2$, initially set to $S$, in parallel to the tournament on the set $C_1$.
One iteration of the second phase therefore consists of $4$ rounds ($2$ rounds for the tournament on $C_1$ and $2$ rounds for the tournament on $C_2$).
The second tournament terminates once the tournament on $C_2$ has finished.
By repeating the second tournament $\kappa$ times for a constant $\kappa \geq 3$, we can guarantee that w.h.p., only one single node remains as a candidate.

\begin{theorem}
\label{th:leader_election}
	There exists a protocol using the global circuit that lets the amoebots elect a leader within $\Theta(\log n)$ rounds, w.h.p.
\end{theorem}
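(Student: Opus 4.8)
The plan is to establish three things about the protocol: (i) a correctness invariant guaranteeing that the candidate set never becomes empty and that a singleton is absorbing, (ii) that Phase~1 reduces $|C_1|$ from $n$ to $O(\log n)$ in $\Theta(\log n)$ rounds w.h.p., and (iii) that Phase~2 drives $|C_1|$ down to exactly one within a further $\Theta(\log n)$ rounds w.h.p. Combining these yields a unique leader in $\Theta(\log n)$ rounds w.h.p.

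First I would analyze one iteration of the tournament on a set of $k$ candidates. Since each candidate flips a fair coin, the probability that all $k$ flip the same value is $2^{1-k}$; in that event no candidate beeps in one of $r_1, r_2$ and the set is unchanged (and, in Phase~1, the iteration is the stopping iteration). Otherwise at least one $\mathit{HEADS}$ and one $\mathit{TAILS}$ occur, so exactly the $\mathit{TAILS}$ candidates are eliminated and the surviving set has size equal to the number of $\mathit{HEADS}$, which is $\mathrm{Binomial}(k,1/2)$ and in particular lies in $[1,k-1]$. This gives the invariant: $|C_1|$ is non-increasing, always at least $1$, and once $|C_1| = 1$ it remains $1$ forever (a lone candidate flipping $\mathit{TAILS}$ is never eliminated because no $\mathit{HEADS}$ beep occurs). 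Hence correctness reduces to showing the protocol halts with $|C_1| = 1$.

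For Phase~1, I would use a Chernoff bound to show that a non-stopping iteration shrinks $k$ to at most $(2/3)k$ w.h.p.\ whenever $k$ is large, so that after $\Theta(\log n)$ iterations the count has dropped to $O(\log n)$. Simultaneously, since $2^{1-k} \le 2n^{-c}$ for $k \ge c\log n$, a union bound over the $O(\log n)$ iterations shows that w.h.p.\ the stopping iteration (all coins equal) does not occur while $k > c\log n$; thus Phase~1 terminates with $O(\log n)$ candidates, and it does so within $\Theta(\log n)$ rounds because once $k$ is small the all-equal event has constant probability per iteration. For Phase~2, I would argue that the parallel tournament on $C_2$ (started from all of $S$) acts as a clock whose length, by the same reduction and termination analysis applied to $n$ candidates, is $\Theta(\log n)$ rounds w.h.p. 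During this window the tournament on $C_1$ keeps running; starting from $O(\log n)$ candidates it reaches a singleton once the final $2 \to 1$ reduction succeeds, and that step succeeds in each iteration with probability $1/2$, so the probability it has not finished after $t$ iterations is at most $2^{-t}$. Repeating the clock $\kappa$ times for a suitable constant $\kappa \ge 3$ provides $\Theta(\log n)$ iterations of the $C_1$ tournament, which drives this tail probability below $n^{-c}$; since the four-round iterations and the constant $\kappa$ only affect constants, the total running time remains $\Theta(\log n)$, with the $\Omega(\log n)$ side following from the clock reliably running $\Omega(\log n)$ rounds.

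I expect the main obstacle to be the Phase~2 tail analysis: I must carefully separate the randomness of the clock length from the randomness of the $C_1$ reduction, control the rare event that the clock is too short to let $C_1$ finish, and verify that a constant number $\kappa$ of clock repetitions suffices to push the overall failure probability below $1/n^c$ for the target constant $c$. The remaining steps, namely the single-iteration invariant and the Phase~1 concentration, are routine Chernoff-and-union-bound arguments.
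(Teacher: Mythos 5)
Your decomposition matches the paper's: Phase~1 is handled by a Chernoff bound showing a constant-factor shrinkage per iteration while $|C_1|$ exceeds $c\log n$ (together with the observation that the stopping event ``all coins equal'' cannot fire early), and Phase~2 uses the tournament on $C_2$, started from all $n$ amoebots, as a $\Theta(\log n)$-round clock. The single-iteration invariant (the candidate set never empties, a singleton is absorbing) is also the right correctness argument, and your $\Omega(\log n)$ lower bound via the clock is what the paper does.

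The gap is in your Phase~2 descent. The claim that ``the probability it has not finished after $t$ iterations is at most $2^{-t}$'' is false: starting from $k_0 = O(\log n)$ candidates, reaching a singleton requires at least $k_0 - 1$ eliminations, so for $t < k_0 - 1$ the non-finishing probability is $1$, not $2^{-t}$. The repaired version of your argument must lower-bound the number of \emph{productive} iterations (each iteration eliminates someone with probability at least $1 - 2^{1-k} \ge 1/2$ while $k \ge 2$) by a binomial tail $\Pr[\mathrm{Bin}(t,1/2) < k_0 - 1]$, and then verify that the clock's guaranteed length exceeds $k_0 - 1$ by a large enough constant factor for that tail to be $n^{-c}$ --- a quantitative coupling between the constant hidden in ``$O(\log n)$ survivors of Phase~1'' and the constant $\kappa$ that your sketch does not pin down. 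The paper sidesteps all of this with a pairwise argument: fix the eventual survivor $v$; any other candidate $u$ is eliminated no later than the first iteration in which $u$'s and $v$'s coins differ (one of the two then flips $\mathit{TAILS}$ while a $\mathit{HEADS}$ is present, and it cannot be $v$), so $u$ survives all $T = \kappa(\log n - \gamma \log\log n)$ iterations with probability at most $2^{-T} < n^{-(\kappa-1)}$; a union bound over the $O(\log n)$ candidates remaining after Phase~1 gives failure probability below $n^{-(\kappa-2)}$. You should either adopt that argument or replace the $2^{-t}$ claim with the explicit binomial-tail computation and constant comparison.
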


\subsubsection{Rapid Shape Transformation}

We strongly believe that through the usage of circuits along with additional extensions based on the muscular system, one is able to come up with faster algorithms for the shape transformation problem.
In the shape transformation problem, the amoebots have the task to arrange themselves (via movements) in a specific (2-dimensional) shape, like a parallelogram, a triangle, or any shape imaginable.
For the standard geometric amoebot model, some lower bounds are known for the shape formation problem.
Di Luna et al.~\cite{DBLP:journals/dc/LunaFSVY20} show that any universal shape formation algorithm needs $\Omega(n^2)$ moves and $\Omega(n)$ rounds.
Given that the desired shape is only a collection of triangles, the lower bound on the number of rounds reduces to $\Omega(\sqrt{n})$, as it has been shown by Derakhshandeh et al.~\cite{DBLP:conf/spaa/DerakhshandehGR16}.

Both lower bounds can be broken with the help of circuits.
To give an (informal) description of a concrete example, consider the shape formation process depicted in Figure~\ref{fig:shape_formation}, where $n$ amoebots initially form a simple line and aim to transform into a parallelogram of side length $\sqrt n$.

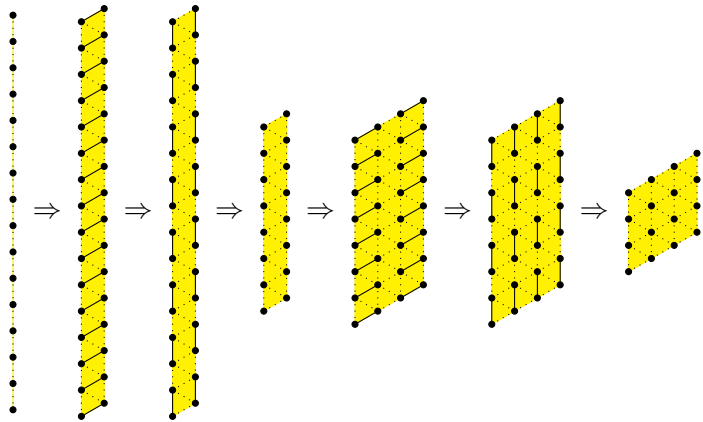
\begin{figure}[htp]
    \centering
    \begin{tikzpicture}
        \tikzmath{
            \scale = .35;
            \t = \scale/2;
            \h = sqrt(\scale^2 - (\scale/2)^2);
            % 1
            \xo = 0*\h;
            \yo = 0.5*\t;
            {
                \filldraw[yellow]
                (0*\h+\xo, 0*\t+\yo) --
                (0*\h+\xo, 30*\t+\yo) --
                (0*\h+\xo, 30*\t+\yo) --
                (0*\h+\xo, 0*\t+\yo) --
                cycle;
            };
            for \j in {0}{
                for \i in {0,4,...,28}{
                    {
                        \filldraw (0*\h+\j*\h+\xo, \i*\t+\j*\t+0*\t+\yo) circle (0.04);
                        \filldraw (0*\h+\j*\h+\xo, \i*\t+\j*\t+2*\t+\yo) circle (0.04);
                        \draw[dotted] (0*\h+\j*\h+\xo, \i*\t+\j*\t+0*\t+\yo) -- (0*\h+\j*\h+\xo, \i*\t+\j*\t+2*\t+\yo);
                    };
                };
                for \i in {2,6,...,26}{
                    {
                        \draw[dotted] (0*\h+\j*\h+\xo, \i*\t+\j*\t+0*\t+\yo) -- (0*\h+\j*\h+\xo, \i*\t+\j*\t+2*\t+\yo);
                    };
                };
            };
            % 2
            \xo = 3*\h;
            \yo = 0*\t;
            {
                \filldraw[yellow]
                (0*\h+\xo, 0*\t+\yo) --
                (0*\h+\xo, 30*\t+\yo) --
                (1*\h+\xo, 31*\t+\yo) --
                (1*\h+\xo, 1*\t+\yo) --
                cycle;
            };
            for \j in {0}{
                for \i in {0,4,...,28}{
                    {
                        \filldraw (0*\h+\j*\h+\xo, \i*\t+\j*\t+0*\t+\yo) circle (0.04);
                        \filldraw (0*\h+\j*\h+\xo, \i*\t+\j*\t+2*\t+\yo) circle (0.04);
                        \filldraw (1*\h+\j*\h+\xo, \i*\t+\j*\t+1*\t+\yo) circle (0.04);
                        \filldraw (1*\h+\j*\h+\xo, \i*\t+\j*\t+3*\t+\yo) circle (0.04);
                        \draw (0*\h+\j*\h+\xo, \i*\t+\j*\t+0*\t+\yo) -- (1*\h+\j*\h+\xo, \i*\t+\j*\t+1*\t+\yo);
                        \draw (0*\h+\j*\h+\xo, \i*\t+\j*\t+2*\t+\yo) -- (1*\h+\j*\h+\xo, \i*\t+\j*\t+3*\t+\yo);
                        \draw[dotted] (0*\h+\j*\h+\xo, \i*\t+\j*\t+0*\t+\yo) -- (0*\h+\j*\h+\xo, \i*\t+\j*\t+2*\t+\yo);
                        \draw[dotted] (1*\h+\j*\h+\xo, \i*\t+\j*\t+1*\t+\yo) -- (1*\h+\j*\h+\xo, \i*\t+\j*\t+3*\t+\yo);
                        \draw[dotted] (0*\h+\j*\h+\xo, \i*\t+\j*\t+2*\t+\yo) -- (1*\h+\j*\h+\xo, \i*\t+\j*\t+1*\t+\yo);
                    };
                };
                for \i in {2,6,...,26}{
                    {
                        \draw[dotted] (0*\h+\j*\h+\xo, \i*\t+\j*\t+0*\t+\yo) -- (0*\h+\j*\h+\xo, \i*\t+\j*\t+2*\t+\yo);
                        \draw[dotted] (1*\h+\j*\h+\xo, \i*\t+\j*\t+1*\t+\yo) -- (1*\h+\j*\h+\xo, \i*\t+\j*\t+3*\t+\yo);
                        \draw[dotted] (0*\h+\j*\h+\xo, \i*\t+\j*\t+2*\t+\yo) -- (1*\h+\j*\h+\xo, \i*\t+\j*\t+1*\t+\yo);
                    };
                };
            };
            % 3
            \xo = 7*\h;
            \yo = 0*\t;
            {
                \filldraw[yellow]
                (0*\h+\xo, 0*\t+\yo) --
                (0*\h+\xo, 30*\t+\yo) --
                (1*\h+\xo, 31*\t+\yo) --
                (1*\h+\xo, 1*\t+\yo) --
                cycle;
            };
            for \j in {0}{
                for \i in {0,4,...,28}{
                    {
                        \filldraw (0*\h+\j*\h+\xo, \i*\t+\j*\t+0*\t+\yo) circle (0.04);
                        \filldraw (0*\h+\j*\h+\xo, \i*\t+\j*\t+2*\t+\yo) circle (0.04);
                        \filldraw (1*\h+\j*\h+\xo, \i*\t+\j*\t+1*\t+\yo) circle (0.04);
                        \filldraw (1*\h+\j*\h+\xo, \i*\t+\j*\t+3*\t+\yo) circle (0.04);
                        \draw[dotted] (0*\h+\j*\h+\xo, \i*\t+\j*\t+0*\t+\yo) -- (1*\h+\j*\h+\xo, \i*\t+\j*\t+1*\t+\yo);
                        \draw[dotted] (0*\h+\j*\h+\xo, \i*\t+\j*\t+2*\t+\yo) -- (1*\h+\j*\h+\xo, \i*\t+\j*\t+3*\t+\yo);
                        \draw (0*\h+\j*\h+\xo, \i*\t+\j*\t+0*\t+\yo) -- (0*\h+\j*\h+\xo, \i*\t+\j*\t+2*\t+\yo);
                        \draw (1*\h+\j*\h+\xo, \i*\t+\j*\t+1*\t+\yo) -- (1*\h+\j*\h+\xo, \i*\t+\j*\t+3*\t+\yo);
                        \draw[dotted] (0*\h+\j*\h+\xo, \i*\t+\j*\t+2*\t+\yo) -- (1*\h+\j*\h+\xo, \i*\t+\j*\t+1*\t+\yo);
                    };
                };
                for \i in {2,6,...,26}{
                    {
                        \draw[dotted] (0*\h+\j*\h+\xo, \i*\t+\j*\t+0*\t+\yo) -- (0*\h+\j*\h+\xo, \i*\t+\j*\t+2*\t+\yo);
                        \draw[dotted] (1*\h+\j*\h+\xo, \i*\t+\j*\t+1*\t+\yo) -- (1*\h+\j*\h+\xo, \i*\t+\j*\t+3*\t+\yo);
                        \draw[dotted] (0*\h+\j*\h+\xo, \i*\t+\j*\t+2*\t+\yo) -- (1*\h+\j*\h+\xo, \i*\t+\j*\t+1*\t+\yo);
                    };
                };
            };
            % 4
            \xo = 11*\h;
            \yo = 8*\t;
            {
                \filldraw[yellow]
                    (0*\h+\xo, 0*\t+\yo) --
                    (0*\h+\xo, 14*\t+\yo) --
                    (1*\h+\xo, 15*\t+\yo) --
                    (1*\h+\xo, 1*\t+\yo) --
                    cycle;
            };
            for \j in {0,1}{
                for \i in {0,4,...,14}{
                    {
                        \filldraw (0*\h+\j*\h+\xo, \i*\t+\j*\t+0*\t+\yo) circle (0.04);
                        \filldraw (0*\h+\j*\h+\xo, \i*\t+\j*\t+2*\t+\yo) circle (0.04);
                        \draw[dotted] (0*\h+\j*\h+\xo, \i*\t+\j*\t+0*\t+\yo) -- (0*\h+\j*\h+\xo, \i*\t+\j*\t+2*\t+\yo);
                    };
                };
                for \i in {2,6,...,12}{
                    {
                        \draw[dotted] (0*\h+\j*\h+\xo, \i*\t+\j*\t+0*\t+\yo) -- (0*\h+\j*\h+\xo, \i*\t+\j*\t+2*\t+\yo);
                    };
                };
            };
            for \j in {0}{
                for \i in {0,4,...,14}{
                    {
                        \draw[dotted] (0*\h+\j*\h+\xo, \i*\t+\j*\t+0*\t+\yo) -- (1*\h+\j*\h+\xo, \i*\t+\j*\t+1*\t+\yo);
                        \draw[dotted] (0*\h+\j*\h+\xo, \i*\t+\j*\t+2*\t+\yo) -- (1*\h+\j*\h+\xo, \i*\t+\j*\t+3*\t+\yo);
                        \draw[dotted] (0*\h+\j*\h+\xo, \i*\t+\j*\t+2*\t+\yo) -- (1*\h+\j*\h+\xo, \i*\t+\j*\t+1*\t+\yo);
                    };
                };
                for \i in {2,6,...,12}{
                    {
                        \draw[dotted] (0*\h+\j*\h+\xo, \i*\t+\j*\t+2*\t+\yo) -- (1*\h+\j*\h+\xo, \i*\t+\j*\t+1*\t+\yo);
                    };
                };
            };
            % 5
            \xo = 15*\h;
            \yo = 7*\t;
            {
                \filldraw[yellow]
                (0*\h+\xo, 0*\t+\yo) --
                (0*\h+\xo, 14*\t+\yo) --
                (3*\h+\xo, 17*\t+\yo) --
                (3*\h+\xo, 3*\t+\yo) --
                cycle;
            };
            for \j in {0,2}{
                for \i in {0,4,...,14}{
                    {
                        \filldraw (0*\h+\j*\h+\xo, \i*\t+\j*\t+0*\t+\yo) circle (0.04);
                        \filldraw (0*\h+\j*\h+\xo, \i*\t+\j*\t+2*\t+\yo) circle (0.04);
                        \filldraw (1*\h+\j*\h+\xo, \i*\t+\j*\t+1*\t+\yo) circle (0.04);
                        \filldraw (1*\h+\j*\h+\xo, \i*\t+\j*\t+3*\t+\yo) circle (0.04);
                        \draw (0*\h+\j*\h+\xo, \i*\t+\j*\t+0*\t+\yo) -- (1*\h+\j*\h+\xo, \i*\t+\j*\t+1*\t+\yo);
                        \draw (0*\h+\j*\h+\xo, \i*\t+\j*\t+2*\t+\yo) -- (1*\h+\j*\h+\xo, \i*\t+\j*\t+3*\t+\yo);
                        \draw[dotted] (0*\h+\j*\h+\xo, \i*\t+\j*\t+0*\t+\yo) -- (0*\h+\j*\h+\xo, \i*\t+\j*\t+2*\t+\yo);
                        \draw[dotted] (1*\h+\j*\h+\xo, \i*\t+\j*\t+1*\t+\yo) -- (1*\h+\j*\h+\xo, \i*\t+\j*\t+3*\t+\yo);
                        \draw[dotted] (0*\h+\j*\h+\xo, \i*\t+\j*\t+2*\t+\yo) -- (1*\h+\j*\h+\xo, \i*\t+\j*\t+1*\t+\yo);
                    };
                };
                for \i in {2,6,...,12}{
                    {
                        \draw[dotted] (0*\h+\j*\h+\xo, \i*\t+\j*\t+0*\t+\yo) -- (0*\h+\j*\h+\xo, \i*\t+\j*\t+2*\t+\yo);
                        \draw[dotted] (1*\h+\j*\h+\xo, \i*\t+\j*\t+1*\t+\yo) -- (1*\h+\j*\h+\xo, \i*\t+\j*\t+3*\t+\yo);
                        \draw[dotted] (0*\h+\j*\h+\xo, \i*\t+\j*\t+2*\t+\yo) -- (1*\h+\j*\h+\xo, \i*\t+\j*\t+1*\t+\yo);
                    };
                };
            };
            for \j in {1}{
                for \i in {0,4,...,14}{
                    {
                        \draw[dotted] (0*\h+\j*\h+\xo, \i*\t+\j*\t+0*\t+\yo) -- (1*\h+\j*\h+\xo, \i*\t+\j*\t+1*\t+\yo);
                        \draw[dotted] (0*\h+\j*\h+\xo, \i*\t+\j*\t+2*\t+\yo) -- (1*\h+\j*\h+\xo, \i*\t+\j*\t+3*\t+\yo);
                        \draw[dotted] (0*\h+\j*\h+\xo, \i*\t+\j*\t+2*\t+\yo) -- (1*\h+\j*\h+\xo, \i*\t+\j*\t+1*\t+\yo);
                    };
                };
                for \i in {2,6,...,12}{
                    {
                        \draw[dotted] (0*\h+\j*\h+\xo, \i*\t+\j*\t+2*\t+\yo) -- (1*\h+\j*\h+\xo, \i*\t+\j*\t+1*\t+\yo);
                    };
                };
            };
            % 6
            \xo = 21*\h;
            \yo = 7*\t;
            {
                \filldraw[yellow]
                (0*\h+\xo, 0*\t+\yo) --
                (0*\h+\xo, 14*\t+\yo) --
                (3*\h+\xo, 17*\t+\yo) --
                (3*\h+\xo, 3*\t+\yo) --
                cycle;
            };
            for \j in {0,2}{
                for \i in {0,4,...,14}{
                    {
                        \filldraw (0*\h+\j*\h+\xo, \i*\t+\j*\t+0*\t+\yo) circle (0.04);
                        \filldraw (0*\h+\j*\h+\xo, \i*\t+\j*\t+2*\t+\yo) circle (0.04);
                        \filldraw (1*\h+\j*\h+\xo, \i*\t+\j*\t+1*\t+\yo) circle (0.04);
                        \filldraw (1*\h+\j*\h+\xo, \i*\t+\j*\t+3*\t+\yo) circle (0.04);
                        \draw[dotted] (0*\h+\j*\h+\xo, \i*\t+\j*\t+0*\t+\yo) -- (1*\h+\j*\h+\xo, \i*\t+\j*\t+1*\t+\yo);
                        \draw[dotted] (0*\h+\j*\h+\xo, \i*\t+\j*\t+2*\t+\yo) -- (1*\h+\j*\h+\xo, \i*\t+\j*\t+3*\t+\yo);
                        \draw (0*\h+\j*\h+\xo, \i*\t+\j*\t+0*\t+\yo) -- (0*\h+\j*\h+\xo, \i*\t+\j*\t+2*\t+\yo);
                        \draw (1*\h+\j*\h+\xo, \i*\t+\j*\t+1*\t+\yo) -- (1*\h+\j*\h+\xo, \i*\t+\j*\t+3*\t+\yo);
                        \draw[dotted] (0*\h+\j*\h+\xo, \i*\t+\j*\t+2*\t+\yo) -- (1*\h+\j*\h+\xo, \i*\t+\j*\t+1*\t+\yo);
                    };
                };
                for \i in {2,6,...,12}{
                    {
                        \draw[dotted] (0*\h+\j*\h+\xo, \i*\t+\j*\t+0*\t+\yo) -- (0*\h+\j*\h+\xo, \i*\t+\j*\t+2*\t+\yo);
                        \draw[dotted] (1*\h+\j*\h+\xo, \i*\t+\j*\t+1*\t+\yo) -- (1*\h+\j*\h+\xo, \i*\t+\j*\t+3*\t+\yo);
                        \draw[dotted] (0*\h+\j*\h+\xo, \i*\t+\j*\t+2*\t+\yo) -- (1*\h+\j*\h+\xo, \i*\t+\j*\t+1*\t+\yo);
                    };
                };
            };
            for \j in {1}{
                for \i in {0,4,...,14}{
                    {
                        \draw[dotted] (0*\h+\j*\h+\xo, \i*\t+\j*\t+0*\t+\yo) -- (1*\h+\j*\h+\xo, \i*\t+\j*\t+1*\t+\yo);
                        \draw[dotted] (0*\h+\j*\h+\xo, \i*\t+\j*\t+2*\t+\yo) -- (1*\h+\j*\h+\xo, \i*\t+\j*\t+3*\t+\yo);
                        \draw[dotted] (0*\h+\j*\h+\xo, \i*\t+\j*\t+2*\t+\yo) -- (1*\h+\j*\h+\xo, \i*\t+\j*\t+1*\t+\yo);
                    };
                };
                for \i in {2,6,...,12}{
                    {
                        \draw[dotted] (0*\h+\j*\h+\xo, \i*\t+\j*\t+2*\t+\yo) -- (1*\h+\j*\h+\xo, \i*\t+\j*\t+1*\t+\yo);
                    };
                };
            };
            % 7
            \xo = 27*\h;
            \yo = 11*\t;
            {
                \filldraw[yellow]
                (0*\h+\xo, 0*\t+\yo) --
                (0*\h+\xo, 6*\t+\yo) --
                (3*\h+\xo, 9*\t+\yo) --
                (3*\h+\xo, 3*\t+\yo) --
                cycle;
            };
            for \j in {0,1,2,3}{
                for \i in {0,4}{
                    {
                        \filldraw (0*\h+\j*\h+\xo, \i*\t+\j*\t+0*\t+\yo) circle (0.04);
                        \filldraw (0*\h+\j*\h+\xo, \i*\t+\j*\t+2*\t+\yo) circle (0.04);
                        \draw[dotted] (0*\h+\j*\h+\xo, \i*\t+\j*\t+0*\t+\yo) -- (0*\h+\j*\h+\xo, \i*\t+\j*\t+2*\t+\yo);
                    };
                };
                for \i in {2}{
                    {
                        \draw[dotted] (0*\h+\j*\h+\xo, \i*\t+\j*\t+0*\t+\yo) -- (0*\h+\j*\h+\xo, \i*\t+\j*\t+2*\t+\yo);
                    };
                };
            };
            for \j in {0,1,2}{
                for \i in {0,4}{
                    {
                        \draw[dotted] (0*\h+\j*\h+\xo, \i*\t+\j*\t+0*\t+\yo) -- (1*\h+\j*\h+\xo, \i*\t+\j*\t+1*\t+\yo);
                        \draw[dotted] (0*\h+\j*\h+\xo, \i*\t+\j*\t+2*\t+\yo) -- (1*\h+\j*\h+\xo, \i*\t+\j*\t+3*\t+\yo);
                        \draw[dotted] (0*\h+\j*\h+\xo, \i*\t+\j*\t+2*\t+\yo) -- (1*\h+\j*\h+\xo, \i*\t+\j*\t+1*\t+\yo);
                    };
                };
                for \i in {2}{
                    {
                        \draw[dotted] (0*\h+\j*\h+\xo, \i*\t+\j*\t+2*\t+\yo) -- (1*\h+\j*\h+\xo, \i*\t+\j*\t+1*\t+\yo);
                    };
                };
            };
            % Arrows
            {
                \draw (1.5*\h, 15.5*\t) node {$\Rightarrow$};
                \draw (5.5*\h, 15.5*\t) node {$\Rightarrow$};
                \draw (9.5*\h, 15.5*\t) node {$\Rightarrow$};
                \draw (13.5*\h, 15.5*\t) node {$\Rightarrow$};
                \draw (19.5*\h, 15.5*\t) node {$\Rightarrow$};
                \draw (25.5*\h, 15.5*\t) node {$\Rightarrow$};
            };
        }
    \end{tikzpicture}
    \caption{Sequence of steps needed in order to transform a line of $n=16$ amoebots into a parallelogram of side length $\sqrt{n} = 4$.
    }
    \label{fig:shape_formation}
\end{figure}

The amoebots first compute a leader with the help of the global circuit.
The leader is then able to use the global circuit in order to coordinate joint movements.
For example, the leader can instruct all amoebots on the line to expand in a certain direction and perform a rotation\footnote{Note that the rotation primitive can be emulated via expansions and contractions. For more details on these movement primitives, see for example~\cite{DBLP:conf/spaa/DerakhshandehDGRSS14}.} as seen in the first two steps in Figure~\ref{fig:shape_formation}.
By performing further joint movement operations (that only need a constant amount of rounds), one is therefore able to cut the line in half and move the lower half on top of the upper half.
Applying this procedure recursively on the lines for $O(\log n)$ iterations results in the desired parallelogram.

Note that in order to realize these synchronous movements, one has to solve additional problems like compass alignment and chirality agreement (otherwise the amoebots would move or rotate in different directions), which we describe later in this paper.

\subsection{Problem Statement and Our Contribution}
\label{subsec:problem}

In this paper we establish the new circuit model for amoebots and highlight its power by considering the following fundamental problems on a stationary connected structure $S$ of amoebots.
An overview over our results is given by Table~\ref{tab:results}.
As mentioned in the previous section, algorithms for these problems are important building blocks on the way to faster shape transformation algorithms.

First, we show that the amoebots can solve the \emph{consensus problem}, i.e., how they can agree on one out of at most $k$ input values, for some constant $k$.

Then, we study the \emph{compass alignment problem}.
Initially, the amoebots may not agree on a common orientation.
Thus, the goal of the problem is to align the compasses of all amoebots globally.
A compass alignment is essential to coordinate synchronized movements.
Our algorithm requires $O(\log n)$ rounds, w.h.p.
The same approach can also be used to solve the \emph{chirality agreement problem}.
Note that compass alignment and chirality agreement are harder problems than consensus, since one is not able to just use a global circuit in order change the compasses/chiralities of all amoebots at once.

Finally, we look into the \emph{shape recognition problem}.
Amoebots are exposed to environmental influences, which may damage their structure.
In order to detect and repair these structural flaws, the amoebot structure has to check whether its shape matches the desired one.
Having access to simple shape recognition algorithms may be beneficial when checking whether a shape transformation algorithm has reached its desired shape.
We propose algorithms for various classes of shapes.
In particular, we present an $O(1)$-round algorithm for parallelograms with linear side ratio, an $O(\log n)$-round algorithm for parallelograms with polynomial side ratio, and an $O(1)$-round algorithm for shapes composed of triangles.

\begin{table}[ht]
    \caption{
    An overview over our algorithmic results.
    }
    \scriptsize
    \begin{tabularx}{0.99\textwidth}{lXXlX}
        \toprule
        \scriptsize Problem & \scriptsize Minimum Required Pins & \scriptsize Common Chirality & \scriptsize Runtime & \scriptsize Section\\
        \midrule
        \scriptsize Leader election & \scriptsize 1 & \scriptsize No & \scriptsize $\Theta(\log n)$ w.h.p. & \scriptsize Section~\ref{sec:leader} \\
        \scriptsize Consensus & \scriptsize 1 & \scriptsize No & \scriptsize $O(1)$ & \scriptsize Section~\ref{subsec:consensus} \\
        \scriptsize Compass alignment & \scriptsize 2 & \scriptsize Yes & \scriptsize $O(\log n)$ w.h.p.\footnotemark[6] & \scriptsize Section~\ref{subsec:compass} \\
        \scriptsize Chirality agreement & \scriptsize 2 & \scriptsize No & \scriptsize $O(\log n)$ w.h.p. & \scriptsize Section~\ref{subsec:chirality} \\
        \scriptsize Shape recognition & \scriptsize & \scriptsize & \scriptsize & \scriptsize  \\
        \scriptsize ~~Parallelograms & \scriptsize 1 & \scriptsize No & \scriptsize $O(1)$ & \scriptsize Appendix~\ref{app:parallelogram} \\
        \scriptsize ~~Parallelograms with linear side ratio & \scriptsize 1 & \scriptsize No & \scriptsize $\Theta(\log n)$ w.h.p.\footnotemark[7] & \scriptsize Appendix~\ref{app:parallelogram} \\
        \scriptsize ~~Parallelograms with polynomial side ratio & \scriptsize 2 & \scriptsize No & \scriptsize $\Theta(\log n)$ w.h.p. & \scriptsize Appendix~\ref{app:parallelogram} \\
        \scriptsize ~~Universal shape recognition & \scriptsize 2 & \scriptsize Yes & \scriptsize $O(1)$\footnotemark[6] & \scriptsize Section~\ref{subsec:recognition} \\
        \bottomrule
    \end{tabularx}
    \label{tab:results}
\end{table}
\footnotetext[6]{
The required number of pins per bond can be reduced to 1.
However, this requires a local leader election (see Section~\ref{sec:preliminaries}) in order to be able to apply the message transmission primitive (see Appendix~\ref{app:preliminaries}).
Consequently, the runtime changes to $\Theta(\log n)$, w.h.p.
}
\footnotetext[7]{
The runtime is a result of the necessary execution of the leader election algorithm,
and reduces to $O(1)$ under the assumption that a leader is initially given.
}

\subsection{Related Work}
\label{subsec:related_work}

\paragraph{Standard Amoebot Model}
The leader election problem is an extensively researched problem within the geometric amoebot model.
Table~\ref{tab:leader_election} compares various publications.
To our knowledge there are no publications regarding consensus, compass alignment and shape recognition within the standard amoebot model.
The chirality agreement problem was solved by Di Luna et al.~\cite{DBLP:journals/dc/LunaFSVY20}.
Their algorithm requires $O(n)$ rounds.
Further publications include shape formation~\cite{DBLP:conf/dna/DerakhshandehGS15, DBLP:conf/nanocom/DerakhshandehGR15, DBLP:conf/spaa/DerakhshandehGR16, DBLP:journals/dc/LunaFSVY20, DBLP:conf/icdcn/DaymudeGHKSR20}, gathering~\cite{DBLP:conf/podc/CannonDRR16}, and object coating~\cite{DBLP:journals/tcs/DerakhshandehGR17, DBLP:journals/nc/DaymudeDGPRSS18}.

\begin{table}[ht]
    \caption{
    Related work for leader election (updated from~\cite{DBLP:conf/sss/BazziB19, DBLP:conf/icalp/EmekKLM19}).
    $L_{\max}$ denotes the length of the longest boundary.
    $L$ denotes the length of the outer boundary.
    $r(G)$ denotes the radius of $G$.
    $mtree(G)$ denotes the maximum height among all induced subtrees of $G$.
    }
    \scriptsize
    \begin{tabularx}{0.99\textwidth}{llllllllX}
        \toprule
        \scriptsize Paper & \scriptsize Circuts & \scriptsize Deter- & \scriptsize Holes & \scriptsize Weak & \scriptsize Station- & \scriptsize Common & \scriptsize Multiple & \scriptsize Time \\
        & & \scriptsize ministic & & \scriptsize Scheduler & \scriptsize ary & \scriptsize Chirality & \scriptsize Leaders & \\
        \midrule
        \scriptsize \cite{DBLP:conf/algosensors/DaymudeGRSS17} & \scriptsize No & \scriptsize No & \scriptsize Yes & \scriptsize Yes & \scriptsize Yes & \scriptsize Yes & \scriptsize No & \scriptsize $O(L_{\max})$ expected \\
        \scriptsize \cite{DBLP:conf/dna/DerakhshandehGS15} & \scriptsize No & \scriptsize No & \scriptsize Yes & \scriptsize Yes & \scriptsize Yes & \scriptsize Yes & \scriptsize No & \scriptsize $O(L)$ w.h.p. \\
        \scriptsize \cite{DBLP:journals/dc/LunaFSVY20} & \scriptsize No & \scriptsize Yes & \scriptsize No & \scriptsize Yes & \scriptsize No & \scriptsize No & \scriptsize Yes & \scriptsize $O(n^2)$ \\
        \scriptsize \cite{DBLP:conf/algosensors/GastineauAMT18} & \scriptsize No & \scriptsize Yes & \scriptsize No & \scriptsize No & \scriptsize Yes & \scriptsize Yes & \scriptsize No & \scriptsize $r(G) + mtree(G) + 1$ \\
        \scriptsize \cite{DBLP:conf/icalp/EmekKLM19} & \scriptsize No & \scriptsize Yes & \scriptsize Yes & \scriptsize No & \scriptsize No & \scriptsize No & \scriptsize No & \scriptsize $O(Ln^2)$ \\
        \scriptsize \cite{DBLP:conf/sss/BazziB19} & \scriptsize No & \scriptsize Yes & \scriptsize Yes & \scriptsize Yes & \scriptsize Yes & \scriptsize Yes & \scriptsize Yes & \scriptsize $O(n^2)$ \\
        \scriptsize This & \scriptsize Yes & \scriptsize No & \scriptsize Yes & \scriptsize Sync & \scriptsize Yes & \scriptsize No & \scriptsize No & \scriptsize $\Theta(\log n)$ w.h.p. \\
        %This paper & \scriptsize Yes & \scriptsize No & \scriptsize Yes & Sync & \scriptsize Yes & \scriptsize No & \scriptsize No & $\Theta(\log n)$ w.h.p. \\
        \bottomrule
    \end{tabularx}
    \label{tab:leader_election}
\end{table}

\paragraph{Hybrid Programmable Matter}
Hybrid programmable matter is a variant of the standard amoebot model \cite{DBLP:series/lncs/DaymudeHRS19}.
It combines active and passive elements.
A set of active agents are placed on passive hexagonal tiles.
The tiles themselves are immobile and do not have any computational power.
The active agents are able to move the tiles by picking them up.
Moreover, the agents are provided with pebbles, which can be used to mark tiles.

Gmyr et al. \cite{DBLP:conf/mfcs/GmyrHKKRS18} have considered the shape recognition problem with a single agent.
In particular, parallelograms with linear, polynomial, and exponential side ratio are detected.
Unfortunately, the runtime of the algorithms was not analyzed within the paper.
However, the runtime has at least a complexity that is linear to the size of the system.
We are able to adopt and accelerate these results in our model.
Two $\Theta(\log n)$-algorithms for parallelograms with linear and polynomial side ratio are obtained.
These are discussed in Appendix~\ref{app:parallelogram}.
Aside from that, shape formation problems have been considered in \cite{DBLP:journals/nc/GmyrHKKRSS20}.

\paragraph{Reconfigurable Network Models}
Reconfigurable circuits have proven their value in various \emph{reconfigurable network models}, e.g., polymorphic-torus networks \cite{DBLP:journals/tc/LiM89, li1991reconfigurable}, meshes with reconfigurable bus \cite{miller1988meshes, li1991reconfigurable} and bus automata \cite{DBLP:journals/iandc/MoshellR79, DBLP:journals/tsmc/Rothstein88}.
Further examples can be found in \cite{DBLP:journals/iandc/Ben-AsherLPS95, li1991reconfigurable}.
We will discuss the generalized model stated in \cite{DBLP:journals/iandc/Ben-AsherLPS95}.
Usually, a square mesh is utilized as network topology.
A processor with a switch is placed on each node.
For the particular model, the computational power of the processors may vary.
The switches control the connectivity of the incident edges.
The possible connections may be restricted in different variants.
%e.g., (i) general reconfigurable networks (connections are unrestricted), (ii) linear reconfigurable networks (edges can be only connected in pairs or left unconnected), and (iii) directed reconfigurable networks (connections are directed).
A connected component is called a bus.

Each processor connected to a bus may transmit a message on the bus, which is received by all other processors of the same bus.
Contrary to our model, non-primitive messages are possible.
Depending on the specific model, collisons either are detected or go undetected if more than one processor has transmitted a message on a bus.
In the former case the massages are assumed to be destroyed.
In the latter case, for example, a logical OR may be applied to the messages.
Our model corresponds to the latter.

\paragraph{Beeping Model}
Our circuit model shares some similarities with the \emph{beeping model} \cite{DBLP:conf/wdag/CornejoK10}.
In the beeping model, communication between nodes is limited to the following rules: In a single round, a node may choose to either \emph{beep} or \emph{listen}.
A beeping node sends a beep to all of its neighboring nodes and a node that listens can either decide whether at least one of its neighbors beeped in the same round, or if no neighbor beeped.
On a more technical level, the beeping model differs from our circuit model in a sense that nodes are aware of their neighbors in the circuit model, which allows them to exchange (constant-sized) messages with each other, whereas in the beeping model, a listening node that receives a beep is not aware of the particular neighbor that beeped in that specific round.
Several problems like interval coloring~\cite{DBLP:conf/wdag/CornejoK10}, maximal independent set~\cite{DBLP:journals/dc/AfekABCHK13, DBLP:conf/podc/ScottJ013}, consensus~\cite{DBLP:journals/tcs/HounkanliMP20}, rendezvous of two agents~\cite{DBLP:journals/ijfcs/ElouasbiP17}, leader election~\cite{DBLP:conf/wdag/DufoulonBB18,DBLP:conf/wdag/GilbertN15} or clock synchronization~\cite{DBLP:journals/jcss/DolevHJKLRSW16, DBLP:conf/spaa/FKS20, DBLP:journals/ipl/GoudaH90, DBLP:conf/wdag/GuerraouiM15} have already been investigated in the beeping model.

%For the leader election problem, Gilbert and Newport present an algorithm in the beeping model similar to ours that elects a leader among $n$ nodes forming a clique in $O(\log(n + 1/\varepsilon) \log(1/\varepsilon))$ rounds with probability at least $1-1/\varepsilon$.
%However, the runtime of their algorithm increases to $O(\log^2 n)$ if one requires electing a leader with high probability, i.e., with probability at least $1-n^{-c}$ for some constant $c > 0$.
%This is due to their $O\log(1/\varepsilon))$-round termination subroutine that has to be performed in $\Theta(\log n)$ many iterations by the clique.
%In our circuit model, however, we are not limited to one single circuit containing all nodes, but can use a constant number of circuits, which allows for the execution of a termination mechanism that stops the protocol after $\Theta(\log n)$ rounds, even with the nodes not being able to count to $\Theta(\log n)$.

\section{Preliminaries}
\label{sec:preliminaries}

In this section, we discuss simple primitives for local leader election, agreement on the pin labeling, message transmission between neighboring amoebots, and synchronization of instructions.

First, consider the local leader election.
Our goal here is to elect a leader for each pair of neighboring amoebots.
Recall our leader election algorithm in Section~\ref{sec:leader}.
We concurrently perform a tournament on each pair of neighboring amoebots.
The amoebots utilize an \emph{empty pin configuration}, i.e., a pin configuration without any connections.
The coin tosses are sent on all pins simultaneously.
We reduce our leader election algorithm to the second phase due to the small number of candidates.
In order to ensure high probability, we still have to perform the second tournament with all amoebots, which requires a global circuit.
All local leaders are elected after $\Theta(\log n)$ rounds, w.h.p.

Second, consider the agreement on the pin labeling.
%Let $c$ be the number of pins per bond.
%Suppose that $c > 1$.
%Otherwise, the pin labeling would be trivial.
Note that initially, neighboring amoebots only agree on the pin labeling iff they do not share a common chirality.
Certainly, the pin labeling problem can be reduced to the local leader election problem above because once a local leader has been determined for each pair of neighboring amoebots, it can dictate the pin labeling of its neighbor by simply sending a beep on its first pin.
Each pair of neighboring amoebots agrees on the pin labeling after $\Theta(\log n)$ rounds, w.h.p.

\begin{remark}
\label{rem:onepin}
    Observe that a common pin labeling is not necessary if the amoebots have to agree on a single pin.
    We just connect all pins of each incident edge together and consider them as a single one.
\end{remark}

Third, consider the transmission of messages between neighboring amoebots.
Assume that the amoebots agree on a common chirality.
Otherwise, we perform the chirality agreement algorithm (see Section~\ref{subsec:chirality}).
Let each amoebot use the empty pin configuration.
Each amoebot can utilize its first pin to send messages and its last pin to receive messages.
Note that the first pin of each amoebot is connected to the last pin of its neighbor and vice versa.
It has to be clear to both amoebots when the transmission terminates.
This can be done either by fixing the length of the message or by encoding the end within the message.
The amoebots can use a binary encoding for the messages.
We reserve a subsequent round for each bit.
The sending amoebot beeps in the respective round iff the bit is set to 1.
Thus, the transmission of constant-sized messages between two amoebots takes $O(1)$ rounds.
We consider message transmission for a single pin per bond in Appendix~\ref{app:preliminaries}.

Finally, consider the synchronization of instructions.
Different executions of the same instructions may take different amounts of rounds, e.g., due to the randomization.
Hence, it is not obvious for single amoebots whether it is safe to proceed to its next instruction.
To circumvent this problem, we can synchronize the instructions.
In order to do so, we periodically establish the global circuit (see Section~\ref{sec:leader}).
Any amoebot that has not yet finished its current instructions beeps on the circuit.
It is safe to proceed if no amoebot has beeped.

\section{Algorithmic Results}
\label{sec:results}

In this section, we present our algorithmic results.
The definition and motivation of each problem can be found in Section~\ref{subsec:problem}.
More results are given in Appendix~\ref{app:parallelogram}.

\subsection{Consensus}
\label{subsec:consensus}

There is a trivial solution for the consensus problem.
Since the amoebots have only constant memory, we restrict the number of possible input values for the consensus problem to a constant.
W.l.o.g., let $1,\dots,k$ denote these input values.
The amoebot structure establishes a global circuit (see Section~\ref{sec:leader}), which works with an arbitrary number of pins.
Consider $k$ subsequent rounds.
An amoebot with input value $i$ beeps in the $i$-th round.
Thereupon, all amoebots agree on the value of the first round with a beep.
Hence, we obtain the following theorem:

\begin{theorem}
    The consensus problem can be solved within $O(1)$ rounds.
\end{theorem}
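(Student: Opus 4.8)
The plan is to exploit the global circuit as a shared broadcast medium on which every amoebot simultaneously observes the same beep pattern. First I would have the amoebot structure establish the global circuit exactly as in Section~\ref{sec:leader}, by connecting all pins of each incident edge into a single clique; this yields one circuit containing every amoebot and costs only $O(1)$ rounds (indeed a single act phase). Since the local memory is constant, the set of admissible input values is a fixed constant set, which w.l.o.g.\ I label $\{1,\dots,k\}$ with $k = O(1)$.

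Next I would run $k$ consecutive rounds indexed $1,\dots,k$. In round $i$, every amoebot whose input value equals $i$ sends a beep on the global circuit, while all other amoebots merely listen. Because the circuit is shared, in round $i$ each amoebot hears a beep if and only if at least one amoebot holds input value $i$. Every amoebot then decides on the smallest index $i^\star$ for which a beep was observed, i.e.\ the first round in which the circuit carried a signal.

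It remains to verify the three standard consensus requirements. Agreement is immediate: by the semantics of the global circuit, all amoebots receive exactly the same sequence of beeps over the $k$ rounds, hence compute the same $i^\star$ and decide identically. Validity holds because a beep in round $i^\star$ is, by construction, produced only if some amoebot actually holds input $i^\star$, so the decided value is a genuine input and not some fabricated default. Termination follows after the $k$-th round: the structure is nonempty and connected, so at least one value is present and $i^\star$ is well defined.

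The total cost is the $O(1)$ rounds to build the global circuit plus the $k = O(1)$ voting rounds, giving $O(1)$ rounds overall. I do not expect a genuine obstacle here, since the statement is essentially a direct consequence of the broadcast primitive; the only points requiring care are the justification that every amoebot observes an \emph{identical} beep pattern (which rests entirely on the single-global-circuit property established in Section~\ref{sec:leader}) and the observation that selecting the first beeping round automatically enforces validity.
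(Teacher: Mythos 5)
Your proposal is correct and follows exactly the paper's own argument: establish the global circuit, use $k$ consecutive rounds in which amoebots with input value $i$ beep in round $i$, and have everyone decide on the first round carrying a beep. The explicit verification of agreement, validity, and termination is a welcome addition, but the underlying construction is identical to the one in Section~\ref{subsec:consensus}.
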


\subsection{Compass Alignment}
\label{subsec:compass}

For the compass alignment problem, we assume that the amoebots share a common chirality.
However, in the next section, we show how to resolve this assumption.
Our protocol divides the amoebot structure into regions of amoebots that share the same compass orientation.
%and fuses these iteratively into a single region by adjusting their compasses.
The regions are defined by the connected components of the graph $G_R = (S,A)$ where $A = \{ \{u,v\} \in E \mid u,v \text{ share the same compass orientation} \}$.
Each region $R$ maintains a non-empty set of candidates $C_R$.
Initially, each amoebot is a candidate.
The number of candidates is reduced throughout the algorithm.
Let $\mathcal R$ denote the set of all regions.

We now outline the three phases of a single iteration.
First, each region $R \in \mathcal R$ establishes a regional circuit, which connects all of its amoebots together (see Figures~\ref{fig:regional_circuit1} and~\ref{fig:regional_circuit2}).
This allows for communication within the regions.
The algorithm terminates iff there is only a single regional circuit.
Second, each candidate $u \in C_R$ tosses a coin.
The coin toss of region $R$ is successful iff all coin tosses of its candidates $C_R$ coincide.
Otherwise, the coin toss fails.
Third, each region that has tossed $\mathit{TAILS}$ fuses either with a neighboring region that has tossed $\mathit{HEADS}$ or with a neighboring region that has failed its coin toss.
A region can fuse into another region by adjusting its compass.
Amoebots on the boundary of the region are able to learn about the compass orientations and the coin tosses of neighboring regions.
The amoebots of the region agree on the orientation that requires the least number of clockwise rotations.
Note that further fusions may occur when two neighbor regions adjust their compasses to the same orientation (see Figure~\ref{fig:fusion}).
In the following, we explain the three steps in more detail.

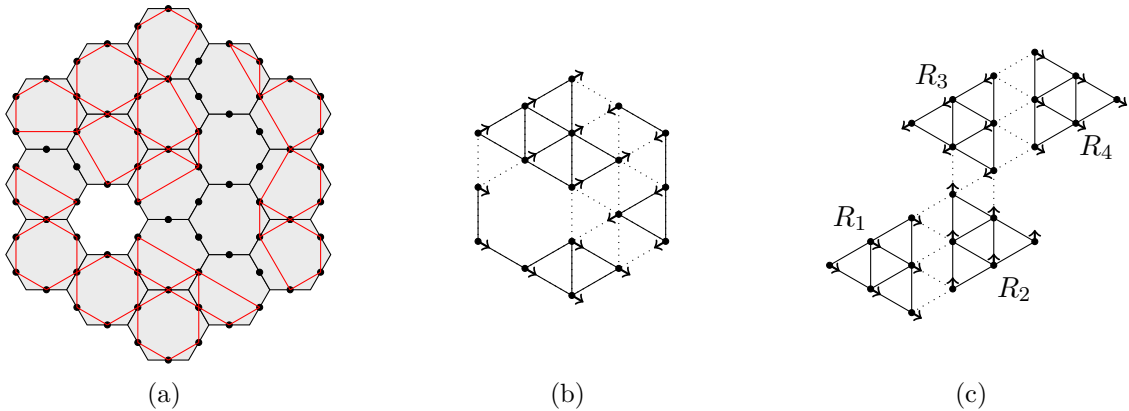
\begin{figure}[htp]
    \centering
    \begin{subfigure}[b]{0.35\textwidth}
        \centering
        \begin{tikzpicture}
        \tikzmath{
            \scale = .6*.9;
            \pins = 1;%
            \fr = 235;
            \fg = 235;
            \fb = 235;
            for \i in {2,3,4}{
                Amoebot(0,\i,0);
            };
            for \i in {1,3,4}{
                Amoebot(1,\i,0);
            };
            for \i in {0,1,2,3,4}{
                Amoebot(2,\i,0);
            };
            for \i in {0,1,2,3}{
                Amoebot(3,\i,0);
            };
            for \i in {0,1,2}{
                Amoebot(4,\i,0);
            };
            \lr = 255;
            \lg = 0;
            \lb = 0;
            Circuit(0,2,0,1,1,2,1);
            Circuit(0,2,0,2,1,3,1);
            Circuit(0,2,0,3,1,4,1);
            Circuit(0,2,0,4,1,5,1);
            Circuit(0,2,0,5,1,6,1);
            Circuit(0,2,0,6,1,1,1);
            Circuit(0,3,0,1,1,2,1);
            Circuit(0,3,0,5,1,6,1);
            Circuit(0,3,0,6,1,1,1);
            Circuit(0,3,0,2,1,5,1);
            Circuit(0,4,0,2,1,3,1);
            Circuit(0,4,0,3,1,4,1);
            Circuit(0,4,0,4,1,5,1);
            Circuit(0,4,0,5,1,6,1);
            Circuit(0,4,0,2,1,6,1);
            Circuit(1,1,0,1,1,2,1);
            Circuit(1,1,0,2,1,3,1);
            Circuit(1,1,0,3,1,4,1);
            Circuit(1,1,0,4,1,5,1);
            Circuit(1,1,0,5,1,6,1);
            Circuit(1,1,0,6,1,1,1);
            Circuit(1,3,0,1,1,2,1);
            Circuit(1,3,0,2,1,3,1);
            Circuit(1,3,0,3,1,4,1);
            Circuit(1,3,0,4,1,5,1);
            Circuit(1,3,0,1,1,5,1);
            Circuit(1,4,0,1,1,2,1);
            Circuit(1,4,0,2,1,3,1);
            Circuit(1,4,0,3,1,4,1);
            Circuit(1,4,0,4,1,5,1);
            Circuit(1,4,0,5,1,6,1);
            Circuit(1,4,0,6,1,1,1);
            Circuit(2,0,0,1,1,2,1);
            Circuit(2,0,0,2,1,3,1);
            Circuit(2,0,0,3,1,4,1);
            Circuit(2,0,0,4,1,5,1);
            Circuit(2,0,0,5,1,6,1);
            Circuit(2,0,0,6,1,1,1);
            Circuit(2,1,0,1,1,2,1);
            Circuit(2,1,0,5,1,6,1);
            Circuit(2,1,0,6,1,1,1);
            Circuit(2,1,0,2,1,5,1);
            Circuit(2,2,0,3,1,4,1);
            Circuit(2,2,0,4,1,5,1);
            Circuit(2,2,0,5,1,6,1);
            Circuit(2,2,0,3,1,6,1);
            Circuit(2,3,0,1,1,2,1);
            Circuit(2,3,0,4,1,5,1);
            Circuit(2,3,0,5,1,6,1);
            Circuit(2,3,0,6,1,1,1);
            Circuit(2,3,0,2,1,4,1);
            Circuit(2,4,0,3,1,4,1);
            Circuit(2,4,0,4,1,5,1);
            Circuit(2,4,0,5,1,6,1);
            Circuit(2,4,0,6,1,1,1);
            Circuit(2,4,0,1,1,3,1);
            Circuit(3,0,0,1,1,2,1);
            Circuit(3,0,0,5,1,6,1);
            Circuit(3,0,0,6,1,1,1);
            Circuit(3,0,0,2,1,5,1);
            Circuit(3,1,0,2,1,3,1);
            Circuit(3,2,0,5,1,6,1);
            Circuit(3,3,0,2,1,3,1);
            Circuit(3,3,0,3,1,4,1);
            Circuit(3,3,0,2,1,4,1);
            Circuit(4,0,0,1,1,2,1);
            Circuit(4,0,0,2,1,3,1);
            Circuit(4,0,0,3,1,4,1);
            Circuit(4,0,0,4,1,5,1);
            Circuit(4,0,0,1,1,5,1);
            Circuit(4,1,0,1,1,2,1);
            Circuit(4,1,0,2,1,3,1);
            Circuit(4,1,0,3,1,4,1);
            Circuit(4,1,0,6,1,1,1);
            Circuit(4,1,0,4,1,6,1);
            Circuit(4,2,0,1,1,2,1);
            Circuit(4,2,0,2,1,3,1);
            Circuit(4,2,0,3,1,4,1);
            Circuit(4,2,0,4,1,5,1);
            Circuit(4,2,0,1,1,5,1);
        }
        \end{tikzpicture}
        \caption{\centering}
        \label{fig:regional_circuit1}
    \end{subfigure}
    \hfill
    \begin{subfigure}[b]{0.25\textwidth}
        \centering
        \begin{tikzpicture}
            \tikzmath{
                \scale = .8*.9;
                \t = \scale/2;
                \h = sqrt(\scale^2 - (\scale/2)^2);
                % bounding box for better alignment
                {
                    \draw[color=white] (-\t, -2.5*\t) rectangle (4*\h+\t, 9*\t);
                };
                % particles
                for \i in {2,4,6}{
                    {
                        \filldraw (0*\h, \i*\t) circle (0.04);
                    };
                };
                for \i in {1,5,7}{
                    {
                        \filldraw (1*\h, \i*\t) circle (0.04);
                    };
                };
                for \i in {0,2,4,6,8}{
                    {
                        \filldraw (2*\h, \i*\t) circle (0.04);
                    };
                };
                for \i in {1,3,5,7}{
                    {
                        \filldraw (3*\h, \i*\t) circle (0.04);
                    };
                };
                for \i in {2,4,6}{
                    {
                        \filldraw (4*\h, \i*\t) circle (0.04);
                    };
                };
                {
                % bonds
                    % N - S
                    \draw[dotted] (0*\h, 2*\t) -- (0*\h, 6*\t);
                    \draw[dotted] (1*\h, 5*\t) -- (1*\h, 7*\t);
                    \draw[dotted] (2*\h, 0*\t) -- (2*\h, 8*\t);
                    \draw[dotted] (3*\h, 1*\t) -- (3*\h, 7*\t);
                    \draw[dotted] (4*\h, 2*\t) -- (4*\h, 6*\t);
                    % SW - NE
                    \draw[dotted] (0*\h, 6*\t) -- (2*\h, 8*\t);
                    \draw[dotted] (0*\h, 4*\t) -- (3*\h, 7*\t);
                    \draw[dotted] (2*\h, 4*\t) -- (4*\h, 6*\t);
                    \draw[dotted] (1*\h, 1*\t) -- (4*\h, 4*\t);
                    \draw[dotted] (2*\h, 0*\t) -- (4*\h, 2*\t);
                    % NW - SE
                    \draw[dotted] (2*\h, 8*\t) -- (4*\h, 6*\t);
                    \draw[dotted] (1*\h, 7*\t) -- (4*\h, 4*\t);
                    \draw[dotted] (0*\h, 6*\t) -- (4*\h, 2*\t);
                    \draw[dotted] (2*\h, 2*\t) -- (3*\h, 1*\t);
                    \draw[dotted] (0*\h, 2*\t) -- (2*\h, 0*\t);
                % orientations
                    % Region NW
                    \draw[thick, ->] (0*\h, 6*\t) -- +(\h/4,\t/4);
                    \draw[thick, ->] (1*\h, 5*\t) -- +(\h/4,\t/4);
                    \draw[thick, ->] (1*\h, 7*\t) -- +(\h/4,\t/4);
                    \draw[thick, ->] (2*\h, 4*\t) -- +(\h/4,\t/4);
                    \draw[thick, ->] (2*\h, 6*\t) -- +(\h/4,\t/4);
                    \draw[thick, ->] (2*\h, 8*\t) -- +(\h/4,\t/4);
                    \draw[thick, ->] (3*\h, 5*\t) -- +(\h/4,\t/4);
                    % Region E
                    \draw[thick, ->] (3*\h, 3*\t) -- +(-\h/4,-\t/4);
                    \draw[thick, ->] (3*\h, 7*\t) -- +(-\h/4,-\t/4);
                    \draw[thick, ->] (4*\h, 2*\t) -- +(-\h/4,-\t/4);
                    \draw[thick, ->] (4*\h, 4*\t) -- +(-\h/4,-\t/4);
                    \draw[thick, ->] (4*\h, 6*\t) -- +(-\h/4,-\t/4);
                    % Region SW
                    \draw[thick, ->] (0*\h, 2*\t) -- +(\h/4,-\t/4);
                    \draw[thick, ->] (0*\h, 4*\t) -- +(\h/4,-\t/4);
                    \draw[thick, ->] (1*\h, 1*\t) -- +(\h/4,-\t/4);
                    \draw[thick, ->] (2*\h, 0*\t) -- +(\h/4,-\t/4);
                    \draw[thick, ->] (2*\h, 2*\t) -- +(\h/4,-\t/4);
                    \draw[thick, ->] (3*\h, 1*\t) -- +(\h/4,-\t/4);
                % circuits
                    % N - S
                    \draw (0*\h, 2*\t) -- (0*\h, 4*\t);
                    \draw (1*\h, 5*\t) -- (1*\h, 7*\t);
                    \draw (2*\h, 0*\t) -- (2*\h, 2*\t);
                    \draw (2*\h, 4*\t) -- (2*\h, 8*\t);
                    \draw (4*\h, 2*\t) -- (4*\h, 6*\t);
                    % SW - NE
                    \draw (0*\h, 6*\t) -- (2*\h, 8*\t);
                    \draw (1*\h, 5*\t) -- (2*\h, 6*\t);
                    \draw (2*\h, 4*\t) -- (3*\h, 5*\t);
                    \draw (1*\h, 1*\t) -- (2*\h, 2*\t);
                    \draw (3*\h, 3*\t) -- (4*\h, 4*\t);
                    \draw (2*\h, 0*\t) -- (3*\h, 1*\t);
                    % NW - SE
                    \draw (3*\h, 7*\t) -- (4*\h, 6*\t);
                    \draw (1*\h, 7*\t) -- (3*\h, 5*\t);
                    \draw (0*\h, 6*\t) -- (2*\h, 4*\t);
                    \draw (3*\h, 3*\t) -- (4*\h, 2*\t);
                    \draw (2*\h, 2*\t) -- (3*\h, 1*\t);
                    \draw (0*\h, 2*\t) -- (2*\h, 0*\t);
                };
            }
        \end{tikzpicture}
        \caption{\centering}
        \label{fig:regional_circuit2}
    \end{subfigure}
    \hfill
    \begin{subfigure}[b]{0.35\textwidth}
        \centering
        \begin{tikzpicture}
            \tikzmath{
                \scale = .7*.9;
                \t = \scale/2;
                \h = sqrt(\scale^2 - (\scale/2)^2);
                {
                    % dummy
                    \filldraw[white] (0*\h0, -2*\t) circle (0.04);
                };
                \x = 0;
                \y = 2;
                {
                    % right
                    \filldraw (\x*\h+0*\h, \y*\t+0*\t) circle (0.04);
                    \filldraw (\x*\h+1*\h, \y*\t-1*\t) circle (0.04);
                    \filldraw (\x*\h+1*\h, \y*\t+1*\t) circle (0.04);
                    \filldraw (\x*\h+2*\h, \y*\t-2*\t) circle (0.04);
                    \filldraw (\x*\h+2*\h, \y*\t+0*\t) circle (0.04);
                    \filldraw (\x*\h+2*\h, \y*\t+2*\t) circle (0.04);
                    \draw (\x*\h+0*\h, \y*\t+0*\t) -- (\x*\h+2*\h, \y*\t-2*\t);
                    \draw (\x*\h+2*\h, \y*\t-2*\t) -- (\x*\h+2*\h, \y*\t+2*\t);
                    \draw (\x*\h+2*\h, \y*\t+2*\t) -- (\x*\h+0*\h, \y*\t+0*\t);
                    \draw (\x*\h+1*\h, \y*\t-1*\t) -- (\x*\h+1*\h, \y*\t+1*\t);
                    \draw (\x*\h+1*\h, \y*\t+1*\t) -- (\x*\h+2*\h, \y*\t+0*\t);
                    \draw (\x*\h+2*\h, \y*\t+0*\t) -- (\x*\h+1*\h, \y*\t-1*\t);
                    \draw[thick,->] (\x*\h+0*\h, \y*\t+0*\t) -- +(\h/4,-\t/4);
                    \draw[thick,->] (\x*\h+1*\h, \y*\t-1*\t) -- +(\h/4,-\t/4);
                    \draw[thick,->] (\x*\h+1*\h, \y*\t+1*\t) -- +(\h/4,-\t/4);
                    \draw[thick,->] (\x*\h+2*\h, \y*\t-2*\t) -- +(\h/4,-\t/4);
                    \draw[thick,->] (\x*\h+2*\h, \y*\t+0*\t) -- +(\h/4,-\t/4);
                    \draw[thick,->] (\x*\h+2*\h, \y*\t+2*\t) -- +(\h/4,-\t/4);
                };
                \x = 2;
                \y = 8;
                {
                    % right
                    \filldraw (\x*\h+0*\h, \y*\t+0*\t) circle (0.04);
                    \filldraw (\x*\h+1*\h, \y*\t-1*\t) circle (0.04);
                    \filldraw (\x*\h+1*\h, \y*\t+1*\t) circle (0.04);
                    \filldraw (\x*\h+2*\h, \y*\t-2*\t) circle (0.04);
                    \filldraw (\x*\h+2*\h, \y*\t+0*\t) circle (0.04);
                    \filldraw (\x*\h+2*\h, \y*\t+2*\t) circle (0.04);
                    \draw (\x*\h+0*\h, \y*\t+0*\t) -- (\x*\h+2*\h, \y*\t-2*\t);
                    \draw (\x*\h+2*\h, \y*\t-2*\t) -- (\x*\h+2*\h, \y*\t+2*\t);
                    \draw (\x*\h+2*\h, \y*\t+2*\t) -- (\x*\h+0*\h, \y*\t+0*\t);
                    \draw (\x*\h+1*\h, \y*\t-1*\t) -- (\x*\h+1*\h, \y*\t+1*\t);
                    \draw (\x*\h+1*\h, \y*\t+1*\t) -- (\x*\h+2*\h, \y*\t+0*\t);
                    \draw (\x*\h+2*\h, \y*\t+0*\t) -- (\x*\h+1*\h, \y*\t-1*\t);
                    \draw[thick,->] (\x*\h+0*\h, \y*\t+0*\t) -- +(-\h/4,-\t/4);
                    \draw[thick,->] (\x*\h+1*\h, \y*\t-1*\t) -- +(-\h/4,-\t/4);
                    \draw[thick,->] (\x*\h+1*\h, \y*\t+1*\t) -- +(-\h/4,-\t/4);
                    \draw[thick,->] (\x*\h+2*\h, \y*\t-2*\t) -- +(-\h/4,-\t/4);
                    \draw[thick,->] (\x*\h+2*\h, \y*\t+0*\t) -- +(-\h/4,-\t/4);
                    \draw[thick,->] (\x*\h+2*\h, \y*\t+2*\t) -- +(-\h/4,-\t/4);
                };
                \x = 5;
                \y = 3;
                {
                    % right
                    \filldraw (\x*\h-0*\h, \y*\t+0*\t) circle (0.04);
                    \filldraw (\x*\h-1*\h, \y*\t-1*\t) circle (0.04);
                    \filldraw (\x*\h-1*\h, \y*\t+1*\t) circle (0.04);
                    \filldraw (\x*\h-2*\h, \y*\t-2*\t) circle (0.04);
                    \filldraw (\x*\h-2*\h, \y*\t+0*\t) circle (0.04);
                    \filldraw (\x*\h-2*\h, \y*\t+2*\t) circle (0.04);
                    \draw (\x*\h-0*\h, \y*\t+0*\t) -- (\x*\h-2*\h, \y*\t-2*\t);
                    \draw (\x*\h-2*\h, \y*\t-2*\t) -- (\x*\h-2*\h, \y*\t+2*\t);
                    \draw (\x*\h-2*\h, \y*\t+2*\t) -- (\x*\h-0*\h, \y*\t+0*\t);
                    \draw (\x*\h-1*\h, \y*\t-1*\t) -- (\x*\h-1*\h, \y*\t+1*\t);
                    \draw (\x*\h-1*\h, \y*\t+1*\t) -- (\x*\h-2*\h, \y*\t+0*\t);
                    \draw (\x*\h-2*\h, \y*\t+0*\t) -- (\x*\h-1*\h, \y*\t-1*\t);
                    \draw[thick,->] (\x*\h-0*\h, \y*\t+0*\t) -- +(0,+\t/2);
                    \draw[thick,->] (\x*\h-1*\h, \y*\t-1*\t) -- +(0,+\t/2);
                    \draw[thick,->] (\x*\h-1*\h, \y*\t+1*\t) -- +(0,+\t/2);
                    \draw[thick,->] (\x*\h-2*\h, \y*\t-2*\t) -- +(0,+\t/2);
                    \draw[thick,->] (\x*\h-2*\h, \y*\t+0*\t) -- +(0,+\t/2);
                    \draw[thick,->] (\x*\h-2*\h, \y*\t+2*\t) -- +(0,+\t/2);
                };
                \x = 7;
                \y = 9;
                {
                    % right
                    \filldraw (\x*\h-0*\h, \y*\t+0*\t) circle (0.04);
                    \filldraw (\x*\h-1*\h, \y*\t-1*\t) circle (0.04);
                    \filldraw (\x*\h-1*\h, \y*\t+1*\t) circle (0.04);
                    \filldraw (\x*\h-2*\h, \y*\t-2*\t) circle (0.04);
                    \filldraw (\x*\h-2*\h, \y*\t+0*\t) circle (0.04);
                    \filldraw (\x*\h-2*\h, \y*\t+2*\t) circle (0.04);
                    \draw (\x*\h-0*\h, \y*\t+0*\t) -- (\x*\h-2*\h, \y*\t-2*\t);
                    \draw (\x*\h-2*\h, \y*\t-2*\t) -- (\x*\h-2*\h, \y*\t+2*\t);
                    \draw (\x*\h-2*\h, \y*\t+2*\t) -- (\x*\h-0*\h, \y*\t+0*\t);
                    \draw (\x*\h-1*\h, \y*\t-1*\t) -- (\x*\h-1*\h, \y*\t+1*\t);
                    \draw (\x*\h-1*\h, \y*\t+1*\t) -- (\x*\h-2*\h, \y*\t+0*\t);
                    \draw (\x*\h-2*\h, \y*\t+0*\t) -- (\x*\h-1*\h, \y*\t-1*\t);
                    \draw[thick,->] (\x*\h-0*\h, \y*\t+0*\t) -- +(\h/4,-\t/4);
                    \draw[thick,->] (\x*\h-1*\h, \y*\t-1*\t) -- +(\h/4,-\t/4);
                    \draw[thick,->] (\x*\h-1*\h, \y*\t+1*\t) -- +(\h/4,-\t/4);
                    \draw[thick,->] (\x*\h-2*\h, \y*\t-2*\t) -- +(\h/4,-\t/4);
                    \draw[thick,->] (\x*\h-2*\h, \y*\t+0*\t) -- +(\h/4,-\t/4);
                    \draw[thick,->] (\x*\h-2*\h, \y*\t+2*\t) -- +(\h/4,-\t/4);
                };
                {
                    % additional bonds
                    \draw[dotted] (2*\h, 0*\t) -- (3*\h, 1*\t);
                    \draw[dotted] (2*\h, 2*\t) -- (3*\h, 3*\t);
                    \draw[dotted] (2*\h, 4*\t) -- (5*\h, 7*\t);
                    \draw[dotted] (4*\h, 8*\t) -- (5*\h, 9*\t);
                    \draw[dotted] (4*\h, 10*\t) -- (5*\h, 11*\t);
                    \draw[dotted] (3*\h, 5*\t) -- (3*\h, 7*\t);
                    \draw[dotted] (4*\h, 4*\t) -- (4*\h, 6*\t);
                    \draw[dotted] (2*\h, 2*\t) -- (3*\h, 1*\t);
                    \draw[dotted] (2*\h, 4*\t) -- (3*\h, 3*\t);
                    \draw[dotted] (4*\h, 8*\t) -- (5*\h, 7*\t);
                    \draw[dotted] (4*\h, 10*\t) -- (5*\h, 9*\t);
                    % labels
                    \draw (0.5*\h,4*\t) node {$R_1$};
                    \draw (4.5*\h,1*\t) node {$R_2$};
                    \draw (2.5*\h,10*\t) node {$R_3$};
                    \draw (6.5*\h,7*\t) node {$R_4$};
                };
            }
        \end{tikzpicture}
        \caption{\centering}
        \label{fig:fusion}
    \end{subfigure}
    \caption{
        (a) and (b) show an example for regional circuits.
        (a) The red edges indicate the regional circuits.
        Each regional circuit connects the amoebots of a region together.
        (b) shows graph $G_R = (S,A)$.
        The solid edges belong to the edge set $A$.
        (c) Region $R_1$ and $R_4$ have the same orientation.
        Region $R_2$ adjusts its compass to region $R_1$.
        Region $R_3$ adjusts its compass to region $R_4$.
        All 4 regions fuse together.
    }
\end{figure}

First, consider the establishment of the regional circuits.
Each pair of neighboring amoebots $u,v \in S$ has to determine whether they share the same compass orientation, i.e., whether $\{ u, v \} \in A$ (see Figure~\ref{fig:regional_circuit2}).
For that purpose, $u$ sends a message to $v$ containing its orientation with respect to the edge $\{ u,v \}$.
Recall that we assume common chirality.
Thus, $v$ is able to interpret the received message correctly.
We define $\operatorname{offset}(u,v)$ as the minimal number of clockwise rotations (by $60^\circ$) necessary to adjust the compass of $u$ to the compass of $v$.
Note that $\operatorname{offset}(u,v) \in [0, 5]$.
Each amoebot is able to compute the offset to all of its neighbors.

Each amoebot $u \in S$ connects the pins on the bonds to $\operatorname A(u) = \{ v \in \operatorname N(u) \mid \{u,v\} \in A\}$ (see Figure~\ref{fig:regional_circuit1}).
It is easy to see that this connects the amoebots of each region.
Note that each amoebot is connected to exactly one regional circuit.
Consider region $R \in \mathcal R$.
Let $B_R = \{ u \in R \mid \operatorname A(u) \neq \operatorname N(u) \}$ denote the boundary amoebots.
Observe that $B_R = \emptyset$ implies that all amoebots in the amoebot structure share a common compass orientation, i.e., $|\mathcal R| = 1$.
We let each amoebot $u \in B_R$ beep on the regional circuit in a predefined round.
The algorithm terminates once a round is reached where the regional circuit has not been activated.

Next, consider the coin tosses of the regions.
We use a similar approach as for the leader election.
Each candidate $u \in C_R$ tosses a coin and stores the result in a variable $u.c$.
Consider two subsequent rounds $r_1, r_2$.
Each candidate $u \in C_R$ sends a beep through its regional circuit in round $r_1$ if $u.c = \mathit{HEADS}$, and in round $r_2$ if $u.c = \mathit{TAILS}$.
Each amoebot $v \in R$ is able to determine if there is a candidate $u \in C_R$ with $u.c = \mathit{HEADS}$ and $u.c = \mathit{TAILS}$, respectively.
It stores the result of the coin toss in a variable $v.c_R$.
If all coin tosses coincide, i.e., if there is only a beep in one of the two rounds, it sets the variable $v.c_R$ to the respective value.
Otherwise, it sets the variable $v.c_R = \mathit{FAILED}$.
Each candidate $u \in C_R$ withdraws its candidacy iff $u.c = \mathit{TAILS}$ and $v.c_R = \mathit{FAILED}$.
Note that there is still a candidate $u \in C_R$ with $u.c = \mathit{HEADS}$.
We let each amoebot $u \in S$ send a message containing $u.c_R$ to all its neighbors.

Finally, consider the fusion of regions.
Suppose that region $R$ has tossed $\mathit{TAILS}$.
The boundary amoebots $B_R$ have obtained information about the neighboring regions in the previous steps.
In particular, they have learned about the offsets and the results of the coin tosses.
Note that the offsets are equal for each amoebot $u \in R$.
We divide the boundary amoebots into 5 subsets according to this information.
Let $\forall i \in [1,5] : B_{R,i} = \{ u \in B_R \mid \exists v \in \operatorname N(u) : \operatorname{offset}(u,v) = i \land v.c_R \neq \mathit{TAILS} \}$.
$B_{R,i} \neq \emptyset$ implies that region $R$ can fuse into another region by rotating its amoebots $i$ times in clockwise direction.
Note that the subsets are not necessarily disjoint.
Consider five subsequent rounds $r_1, \dots, r_5$.
Each boundary amoebot $u \in B_{R,i}$ beeps on the regional circuit in round $r_i$.
Each amoebot $v \in R$ is able to determine if $B_{R,i} \neq \emptyset$ for all $i \in [1,5]$.
It proceeds to the next iteration if $B_{R,i} = \emptyset$ for all $i \in [1,5]$.
Otherwise, it rotates its compass according to the first received beep.
Furthermore, each candidate $c \in C_R$ withdraws each candidacy in this case.

For the purpose of analysis, we add a time stamp to the set of regions. 
Let $\mathcal{R}_0$ denote the set of all regions before the execution of our algorithm and let $\mathcal{R}_t$ denote the set of all regions after the $t$-th iteration and $\mathcal{R}_0$ for $t \geq 1$.

\begin{lemma}
\label{lem:alignment1}
    For all $t \in \N_0$ each region $R \in \mathcal{R}_t$ contains at least one candidate, i.e., $C_R \neq \emptyset$.
\end{lemma}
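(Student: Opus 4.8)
The plan is to prove the invariant $C_R \neq \emptyset$ for every region by induction on the iteration count $t$, tracking candidates through both the coin-toss step (which withdraws some candidates) and the fusion step (which merges regions and withdraws further candidates). The base case $t = 0$ is immediate: initially every amoebot is a candidate, so $C_R \neq \emptyset$ for each region in $\mathcal{R}_0$. For the inductive step, I would fix an arbitrary region $R' \in \mathcal{R}_{t+1}$ and exhibit at least one surviving candidate in it, assuming every region in $\mathcal{R}_t$ already has a nonempty candidate set.

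\textbf{The inductive step.}
The key observation is that each region $R' \in \mathcal{R}_{t+1}$ arises as the fusion of one or more regions from $\mathcal{R}_t$, and I must show that among all these constituent regions, at least one contributes a surviving candidate. First I would analyze candidate withdrawal within a single region $R \in \mathcal{R}_t$ during the coin-toss step. The text states that a candidate $u$ withdraws only if $u.c = \mathit{TAILS}$ \emph{and} the regional verdict is $v.c_R = \mathit{FAILED}$; and crucially it already notes ``there is still a candidate $u \in C_R$ with $u.c = \mathit{HEADS}$'' in the failed case. So the coin-toss step alone never empties $C_R$: if the toss succeeds with $\mathit{HEADS}$ no one withdraws, if it succeeds with $\mathit{TAILS}$ no one withdraws (the withdrawal condition requires $\mathit{FAILED}$), and if it fails then by definition not all tosses agreed, so some candidate tossed $\mathit{HEADS}$ and survives. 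Thus every region retains a candidate through step two.

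\textbf{Surviving a fusion.}
The harder part is the fusion step, where candidates are withdrawn more aggressively: the text says each candidate withdraws when its region actually rotates (i.e., when some $B_{R,i} \neq \emptyset$ and the region adjusts its compass). The point to establish is that whenever several regions fuse into one $R'$, the fusion is \emph{directed}: a region rotates only toward a neighbor that tossed $\mathit{HEADS}$ or $\mathit{FAILED}$ (by the definition of $B_{R,i}$, which requires $v.c_R \neq \mathit{TAILS}$). Consequently, within any connected cluster of regions that fuses together, the regions that rotate are exactly those that tossed $\mathit{TAILS}$, and there must exist at least one ``sink'' region in the cluster that did \emph{not} rotate — namely a region whose verdict was $\mathit{HEADS}$ or $\mathit{FAILED}$. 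I would argue this by noting that a $\mathit{TAILS}$ region fuses only into a non-$\mathit{TAILS}$ neighbor, so following the chain of fusions terminates at a non-$\mathit{TAILS}$ region; that terminal region does not rotate and hence does not withdraw its candidates in the fusion step. By the coin-toss analysis above, that region entered the fusion step with a nonempty candidate set, and those candidates survive into $R'$, giving $C_{R'} \neq \emptyset$.

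\textbf{Main obstacle.}
The delicate point I expect to require the most care is ruling out a cyclic fusion pattern in which every region in the fused cluster rotates and thereby empties its candidate set — this is precisely why the fusion rule restricts $\mathit{TAILS}$ regions to move only toward $\mathit{HEADS}$-or-$\mathit{FAILED}$ neighbors. I would formalize the ``sink exists'' claim by considering the directed graph on the constituent regions whose edges point from each rotating ($\mathit{TAILS}$) region to the region it fuses into, and arguing this graph is acyclic (equivalently, that the orientation cannot have every node rotating). A subtle subcase is the one flagged in the text by Figure~\ref{fig:fusion}, where two non-$\mathit{TAILS}$ neighbors happen to adjust to a common orientation; I would check that even there, a non-rotating region retaining its candidates is present in the merged result.
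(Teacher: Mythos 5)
Your proposal is correct and follows essentially the same route as the paper's proof: induction on $t$, the observation that a candidate with $\mathit{HEADS}$ never revokes its candidacy (neither in the coin-toss step, which requires $\mathit{TAILS}$ plus a $\mathit{FAILED}$ verdict, nor in the fusion step, which only $\mathit{TAILS}$ regions execute), and the fact that a region whose candidates are all revoked must have fused into a neighbor that tossed $\mathit{HEADS}$ or failed, and which therefore contributes a surviving $\mathit{HEADS}$ candidate to the merged region. The only difference is organizational — you argue per new region in $\mathcal{R}_{t+1}$ via a "sink" and worry about rotation cycles, whereas the paper argues per old region in $\mathcal{R}_t$ and needs no acyclicity argument because non-$\mathit{TAILS}$ regions never rotate, so every rotation terminates in one step at a non-rotating region.
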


\begin{proof}
    The statement can be proven by a simple induction.
    The full proof can be found in Appendix~\ref{app:alignment}.
\end{proof}

\begin{lemma}
\label{lem:alignment2}
    After $O(\log n)$ many rounds, there is only a single candidate left in total, i.e., $|\bigcup_{R \in \mathcal R} C_R| = 1$, w.h.p.
\end{lemma}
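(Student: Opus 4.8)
The plan is to track the total candidate count $N_t = |\bigcup_{R \in \mathcal{R}_t} C_R|$ and show that it contracts by a constant factor per iteration in expectation, after which a Markov-inequality argument over $O(\log n)$ iterations (each consisting of $O(1)$ rounds) yields $N_t = 1$ with high probability. Two structural observations drive everything. First, a candidate never re-enters contention, so $N_t$ is non-increasing; moreover fusion only makes the candidates of the fusing region withdraw and never transfers candidates into the target region. Second, by \ref{lem:alignment1} every region contains at least one candidate, so $N_t = 1$ forces $|\mathcal{R}_t| = 1$; thus reaching a single candidate coincides exactly with global alignment. Concretely, I would prove the one-step bound $\E[N_{t+1} - 1 \mid N_t] \le \tfrac{3}{4}(N_t - 1)$ and iterate it.

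For the contraction I bound, for each candidate $u$, the probability that $u$ remains a candidate after one iteration, and sum by linearity of expectation. Consider first a candidate $u$ in a region $R$ with $m = |C_R| \ge 2$. If $u.c = \mathit{HEADS}$, then $R$ is not an all-$\mathit{TAILS}$ region, so neither withdrawal rule applies and $u$ always survives; if $u.c = \mathit{TAILS}$, then $u$ survives only when all $m$ coin tosses are $\mathit{TAILS}$ and $R$ fails to fuse. Since the coins are fair and independent, the probability that all $m$ candidates toss $\mathit{TAILS}$ is $2^{-m}$, giving survival probability at most $\tfrac12 + 2^{-m} \le \tfrac34$ for $m \ge 2$. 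Hence every candidate in a region with at least two candidates survives with probability at most $\tfrac34$, purely through the coin-toss (failure) mechanism.

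The remaining case, singleton regions, is where the fusion dynamics must be used and is the main obstacle. Suppose $N_t \ge 2$. Then either some region has $\ge 2$ candidates (handled above) or $|\mathcal{R}_t| \ge 2$, in which case connectivity of $S$ (hence of the region-adjacency graph) guarantees that every singleton region $R$ has at least one neighbor $w$. I would argue that $R$ fuses, and its unique candidate withdraws, whenever $R$ tosses $\mathit{TAILS}$ (probability $\tfrac12$) and $w$ is non-$\mathit{TAILS}$, i.e.\ $\mathit{HEADS}$ or $\mathit{FAILED}$: in that event some set $B_{R,i}$ is nonempty, so $R$ rotates and all its candidates withdraw. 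A neighboring singleton is non-$\mathit{TAILS}$ with probability $\tfrac12$, and a neighbor with $\ge 2$ candidates is non-$\mathit{TAILS}$ with probability $1 - 2^{-m} \ge \tfrac34$; by independence of the coins of $R$ and $w$, the fusion event has probability at least $\tfrac12 \cdot \tfrac12 = \tfrac14$, so the singleton candidate survives with probability at most $\tfrac34$. The delicate points to verify carefully are that non-$\mathit{TAILS}$ neighbors are always legitimate fusion targets, that cascading fusions only help, and that candidate counts never increase through merging.

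Combining the two cases, every candidate survives one iteration with probability at most $\tfrac34$ whenever $N_t \ge 2$, so $\E[N_{t+1} \mid N_t] \le \tfrac34 N_t$; a short check shows this upgrades to $\E[N_{t+1} - 1 \mid N_t] \le \tfrac34 (N_t - 1)$ (and the bound holds trivially when $N_t = 1$, which is absorbing). Iterating gives $\E[N_t - 1] \le \left(\tfrac34\right)^t (n - 1)$, and Markov's inequality yields $\Pr[N_t \ge 2] = \Pr[N_t - 1 \ge 1] \le \left(\tfrac34\right)^t n$. Choosing $t = (c+1)\log_{4/3} n = O(\log n)$ makes this at most $n^{-c}$, so after $O(\log n)$ iterations, and hence $O(\log n)$ rounds, a single candidate remains with high probability.
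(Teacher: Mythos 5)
Your proposal is correct and follows essentially the same route as the paper: a per-candidate survival bound of $3/4$ per iteration, linearity of expectation applied to $X_t = N_t - 1$, geometric decay of $\E[X_t]$, and Markov's inequality over $O(\log n)$ iterations. The only cosmetic difference is that you split cases by region size (singleton vs.\ $m \ge 2$ candidates) where the paper splits by whether the region's coin toss succeeds, but both yield the same bound $\E[Y_u] \le 3/4$ and the same concluding computation.
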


\begin{proof}[Proof (Sketch)]
    We only outline the proof here.
    The full proof can be found in Appendix~\ref{app:alignment}.
    Let $C_t$ denote the candidates over all regions after the $t$-th iteration. For each candidate $u \in C_t$, we define
    \begin{equation*}
        Y_u =
        \begin{cases}
            1, & \text{if } u \in C_{t+1} \\
            0, & \text{otherwise}.
        \end{cases}
    \end{equation*}
    By case analysis, we can show that $\E[Y_u] \leq \frac{3}{4}$ holds.    
    Let $X_t = |C_t| - 1$ denote the number of candidates after the $t$-th iteration (minus one).
    The value for iteration $t+1$ can be expressed by $X_{t+1} = \sum_{u \in C_t}Y_u - 1$.
    We obtain $\E[X_{t+1} \mid X_t] \leq \frac{3}{4} X_t$.
    This implies for any constant $c > 1$ that
    \begin{equation*}
        \E[X_{t+c \log n} \mid X_t] \leq \left(\frac{3}{4}\right)^{c\log n} X_t \leq \left(\frac{3}{4}\right)^{c\log n} n \leq n^{-c'}
    \end{equation*}
    for some $c' = \Theta(c)$.
    For the second inequality we are utilizing the fact that $X_t \leq n$.
    By applying the Markov inequality, we obtain $\Pr[X_{t + c \log n} \geq 1] \leq n^{-c'}$.
    Note that a single iteration requires $O(1)$ rounds.
    Thus, $|\bigcup_{R \in \mathcal R} C_R| = 1$ holds after $O(\log n)$ rounds, w.h.p.
\end{proof}

Lemma~\ref{lem:alignment2} implies that there is only one region left containing the only remaining candidate.
Combining Lemmas~\ref{lem:alignment1} and \ref{lem:alignment2}, we obtain the following theorem:

\begin{theorem}
    There is a protocol that aligns the compasses of all amoebots after $O(\log n)$ rounds, w.h.p.
\end{theorem}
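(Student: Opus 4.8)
The plan is to derive the theorem by combining Lemmas~\ref{lem:alignment1} and~\ref{lem:alignment2} with a short argument identifying the single-region state as the aligned state and verifying that the protocol detects it correctly.

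First I would establish correctness. By definition the regions are the connected components of $G_R = (S, A)$, so $|\mathcal{R}| = 1$ means every bond $\{u,v\}$ lies in $A$, i.e., every pair of neighbors shares a compass orientation; since $G_S$ is connected, this forces a single global orientation across all amoebots. Hence it suffices to show that the protocol reaches and recognizes $|\mathcal{R}| = 1$. Recognition is handled deterministically by the boundary-beep mechanism: in each iteration every $u \in B_R$ beeps on its regional circuit, and the protocol halts once a designated round passes with no beep. As already observed in the description, the regional circuits are silent in that round iff $B_R = \emptyset$ for every region, which holds iff $|\mathcal{R}| = 1$. Thus the detection is sound, and the protocol terminates exactly when the compasses are globally aligned.

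Next I would bound the running time. Lemma~\ref{lem:alignment1} guarantees that every region always contains at least one candidate, so at every iteration boundary $|\bigcup_{R \in \mathcal{R}} C_R| \geq |\mathcal{R}|$. Lemma~\ref{lem:alignment2} guarantees that after $O(\log n)$ rounds the total candidate count drops to one, w.h.p. Combining the two, once $|\bigcup_{R \in \mathcal{R}} C_R| = 1$ we must have $|\mathcal{R}| = 1$, since otherwise a second region would contribute a second candidate. Therefore after $O(\log n)$ rounds the structure consists of a single region, w.h.p., and by the correctness argument the protocol then terminates with all compasses aligned. Each iteration runs the three phases in $O(1)$ rounds---one round for the circuit-termination check, two for the coin toss, five for the fusion decision, plus $O(1)$-round neighbor message exchanges---so the total round count matches the $O(\log n)$ candidate-reduction bound.

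The real technical weight sits in Lemma~\ref{lem:alignment2}, which I would take as given. The only genuinely new step is the implication ``single candidate overall $\Rightarrow$ single region,'' which is immediate from the invariant of Lemma~\ref{lem:alignment1}. The one point I would double-check is that the fusion rule keeps both invariants simultaneously valid at the iteration boundaries used by Lemma~\ref{lem:alignment2}---in particular that no region is left without a candidate mid-iteration; this is precisely the content that Lemma~\ref{lem:alignment1} secures, so citing it closes the argument.
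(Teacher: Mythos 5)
Your proposal is correct and follows essentially the same route as the paper: the paper's own proof is just the observation that Lemma~\ref{lem:alignment1} (every region retains a candidate) together with Lemma~\ref{lem:alignment2} (a single candidate remains after $O(\log n)$ rounds, w.h.p.) forces a single region, hence a single global orientation. Your additional verification of the termination detection via boundary beeps and the $O(1)$-round cost per iteration matches what the paper establishes in its algorithm description rather than in the proof itself, so nothing is missing or diverging.
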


\begin{remark}
    Note that the termination of our protocol does not imply a leader election because the remaining region may still contain several candidates.
\end{remark}
    
\subsection{Chirality Agreement}
\label{subsec:chirality}

In general, it is impossible for amoebots with only a single pin per bond to arrive at an agreement on the chirality without movements.
Already an amoebot structure consisting of two (adjacent) amoebots is not able to break the symmetry, if they cannot change their shape.
Therefore, we consider amoebots with at least two pins per bond.
We can apply the same approach as for the compass alignment.
The regions are redefined by the connected components of graph $G_R = (S,A)$ where $A = \{ \{u,v\} \in E \mid u,v \text{ agree on the chirality} \}$.
Each amoebot activates its first pin according to its chirality.
%Note that this pin is not necessarily the pin corresponding to the agreed on labeling.
Neighboring amoebots beep on two distinct pins iff they share a common chirality.
We obtain the following corollary:

\begin{corollary}
    There is a protocol that lets the amoebot structure agree on the chirality after $O(\log n)$ rounds, w.h.p.
\end{corollary}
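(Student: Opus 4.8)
The plan is to reduce chirality agreement to the compass alignment algorithm of Section~\ref{subsec:compass}, replacing the notion of ``same compass orientation'' with ``same chirality'' and re-implementing only the single primitive that fails without common chirality: the test of whether two neighboring amoebots belong to the same region. In the compass-alignment setting this test relied on message transmission, which itself presupposed common chirality; here I must provide a chirality-blind replacement. The key observation (as hinted by the corollary's preamble) is that with at least two pins per bond, two neighbors can determine whether they agree on chirality \emph{without} first agreeing on anything, by each activating its ``first'' pin according to its own chirality and checking whether the two activated pins coincide or are swapped. Since neighboring amoebots with opposite chirality label the shared pins in reverse order, their first pins land on physically distinct pins exactly when their chiralities agree; a short beeping exchange on these two pins lets each amoebot decide membership in $A = \{\{u,v\} \in E \mid u,v \text{ agree on the chirality}\}$.

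\textbf{First I would} define the regions as the connected components of $G_R = (S,A)$ with this new edge set $A$, and have each amoebot connect its bond-pins exactly to its chirality-agreeing neighbors, yielding regional circuits precisely as in the compass-alignment construction. \textbf{Then I would} run the identical three-phase iteration: establish regional circuits and beep on boundaries to detect whether $|\mathcal{R}| = 1$; perform the per-region coin toss so that a region's toss succeeds iff all its candidates agree, collapsing to $\mathit{HEADS}$, $\mathit{TAILS}$, or $\mathit{FAILED}$; and fuse each $\mathit{TAILS}$ region into a neighboring $\mathit{HEADS}$-or-$\mathit{FAILED}$ region. The one genuine difference in the fusion step is that a region now adopts the chirality of its target rather than rotating its compass to one of five offsets, so the boundary bookkeeping simplifies to a single binary decision (flip or do not flip) rather than the five subsets $B_{R,i}$.

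\textbf{With the region machinery in place}, both Lemma~\ref{lem:alignment1} (every region always retains a candidate) and Lemma~\ref{lem:alignment2} (only one candidate survives after $O(\log n)$ rounds, w.h.p.) apply verbatim, since their proofs depend only on the abstract invariant that each iteration a losing region fuses while preserving at least one candidate, together with the expectation bound $\E[Y_u] \leq \tfrac{3}{4}$ driving the geometric decay of the candidate count. Termination when a single candidate (hence a single region) remains then gives global chirality agreement, and each iteration costs $O(1)$ rounds, so the total is $O(\log n)$ rounds, w.h.p., exactly as claimed.

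\textbf{The main obstacle} I anticipate is verifying rigorously that the chirality-blind neighbor test is correct and unambiguous: I must confirm that two amoebots of \emph{opposite} chirality, labeling the shared pin sequence in mutually reversed order, always activate physically distinct pins on the bond (so neither falsely concludes agreement), while two of the \emph{same} chirality activate the same physical pin — and that a single round of listening on both pins suffices for each side to decide. This is where the ``at least two pins per bond'' hypothesis is essential, as the introductory remark that a single pin cannot break the two-amoebot symmetry makes clear; everything downstream is a faithful re-instantiation of the compass-alignment argument.
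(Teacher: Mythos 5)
Your approach coincides with the paper's: the published argument for this corollary is exactly the reduction you describe, namely redefining $A$ as the set of bonds whose endpoints agree on chirality, deciding membership in $A$ locally by having each amoebot activate its first pin according to its own chirality, and then rerunning the compass-alignment machinery of Section~\ref{subsec:compass} (so that Lemmas~\ref{lem:alignment1} and~\ref{lem:alignment2} apply unchanged), with the fusion step collapsing from five offsets to a single flip/no-flip decision.

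There is, however, a reversal in the one step you single out as ``the main obstacle,'' and you state it both ways in your write-up. By the model definition, the physical order of the pins on a bond is fixed and known to both endpoints, but each amoebot labels them according to its own chirality; hence neighbors \emph{disagree} on the labeling exactly when they have the \emph{same} chirality. So two amoebots of the same chirality activate two physically \emph{distinct} pins (this is the paper's formulation: they ``beep on two distinct pins iff they share a common chirality''), whereas two amoebots of opposite chirality label the pins identically and activate the \emph{same} physical pin. Your second paragraph states this correctly, but your final paragraph asserts the opposite (opposite chirality $\Rightarrow$ distinct pins, same chirality $\Rightarrow$ same pin). The beeping test distinguishes the two cases either way, so the protocol survives once the outcomes are attached to the right conclusions; but as resolved in your closing paragraph, every bond would be classified into the wrong class, the regions would no longer be chirality-homogeneous, and the fusion argument would break. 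You should fix the final paragraph to match the (correct) claim in your second one.
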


\subsection{Universal Shape Recognition}
\label{subsec:recognition}

In order to keep the shapes as generic as possible, we consider shapes composed of triangles.
We adopt the definition for shapes by \cite{DBLP:conf/spaa/DerakhshandehGR16}:
A shape $\mathcal S$ is a finite set of faces on the infinite regular triangular grid graph~$G_{eqt}$ (see Figure~\ref{fig:shape}).
The number of faces is assumed to be constant.
A shape $\mathcal S$ is connected iff the corresponding nodes in the induced subgraph in the dual graph of $G_{eqt}$ are connected.
Let $T(\mathcal S)$ be a transformation of a shape $\mathcal S$ defined by a translation, a rotation, and an isotropic scaling such that the vertices of the triangles forming $\mathcal S$ coincide with vertices in $G_{eqt}$.
Let $\sigma$ denote the scaling factor of $T(\mathcal S)$.
Let $V(T(\mathcal S))$ be the set of all nodes that lie on a vertex of $T(\mathcal S)$, on a edge of $T(\mathcal S)$ or inside of $T(\mathcal S)$.
We call $V(T(\mathcal S))$ a \emph{representation} of shape $\mathcal S$.
In addition to \cite{DBLP:conf/spaa/DerakhshandehGR16}, $\mathcal S$ is minimal iff $\forall \mathcal S': \exists T, T': V(T(\mathcal S))=V(T'(\mathcal S')) \Rightarrow |\mathcal S|\leq|\mathcal S'|$ (see Figures~\ref{fig:shape3} and~\ref{fig:shape4}).

\begin{figure}[htb]
    \centering
    \begin{subfigure}[b]{0.19\textwidth}
        \centering
        \begin{tikzpicture}
            \tikzmath{
                \scale = .3;
                \t = \scale/2;
                \h = sqrt(\scale^2 - (\scale/2)^2);
                % shape
                {
                    \filldraw[yellow]
                    (0*\h, 6*\t) --
                    (0*\h, 10*\t) --
                    (2*\h, 8*\t) --
                    (6*\h, 12*\t) --
                    (6*\h, 4*\t) --
                    (8*\h, 2*\t) --
                    (6*\h, 0*\t) --
                    cycle;
                };
                % particles
                for \i in {6,10}{
                    {
                        \filldraw (0*\h, \i*\t) circle (0.04);
                    };
                };
                for \i in {4,8}{
                    {
                        \filldraw (2*\h, \i*\t) circle (0.04);
                    };
                };
                for \i in {2,6,10}{
                    {
                        \filldraw (4*\h, \i*\t) circle (0.04);
                    };
                };
                for \i in {0,4,8,12}{
                    {
                        \filldraw (6*\h, \i*\t) circle (0.04);
                    };
                };
                for \i in {2}{
                    {
                        \filldraw (8*\h, \i*\t) circle (0.04);
                    };
                };
                {
                % bonds
                    % N - S
                    \draw (0*\h, 6*\t) -- (0*\h, 10*\t);
                    \draw (2*\h, 4*\t) -- (2*\h, 8*\t);
                    \draw (4*\h, 2*\t) -- (4*\h, 10*\t);
                    \draw (6*\h, 0*\t) -- (6*\h, 12*\t);
                    % SW - NE
                    \draw (0*\h, 6*\t) -- (6*\h, 12*\t);
                    \draw (2*\h, 4*\t) -- (6*\h, 8*\t);
                    \draw (4*\h, 2*\t) -- (6*\h, 4*\t);
                    \draw (6*\h, 0*\t) -- (8*\h, 2*\t);
                    % NW - SE
                    \draw (4*\h, 10*\t) -- (6*\h, 8*\t);
                    \draw (0*\h, 10*\t) -- (8*\h, 2*\t);
                    \draw (0*\h, 6*\t) -- (6*\h, 0*\t);
                };
            }
        \end{tikzpicture}
        \caption{\centering}
        \label{fig:shape1}
    \end{subfigure}
    \hfill
    \begin{subfigure}[b]{0.19\textwidth}
        \centering
        \begin{tikzpicture}
            \tikzmath{
                \scale = .3;
                \t = \scale/2;
                \h = sqrt(\scale^2 - (\scale/2)^2);
                % shape
                {
                    \filldraw[yellow]
                    (0*\h, 6*\t) --
                    (0*\h, 10*\t) --
                    (2*\h, 8*\t) --
                    (6*\h, 12*\t) --
                    (6*\h, 4*\t) --
                    (8*\h, 2*\t) --
                    (6*\h, 0*\t) --
                    cycle;
                };
                % particles
                for \i in {6,8,10}{
                    {
                        \filldraw (0*\h, \i*\t) circle (0.04);
                    };
                };
                for \i in {5,7,9}{
                    {
                        \filldraw (1*\h, \i*\t) circle (0.04);
                    };
                };
                for \i in {4,6,8}{
                    {
                        \filldraw (2*\h, \i*\t) circle (0.04);
                    };
                };
                for \i in {3,5,7,9}{
                    {
                        \filldraw (3*\h, \i*\t) circle (0.04);
                    };
                };
                for \i in {2,4,6,8,10}{
                    {
                        \filldraw (4*\h, \i*\t) circle (0.04);
                    };
                };
                for \i in {1,3,5,7,9,11}{
                    {
                        \filldraw (5*\h, \i*\t) circle (0.04);
                    };
                };
                for \i in {0,2,4,6,8,10,12}{
                    {
                        \filldraw (6*\h, \i*\t) circle (0.04);
                    };
                };
                for \i in {1,3}{
                    {
                        \filldraw (7*\h, \i*\t) circle (0.04);
                    };
                };
                for \i in {2}{
                    {
                        \filldraw (8*\h, \i*\t) circle (0.04);
                    };
                };
                {
                % bonds
                    % N - S
                    \draw (0*\h, 6*\t) -- (0*\h, 10*\t);
                    \draw[dotted] (1*\h, 5*\t) -- (1*\h, 9*\t);
                    \draw (2*\h, 4*\t) -- (2*\h, 8*\t);
                    \draw[dotted] (3*\h, 3*\t) -- (3*\h, 9*\t);
                    \draw (4*\h, 2*\t) -- (4*\h, 10*\t);
                    \draw[dotted] (5*\h, 1*\t) -- (5*\h, 11*\t);
                    \draw (6*\h, 0*\t) -- (6*\h, 12*\t);
                    \draw[dotted] (7*\h, 1*\t) -- (7*\h, 3*\t);
                    % SW - NE
                    \draw[dotted] (0*\h, 8*\t) -- (1*\h, 9*\t);
                    \draw (0*\h, 6*\t) -- (6*\h, 12*\t);
                    \draw[dotted] (1*\h, 5*\t) -- (6*\h, 10*\t);
                    \draw (2*\h, 4*\t) -- (6*\h, 8*\t);
                    \draw[dotted] (3*\h, 3*\t) -- (6*\h, 6*\t);
                    \draw (4*\h, 2*\t) -- (6*\h, 4*\t);
                    \draw[dotted] (5*\h, 1*\t) -- (7*\h, 3*\t);
                    \draw (6*\h, 0*\t) -- (8*\h, 2*\t);
                    % NW - SE
                    \draw[dotted] (5*\h, 11*\t) -- (6*\h, 10*\t);
                    \draw (4*\h, 10*\t) -- (6*\h, 8*\t);
                    \draw[dotted] (3*\h, 9*\t) -- (6*\h, 6*\t);
                    \draw (0*\h, 10*\t) -- (8*\h, 2*\t);
                    \draw[dotted] (0*\h, 8*\t) -- (7*\h, 1*\t);
                    \draw (0*\h, 6*\t) -- (6*\h, 0*\t);
                };
            }
        \end{tikzpicture}
        \caption{\centering}
        \label{fig:shape2}
    \end{subfigure}
    \hfill
    \begin{subfigure}[b]{0.19\textwidth}
        \centering
        \begin{tikzpicture}
            \tikzmath{
                \scale = .3*3/2;
                \t = \scale/2;
                \h = sqrt(\scale^2 - (\scale/2)^2);
                % shape
                {
                    \filldraw[yellow]
                    (0*\h, 2*\t) --
                    (0*\h, 10*\t) --
                    (4*\h, 6*\t) --
                    cycle;
                };
                % particles
                for \i in {2,6,10}{
                    {
                        \filldraw (0*\h, \i*\t) circle (0.04);
                    };
                };
                for \i in {4,8}{
                    {
                        \filldraw (2*\h, \i*\t) circle (0.04);
                    };
                };
                for \i in {6}{
                    {
                        \filldraw (4*\h, \i*\t) circle (0.04);
                    };
                };
                {
                % bonds
                    \draw (0*\h, 2*\t) -- (0*\h, 10*\t);
                    \draw (0*\h, 2*\t) -- (4*\h, 6*\t);
                    \draw (0*\h, 10*\t) -- (4*\h, 6*\t);
                    \draw (2*\h, 4*\t) -- (2*\h, 8*\t);
                    \draw (0*\h, 6*\t) -- (2*\h, 8*\t);
                    \draw (0*\h, 6*\t) -- (2*\h, 4*\t);
                };
            }
        \end{tikzpicture}
        \caption{\centering}
        \label{fig:shape3}
    \end{subfigure}
    \hfill
    \begin{subfigure}[b]{0.19\textwidth}
        \centering
        \begin{tikzpicture}
            \tikzmath{
                \scale = .3*3/2;
                \t = \scale/2;
                \h = sqrt(\scale^2 - (\scale/2)^2);
                % shape
                {
                    \filldraw[yellow]
                    (0*\h, 2*\t) --
                    (0*\h, 10*\t) --
                    (4*\h, 6*\t) --
                    cycle;
                };
                % particles
                for \i in {2,6,10}{
                    {
                        \filldraw (0*\h, \i*\t) circle (0.04);
                    };
                };
                for \i in {4,8}{
                    {
                        \filldraw (2*\h, \i*\t) circle (0.04);
                    };
                };
                for \i in {6}{
                    {
                        \filldraw (4*\h, \i*\t) circle (0.04);
                    };
                };
                {
                % bonds
                    \draw (0*\h, 2*\t) -- (0*\h, 10*\t);
                    \draw (0*\h, 2*\t) -- (4*\h, 6*\t);
                    \draw (0*\h, 10*\t) -- (4*\h, 6*\t);
                    \draw[dotted] (2*\h, 4*\t) -- (2*\h, 8*\t);
                    \draw[dotted] (0*\h, 6*\t) -- (2*\h, 8*\t);
                    \draw[dotted] (0*\h, 6*\t) -- (2*\h, 4*\t);
                };
            }
        \end{tikzpicture}
        \caption{\centering}
        \label{fig:shape4}
    \end{subfigure}
    \hfill
    \begin{subfigure}[b]{0.19\textwidth}
        \centering
        \begin{tikzpicture}
            \tikzmath{
                \scale = .3/7*6;
                \t = \scale/2;
                \h = sqrt(\scale^2 - (\scale/2)^2);
                \j = 0;
                \i = 0;
                {
                    % right
                    \filldraw[yellow] (\j*\h+0*\h, \i*\t+0*\t) -- (\j*\h+4*\h, \i*\t+4*\t) -- (\j*\h+4*\h, \i*\t-4*\t) -- cycle;
                    \filldraw (\j*\h+0*\h, \i*\t+0*\t) circle (0.04);
                    \filldraw (\j*\h+1*\h, \i*\t+1*\t) circle (0.04);
                    \filldraw (\j*\h+1*\h, \i*\t-1*\t) circle (0.04);
                    \filldraw (\j*\h+2*\h, \i*\t+2*\t) circle (0.04);
                    \filldraw (\j*\h+2*\h, \i*\t+0*\t) circle (0.04);
                    \filldraw (\j*\h+2*\h, \i*\t-2*\t) circle (0.04);
                    \filldraw (\j*\h+3*\h, \i*\t+3*\t) circle (0.04);
                    \filldraw (\j*\h+3*\h, \i*\t+1*\t) circle (0.04);
                    \filldraw (\j*\h+3*\h, \i*\t-1*\t) circle (0.04);
                    \filldraw (\j*\h+3*\h, \i*\t-3*\t) circle (0.04);
                    \filldraw (\j*\h+4*\h, \i*\t+4*\t) circle (0.04);
                    \filldraw (\j*\h+4*\h, \i*\t+2*\t) circle (0.04);
                    \filldraw (\j*\h+4*\h, \i*\t+0*\t) circle (0.04);
                    \filldraw (\j*\h+4*\h, \i*\t-2*\t) circle (0.04);
                    \filldraw (\j*\h+4*\h, \i*\t-4*\t) circle (0.04);
                    \draw (\j*\h+0*\h, \i*\t+0*\t) -- (\j*\h+4*\h, \i*\t+4*\t) -- (\j*\h+4*\h, \i*\t-4*\t) -- cycle;
                    \draw[dotted] (\j*\h+1*\h, \i*\t-1*\t) -- (\j*\h+1*\h, \i*\t+1*\t);
                    \draw[dotted] (\j*\h+2*\h, \i*\t-2*\t) -- (\j*\h+2*\h, \i*\t+2*\t);
                    \draw[dotted] (\j*\h+3*\h, \i*\t-3*\t) -- (\j*\h+3*\h, \i*\t+3*\t);
                    \draw[dotted] (\j*\h+4*\h, \i*\t-4*\t) -- (\j*\h+4*\h, \i*\t+4*\t);
                    \draw[dotted] (\j*\h+1*\h, \i*\t-1*\t) -- (\j*\h+4*\h, \i*\t+2*\t);
                    \draw[dotted] (\j*\h+2*\h, \i*\t-2*\t) -- (\j*\h+4*\h, \i*\t+0*\t);
                    \draw[dotted] (\j*\h+3*\h, \i*\t-3*\t) -- (\j*\h+4*\h, \i*\t-2*\t);
                    \draw[dotted] (\j*\h+1*\h, \i*\t+1*\t) -- (\j*\h+4*\h, \i*\t-2*\t);
                    \draw[dotted] (\j*\h+2*\h, \i*\t+2*\t) -- (\j*\h+4*\h, \i*\t-0*\t);
                    \draw[dotted] (\j*\h+3*\h, \i*\t+3*\t) -- (\j*\h+4*\h, \i*\t+2*\t);
                };
                \j = 7;
                \i = 7;
                {
                    % left
                    \filldraw[yellow] (\j*\h-0*\h, \i*\t+0*\t) -- (\j*\h-3*\h, \i*\t+3*\t) -- (\j*\h-3*\h, \i*\t-3*\t) -- cycle;
                    \filldraw (\j*\h-0*\h, \i*\t+0*\t) circle (0.04);
                    \filldraw (\j*\h-1*\h, \i*\t+1*\t) circle (0.04);
                    \filldraw (\j*\h-1*\h, \i*\t-1*\t) circle (0.04);
                    \filldraw (\j*\h-2*\h, \i*\t+2*\t) circle (0.04);
                    \filldraw (\j*\h-2*\h, \i*\t+0*\t) circle (0.04);
                    \filldraw (\j*\h-2*\h, \i*\t-2*\t) circle (0.04);
                    \filldraw (\j*\h-3*\h, \i*\t+3*\t) circle (0.04);
                    \filldraw (\j*\h-3*\h, \i*\t+1*\t) circle (0.04);
                    \filldraw (\j*\h-3*\h, \i*\t-1*\t) circle (0.04);
                    \filldraw (\j*\h-3*\h, \i*\t-3*\t) circle (0.04);
                    \draw (\j*\h-0*\h, \i*\t+0*\t) -- (\j*\h-3*\h, \i*\t+3*\t) -- (\j*\h-3*\h, \i*\t-3*\t) -- cycle;
                    \draw[dotted] (\j*\h-1*\h, \i*\t-1*\t) -- (\j*\h-1*\h, \i*\t+1*\t);
                    \draw[dotted] (\j*\h-2*\h, \i*\t-2*\t) -- (\j*\h-2*\h, \i*\t+2*\t);
                    \draw[dotted] (\j*\h-3*\h, \i*\t-3*\t) -- (\j*\h-3*\h, \i*\t+3*\t);
                    \draw[dotted] (\j*\h-1*\h, \i*\t-1*\t) -- (\j*\h-3*\h, \i*\t+1*\t);
                    \draw[dotted] (\j*\h-2*\h, \i*\t-2*\t) -- (\j*\h-3*\h, \i*\t-1*\t);
                    \draw[dotted] (\j*\h-1*\h, \i*\t+1*\t) -- (\j*\h-3*\h, \i*\t-1*\t);
                    \draw[dotted] (\j*\h-2*\h, \i*\t+2*\t) -- (\j*\h-3*\h, \i*\t+1*\t);
                };
            }
        \end{tikzpicture}
        \caption{\centering}
        \label{fig:shape5}
    \end{subfigure}
    \caption{
        Shapes and their representations.
        The triangles of the shapes are depicted in yellow.
        The solid edges indicate the edges of the triangles.
        The nodes indicate amoebots.
        The solid and dotted edges indicate bonds between amoebots.
        (a) and (b) show different representations of the same shape for scale $\sigma = 1$ and $\sigma = 2$, respectively.
        (c) and (d) show two shapes with identical representations.
        By definition, the shape in (d)~is minimal.
        (e) does not show a shape since the triangles are neither connected nor do they coincide with $G_{eqt}$.
        However, the second phase of our algorithm detects the boundary of both triangles,
        and the third phase computes a triangulation with different sized triangles for each face.
    }
    \label{fig:shape}
\end{figure}
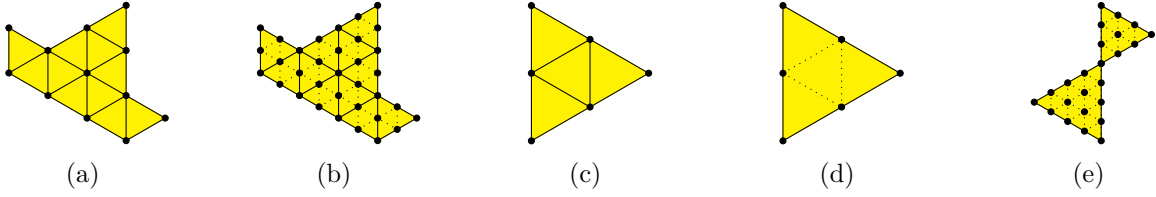

We assume that all amoebots share a common chirality.
Otherwise, we perform the chirality agreement algorithm from Section~\ref{subsec:chirality}.
The basic idea of our universal shape recognition algorithm is to partition the amoebot structure into triangles of the same size, and to compare the resulting triangulation to the shape $\mathcal S$.
The algorithm consists of four phases:
(i)~the identification of representations for $\sigma < 4$,
(ii)~the detection of the boundary,
(iii)~the triangulation of the shape, and
(iv)~the identification of representations for $\sigma \geq 4$.
We first give a short description of each phase.
The details are explained afterwards.
We defer the analysis for the correctness and runtime of our algorithm to Appendix~\ref{app:recognition}.

In the first phase, we directly compare the amoebot structure to the shape.
The representations of the shape with $\sigma < 4$ can be encoded within the finite state machines of the amoebots.
Each amoebot gathers information about the positions of all amoebots up to distance $3 \cdot |\mathcal S| + 1$.
This is enough to decide whether the amoebot structure is a representation of the shape with $\sigma < 4$.
If this is the case, the algorithm terminates with a positive answer.
Otherwise, we proceed to the next phase.

In the second phase, we detect the boundary of the shape.
An edge of the shape is a boundary edge iff exactly 1 of its 2 incident triangles belongs to the shape.
A corner of the shape is a boundary corner iff either 1, 2, 4, or 5 contiguous triangles of its 6 incident triangles belong to the shape.
On any representation of the shape with $\sigma \geq 4$, the phase detects the boundary of the shape.
On any representation of the shape with $\sigma < 4$, the phase would behave arbitrarily.
However, keep in mind that we already have eliminated those in the first phase.
On any other amoebot structure, the phase either fails to detect a boundary or detects a boundary unequal to the shape.
The algorithm terminates immediately with a negative answer if the phase fails to detect a boundary.
The boundaries that are unequal to the shape will finally result in a triangulation unequal to the shape.

In the third phase, we compute the triangles of the shape.
For this purpose, we apply a triangulation algorithm (see Algorithm~\ref{alg:triangulation} for pseudocode).
In order to avoid a linear runtime, we fix the number of iterations of our triangulation algorithm.
This can be done because the runtime of the triangulation algorithm does only depend on the shape and not on its representation.
The overall recognition algorithm terminates with the negative answer if the triangulation algorithm does not terminate on time or terminates prematurely.

In the fourth phase, we compare the resulting triangulation to the shape.
Note that each amoebot knows the shape.
Each triangle corner gathers information about the positions of all triangles up to distance $|\mathcal S| + 1$.
This is enough to decide whether the amoebot structure is a representation of the shape (with $\sigma \geq 4$).
If this is the case, the algorithm terminates with a positive answer.
Otherwise, it terminates with a negative answer.

We now discuss the phases in detail.
First, consider the boundary detection (second phase).
Consider the 2-neigh\-bor\-hoods within a representation of a shape with $\sigma \geq 4$.
We can categorize the occurring 2-neigh\-bor\-hoods into three disjoint classes:
(i)~boundary corners, (ii)~boundary edges, and (iii)~interior.
Figure~\ref{fig:neighborhoods} shows the 2-neigh\-bor\-hoods for the first class.
We refrain from stating the 2-neigh\-bor\-hoods of the remaining classes.
Note that each amoebot knows its 2-neigh\-bor\-hood from the first phase such that it is able to categorize itself.
The phase fails to detect a boundary if any amoebot cannot categorize itself into one of these classes, e.g., the 2-neigh\-bor\-hood contains an isolated boundary edge.
Otherwise, each amoebot can decide which of its incident edges belong to the boundary.
Note that $\sigma \geq 4$ is crucial since this guarantees that each triangle contains at least three inner nodes.
This allows us to safely identify boundary corners and boundary edges (see Figures~\ref{fig:sigma0} and~\ref{fig:sigma1}).

For arbitrary amoebot structures, the boundary may enclose several faces, which are connected by boundary corners (see Figure~\ref{fig:shape5}).
The third phase computes a triangulation for each face.
The scale of the triangles can fall below 4 and may even differ for each single face (see Figure~\ref{fig:shape5}).
However, we detect the disconnection in the fourth phase.

\begin{figure}[htb]
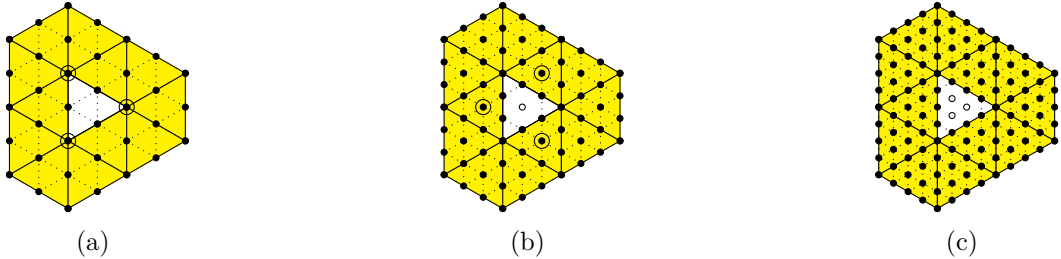

    \centering
    \begin{subfigure}[b]{0.19\textwidth}
        \centering
        % [inline block 0: 13 envs, 67904 chars -> data_tex | \begin{tikzpicture}             \tikzmath{...]

        \caption{\centering}
        \label{fig:sigma2}
    \end{subfigure}
    \caption{
        (a), (b) and (c) show three representations of the yellow shape with scale $\sigma = 2$, $\sigma = 3$ and $\sigma = 4$, respectively.
        The empty nodes indicate unoccupied nodes in the hole.
        Note that the additional bonds in the hole do not contradict the definition of a representation.
        The circled amoebots in (a) cannot detect themselves as boundary corners.
        The circled amoebots in (b) cannot detect themselves as part of the interior.
    }
\end{figure}

Next, consider the triangulation of the shape (see Algorithm~\ref{alg:triangulation} for pseudocode).
Figure~\ref{fig:triangulation_example} shows an exemplary execution.
The triangulation is based on the following observation:

\begin{observation}
\label{obs:incident_edges}
    Let a shape $\mathcal S$ and a representation $V(T(\mathcal S))$ be given.
    Consider any node $u \in V(T(\mathcal S))$ representing a corner of $\mathcal S$.
    For each node $v \in V(T(\mathcal S))$ adjacent to $u$, there is a line segment from $u$ over $v$ till the boundary consisting of the edges of the shape $\mathcal S$.
    This follows directly from the topology of $G_{eqt}$.
\end{observation}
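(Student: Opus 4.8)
The plan is to reduce the statement to a lattice fact about where the edges of the triangulation $T(\mathcal S)$ can lie, exploiting that $u$ is a \emph{vertex} of that triangulation and not merely an arbitrary grid node. I would fix an affine coordinate system in which $G_{eqt}$ is the lattice $\Z e_1 + \Z e_2$ with the three families of grid lines running in the directions $\pm e_1$, $\pm e_2$, and $\pm(e_1 - e_2)$, and in which the transformation $T$ has placed the triangle corners of $\mathcal S$ exactly on the coarser sub-lattice $c + \sigma(\Z e_1 + \Z e_2)$ for some offset $c$. This is possible precisely because $T$ is a translation, a rotation, and an isotropic scaling by $\sigma$, and it maps corners of $\mathcal S$ to grid nodes by assumption, which forces $\sigma \in \N$. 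Under these coordinates the edges of $T(\mathcal S)$ are exactly the unit segments of $G_{eqt}$ that lie on the grid lines spaced $\sigma$ apart in each of the three directions, namely the lines passing through the corners.

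The key step is then the following: since $u$ is a corner, it lies on this $\sigma$-spaced sub-lattice, so each of the three grid lines of $G_{eqt}$ through $u$ is one of the lines carrying edges of $T(\mathcal S)$. Now let $v$ be a neighbour of $u$ with $v \in V(T(\mathcal S))$. The direction $u \to v$ is one of the six grid directions, hence the line $\ell$ through $u$ and $v$ is one of the three grid lines through $u$, and by the previous sentence $\ell$ carries shape edges. Because $v \in V(T(\mathcal S))$, the unit segment $uv$ enters the shape, so $uv$ is itself an edge of $T(\mathcal S)$; continuing along $\ell$ from $u$ through $v$, every unit segment remains an edge of the shape until $\ell$ leaves $T(\mathcal S)$, i.e. until it reaches the boundary. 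Concatenating these unit segments yields the desired line segment from $u$ over $v$ to the boundary.

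The only point that needs care, and the step I expect to be the main obstacle, is justifying that the corners of $T(\mathcal S)$ really do land on a single $\sigma$-spaced sub-lattice of $G_{eqt}$ that is simultaneously aligned with all three grid directions; this alignment is exactly what makes "the line through a corner in any grid direction carries triangle edges" true. I would verify it by checking that a side-$\sigma$ triangle with vertices $0$, $\sigma e_1$, $\sigma e_2$ has its three edges pointing in the grid directions $e_1$, $e_2$, and $e_2 - e_1$, and that its vertices differ by the lattice vectors $\sigma e_1$, $\sigma e_2$, and $\sigma(e_2 - e_1)$. Since every triangle of $T(\mathcal S)$ is a translate of such a triangle (or its reflection), all triangulation vertices lie on the $\sigma$-dilate of the grid, and the edges occupy precisely the $\sigma$-spaced grid lines. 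Once this is established, the remainder is immediate, matching the paper's remark that the claim follows directly from the topology of $G_{eqt}$.
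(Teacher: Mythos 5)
The paper offers no argument for this observation beyond its closing sentence, so there is nothing to match against; your lattice setup --- corners of $T(\mathcal S)$ on a $\sigma$-spaced sublattice, triangle edges lying on the three families of $\sigma$-spaced grid lines --- is the right frame, and your implicit assumption that the rotation in $T$ is a multiple of $60^\circ$ (hence $\sigma\in\N$ and axis-aligned edges) matches what the paper clearly intends, even though its literal definition of $T$ would also admit Eisenstein-type similarities that land all vertices on $G_{eqt}$ without aligning any edge to an axis.

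The gap is the sentence ``Because $v \in V(T(\mathcal S))$, the unit segment $uv$ enters the shape, so $uv$ is itself an edge of $T(\mathcal S)$,'' together with the earlier over-claim that the edges of $T(\mathcal S)$ are \emph{exactly} the unit segments on the $\sigma$-spaced lines. The shape is a specific finite union of triangles, so most unit segments on a $\sigma$-spaced line through a corner are not covered by its edges; lying on such a line and meeting the shape is necessary but not sufficient, and the word ``so'' is carrying precisely the content that the observation attributes to the topology of $G_{eqt}$. The missing argument is a wedge argument in two steps: (i) for $\sigma\ge 2$, a node $v$ at grid distance $1$ from the corner $u$ with $v\in V(T(\mathcal S))$ must lie in the closure of a triangle of $T(\mathcal S)$ \emph{incident to} $u$, because every non-incident triangle stays at Euclidean distance at least $\sigma\sqrt{3}/2>1$ from $u$; and (ii) the angle of that triangle at $u$ is exactly $60^\circ$, equal to the angular spacing of the six grid directions, so the grid direction $u\to v$ must coincide with one of the two edge directions of that triangle at $u$, whence $uv$ lies on a shape edge of length $\sigma$ ending at another sublattice corner $u'=u+\sigma(v-u)$; repeating (i) and (ii) at $u'$ drives the induction along the line until it leaves the shape. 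Note that step (i) genuinely needs $\sigma\ge 2$: for $\sigma=1$, a shape with a two-triangle rhombus hole has adjacent nodes $u,v\in V(T(\mathcal S))$ whose connecting edge crosses the hole and is not a shape edge, so the observation as literally stated can fail; this is harmless for the paper (the triangulation phase only runs after $\sigma<4$ representations are eliminated, and one can check $\sigma\ge 4$ directly), but your proof should either restrict to $\sigma\ge 2$ or make explicit that it relies on $uv$ actually entering the shape rather than merely having both endpoints in $V(T(\mathcal S))$.
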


\begin{algorithm}[ht]
\caption{Triangulation Protocol from a global perspective}
\label{alg:triangulation}
\begin{algorithmic}[1]
    \State Establish circuits along the axes
    \State $V \gets \emptyset$ \Comment{Set of identified corners}
    \State $N \gets \{ u \in S \mid u \text{ is a boundary corner} \}$ \Comment{Set of newly identified corners}
    \While{$N \neq \emptyset$}
        \State $V \gets V \cup N$
        %\State $N \gets \emptyset$
        \State Each amoebot $u \in V$ sends a beep on all circuits along the axes
        \State $N \gets \{ u \in S \setminus V \mid u \text{ has received a beep on at least 2 circuits along the axes} \}$
    \EndWhile
\end{algorithmic}
\end{algorithm}

\begin{figure}[htb]
    \centering
    \begin{subfigure}[b]{0.24\textwidth}
        \centering
        \begin{tikzpicture}
            \tikzmath{
                \scale = .3;
                \t = \scale/2;
                \h = sqrt(\scale^2 - (\scale/2)^2);
                % shape
                {
                    \filldraw[yellow]
                    (2*\h, 10*\t) --
                    (8*\h, 4*\t) --
                    (4*\h, 0*\t) --
                    (0*\h, 4*\t) --
                    (2*\h, 6*\t) --
                    (0*\h, 8*\t) --
                    cycle;
                };
                % particles
                for \i in {4,8}{
                    {
                        \filldraw (0*\h, \i*\t) circle (0.04);
                    };
                };
                for \i in {3,5,7,9}{
                    {
                        \filldraw (1*\h, \i*\t) circle (0.04);
                    };
                };
                for \i in {2,4,6,8,10}{
                    {
                        \filldraw (2*\h, \i*\t) circle (0.04);
                    };
                };
                for \i in {1,3,5,7,9}{
                    {
                        \filldraw (3*\h, \i*\t) circle (0.04);
                    };
                };
                for \i in {0,2,4,6,8}{
                    {
                        \filldraw (4*\h, \i*\t) circle (0.04);
                    };
                };
                for \i in {1,3,5,7}{
                    {
                        \filldraw (5*\h, \i*\t) circle (0.04);
                    };
                };
                for \i in {2,4,6}{
                    {
                        \filldraw (6*\h, \i*\t) circle (0.04);
                    };
                };
                for \i in {3,5}{
                    {
                        \filldraw (7*\h, \i*\t) circle (0.04);
                    };
                };
                for \i in {4}{
                    {
                        \filldraw (8*\h, \i*\t) circle (0.04);
                    };
                };
                {
                % bonds
                    % N - S
                    \draw[dotted] (1*\h, 3*\t) -- (1*\h, 5*\t);
                    \draw[dotted] (1*\h, 7*\t) -- (1*\h, 9*\t);
                    \draw[dotted] (2*\h, 2*\t) -- (2*\h, 10*\t);
                    \draw[dotted] (3*\h, 1*\t) -- (3*\h, 9*\t);
                    \draw[dotted] (4*\h, 0*\t) -- (4*\h, 8*\t);
                    \draw[dotted] (5*\h, 1*\t) -- (5*\h, 7*\t);
                    \draw[dotted] (6*\h, 2*\t) -- (6*\h, 6*\t);
                    \draw[dotted] (7*\h, 3*\t) -- (7*\h, 5*\t);
                    % SW - NE
                    \draw[dotted] (0*\h, 8*\t) -- (2*\h, 10*\t);
                    \draw[dotted] (1*\h, 7*\t) -- (3*\h, 9*\t);
                    \draw[dotted] (0*\h, 4*\t) -- (4*\h, 8*\t);
                    \draw[dotted] (1*\h, 3*\t) -- (5*\h, 7*\t);
                    \draw[dotted] (2*\h, 2*\t) -- (6*\h, 6*\t);
                    \draw[dotted] (3*\h, 1*\t) -- (7*\h, 5*\t);
                    \draw[dotted] (4*\h, 0*\t) -- (8*\h, 4*\t);
                    % NW - SE
                    \draw[dotted] (2*\h, 10*\t) -- (8*\h, 4*\t);
                    \draw[dotted] (1*\h, 9*\t) -- (7*\h, 3*\t);
                    \draw[dotted] (0*\h, 8*\t) -- (6*\h, 2*\t);
                    \draw[dotted] (1*\h, 5*\t) -- (5*\h, 1*\t);
                    \draw[dotted] (0*\h, 4*\t) -- (4*\h, 0*\t);
                % corners
                    \draw (2*\h, 10*\t) circle (0.1);
                    \draw (8*\h, 4*\t) circle (0.1);
                    \draw (4*\h, 0*\t) circle (0.1);
                    \draw (0*\h, 4*\t) circle (0.1);
                    \draw (2*\h, 6*\t) circle (0.1);
                    \draw (0*\h, 8*\t) circle (0.1);
                };
            }
        \end{tikzpicture}
        \caption*{\centering$i = 1$}
    \end{subfigure}
    \hfill
    \begin{subfigure}[b]{0.24\textwidth}
        \centering
        \begin{tikzpicture}
            \tikzmath{
                \scale = .3;
                \t = \scale/2;
                \h = sqrt(\scale^2 - (\scale/2)^2);
                % shape
                {
                    \filldraw[yellow]
                    (2*\h, 10*\t) --
                    (8*\h, 4*\t) --
                    (4*\h, 0*\t) --
                    (0*\h, 4*\t) --
                    (2*\h, 6*\t) --
                    (0*\h, 8*\t) --
                    cycle;
                };
                % particles
                for \i in {4,8}{
                    {
                        \filldraw (0*\h, \i*\t) circle (0.04);
                    };
                };
                for \i in {3,5,7,9}{
                    {
                        \filldraw (1*\h, \i*\t) circle (0.04);
                    };
                };
                for \i in {2,4,6,8,10}{
                    {
                        \filldraw (2*\h, \i*\t) circle (0.04);
                    };
                };
                for \i in {1,3,5,7,9}{
                    {
                        \filldraw (3*\h, \i*\t) circle (0.04);
                    };
                };
                for \i in {0,2,4,6,8}{
                    {
                        \filldraw (4*\h, \i*\t) circle (0.04);
                    };
                };
                for \i in {1,3,5,7}{
                    {
                        \filldraw (5*\h, \i*\t) circle (0.04);
                    };
                };
                for \i in {2,4,6}{
                    {
                        \filldraw (6*\h, \i*\t) circle (0.04);
                    };
                };
                for \i in {3,5}{
                    {
                        \filldraw (7*\h, \i*\t) circle (0.04);
                    };
                };
                for \i in {4}{
                    {
                        \filldraw (8*\h, \i*\t) circle (0.04);
                    };
                };
                {
                % bonds
                    % N - S
                    \draw[dotted] (1*\h, 3*\t) -- (1*\h, 5*\t);
                    \draw[dotted] (1*\h, 7*\t) -- (1*\h, 9*\t);
                    \draw (2*\h, 2*\t) -- (2*\h, 10*\t);
                    \draw[dotted] (3*\h, 1*\t) -- (3*\h, 9*\t);
                    \draw (4*\h, 0*\t) -- (4*\h, 8*\t);
                    \draw[dotted] (5*\h, 1*\t) -- (5*\h, 7*\t);
                    \draw[dotted] (6*\h, 2*\t) -- (6*\h, 6*\t);
                    \draw[dotted] (7*\h, 3*\t) -- (7*\h, 5*\t);
                    % SW - NE
                    \draw (0*\h, 8*\t) -- (2*\h, 10*\t);
                    \draw[dotted] (1*\h, 7*\t) -- (3*\h, 9*\t);
                    \draw (0*\h, 4*\t) -- (4*\h, 8*\t);
                    \draw[dotted] (1*\h, 3*\t) -- (5*\h, 7*\t);
                    \draw[dotted] (2*\h, 2*\t) -- (6*\h, 6*\t);
                    \draw[dotted] (3*\h, 1*\t) -- (7*\h, 5*\t);
                    \draw (4*\h, 0*\t) -- (8*\h, 4*\t);
                    % NW - SE
                    \draw (2*\h, 10*\t) -- (8*\h, 4*\t);
                    \draw[dotted] (1*\h, 9*\t) -- (7*\h, 3*\t);
                    \draw (0*\h, 8*\t) -- (6*\h, 2*\t);
                    \draw[dotted] (1*\h, 5*\t) -- (5*\h, 1*\t);
                    \draw (0*\h, 4*\t) -- (4*\h, 0*\t);
                % corners
                    \draw (2*\h, 2*\t) circle (0.1);
                    \draw (4*\h, 4*\t) circle (0.1);
                    \draw (4*\h, 8*\t) circle (0.1);
                };
            }
        \end{tikzpicture}
        \caption*{\centering$i = 2$}
    \end{subfigure}
    \hfill
    \begin{subfigure}[b]{0.24\textwidth}
        \centering
        \begin{tikzpicture}
            \tikzmath{
                \scale = .3;
                \t = \scale/2;
                \h = sqrt(\scale^2 - (\scale/2)^2);
                % shape
                {
                    \filldraw[yellow]
                    (2*\h, 10*\t) --
                    (8*\h, 4*\t) --
                    (4*\h, 0*\t) --
                    (0*\h, 4*\t) --
                    (2*\h, 6*\t) --
                    (0*\h, 8*\t) --
                    cycle;
                };
                % particles
                for \i in {4,8}{
                    {
                        \filldraw (0*\h, \i*\t) circle (0.04);
                    };
                };
                for \i in {3,5,7,9}{
                    {
                        \filldraw (1*\h, \i*\t) circle (0.04);
                    };
                };
                for \i in {2,4,6,8,10}{
                    {
                        \filldraw (2*\h, \i*\t) circle (0.04);
                    };
                };
                for \i in {1,3,5,7,9}{
                    {
                        \filldraw (3*\h, \i*\t) circle (0.04);
                    };
                };
                for \i in {0,2,4,6,8}{
                    {
                        \filldraw (4*\h, \i*\t) circle (0.04);
                    };
                };
                for \i in {1,3,5,7}{
                    {
                        \filldraw (5*\h, \i*\t) circle (0.04);
                    };
                };
                for \i in {2,4,6}{
                    {
                        \filldraw (6*\h, \i*\t) circle (0.04);
                    };
                };
                for \i in {3,5}{
                    {
                        \filldraw (7*\h, \i*\t) circle (0.04);
                    };
                };
                for \i in {4}{
                    {
                        \filldraw (8*\h, \i*\t) circle (0.04);
                    };
                };
                {
                % bonds
                    % N - S
                    \draw[dotted] (1*\h, 3*\t) -- (1*\h, 5*\t);
                    \draw[dotted] (1*\h, 7*\t) -- (1*\h, 9*\t);
                    \draw (2*\h, 2*\t) -- (2*\h, 10*\t);
                    \draw[dotted] (3*\h, 1*\t) -- (3*\h, 9*\t);
                    \draw (4*\h, 0*\t) -- (4*\h, 8*\t);
                    \draw[dotted] (5*\h, 1*\t) -- (5*\h, 7*\t);
                    \draw[dotted] (6*\h, 2*\t) -- (6*\h, 6*\t);
                    \draw[dotted] (7*\h, 3*\t) -- (7*\h, 5*\t);
                    % SW - NE
                    \draw (0*\h, 8*\t) -- (2*\h, 10*\t);
                    \draw[dotted] (1*\h, 7*\t) -- (3*\h, 9*\t);
                    \draw (0*\h, 4*\t) -- (4*\h, 8*\t);
                    \draw[dotted] (1*\h, 3*\t) -- (5*\h, 7*\t);
                    \draw (2*\h, 2*\t) -- (6*\h, 6*\t);
                    \draw[dotted] (3*\h, 1*\t) -- (7*\h, 5*\t);
                    \draw (4*\h, 0*\t) -- (8*\h, 4*\t);
                    % NW - SE
                    \draw (2*\h, 10*\t) -- (8*\h, 4*\t);
                    \draw[dotted] (1*\h, 9*\t) -- (7*\h, 3*\t);
                    \draw (0*\h, 8*\t) -- (6*\h, 2*\t);
                    \draw[dotted] (1*\h, 5*\t) -- (5*\h, 1*\t);
                    \draw (0*\h, 4*\t) -- (4*\h, 0*\t);
                % corners
                    \draw (6*\h, 6*\t) circle (0.1);
                };
            }
        \end{tikzpicture}
        \caption*{\centering$i = 3$}
    \end{subfigure}
    \hfill
    \begin{subfigure}[b]{0.24\textwidth}
        \centering
        \begin{tikzpicture}
            \tikzmath{
                \scale = .3;
                \t = \scale/2;
                \h = sqrt(\scale^2 - (\scale/2)^2);
                % shape
                {
                    \filldraw[yellow]
                    (2*\h, 10*\t) --
                    (8*\h, 4*\t) --
                    (4*\h, 0*\t) --
                    (0*\h, 4*\t) --
                    (2*\h, 6*\t) --
                    (0*\h, 8*\t) --
                    cycle;
                };
                % particles
                for \i in {4,8}{
                    {
                        \filldraw (0*\h, \i*\t) circle (0.04);
                    };
                };
                for \i in {3,5,7,9}{
                    {
                        \filldraw (1*\h, \i*\t) circle (0.04);
                    };
                };
                for \i in {2,4,6,8,10}{
                    {
                        \filldraw (2*\h, \i*\t) circle (0.04);
                    };
                };
                for \i in {1,3,5,7,9}{
                    {
                        \filldraw (3*\h, \i*\t) circle (0.04);
                    };
                };
                for \i in {0,2,4,6,8}{
                    {
                        \filldraw (4*\h, \i*\t) circle (0.04);
                    };
                };
                for \i in {1,3,5,7}{
                    {
                        \filldraw (5*\h, \i*\t) circle (0.04);
                    };
                };
                for \i in {2,4,6}{
                    {
                        \filldraw (6*\h, \i*\t) circle (0.04);
                    };
                };
                for \i in {3,5}{
                    {
                        \filldraw (7*\h, \i*\t) circle (0.04);
                    };
                };
                for \i in {4}{
                    {
                        \filldraw (8*\h, \i*\t) circle (0.04);
                    };
                };
                {
                % bonds
                    % N - S
                    \draw[dotted] (1*\h, 3*\t) -- (1*\h, 5*\t);
                    \draw[dotted] (1*\h, 7*\t) -- (1*\h, 9*\t);
                    \draw (2*\h, 2*\t) -- (2*\h, 10*\t);
                    \draw[dotted] (3*\h, 1*\t) -- (3*\h, 9*\t);
                    \draw (4*\h, 0*\t) -- (4*\h, 8*\t);
                    \draw[dotted] (5*\h, 1*\t) -- (5*\h, 7*\t);
                    \draw (6*\h, 2*\t) -- (6*\h, 6*\t);
                    \draw[dotted] (7*\h, 3*\t) -- (7*\h, 5*\t);
                    % SW - NE
                    \draw (0*\h, 8*\t) -- (2*\h, 10*\t);
                    \draw[dotted] (1*\h, 7*\t) -- (3*\h, 9*\t);
                    \draw (0*\h, 4*\t) -- (4*\h, 8*\t);
                    \draw[dotted] (1*\h, 3*\t) -- (5*\h, 7*\t);
                    \draw (2*\h, 2*\t) -- (6*\h, 6*\t);
                    \draw[dotted] (3*\h, 1*\t) -- (7*\h, 5*\t);
                    \draw (4*\h, 0*\t) -- (8*\h, 4*\t);
                    % NW - SE
                    \draw (2*\h, 10*\t) -- (8*\h, 4*\t);
                    \draw[dotted] (1*\h, 9*\t) -- (7*\h, 3*\t);
                    \draw (0*\h, 8*\t) -- (6*\h, 2*\t);
                    \draw[dotted] (1*\h, 5*\t) -- (5*\h, 1*\t);
                    \draw (0*\h, 4*\t) -- (4*\h, 0*\t);
                };
            }
        \end{tikzpicture}
        \caption*{\centering$i = 4$}
    \end{subfigure}
    \caption{
        An exemplary execution of Algorithm \ref{alg:triangulation} is shown.
        For the sake of simplicity, we use an example with $\sigma = 2$.
        The second phase fails to detect the boundary.
        So, assume that the boundary is given.
        Each figure shows the situation at the beginning of the $i$-th iteration.
        %The shape is depicted in yellow.
        %The representation of the shape are given by the nodes.
        The set of newly identified corners $N$ is circled.
        The edges indicate the activated circuits.
    }
    \label{fig:triangulation_example}
\end{figure}
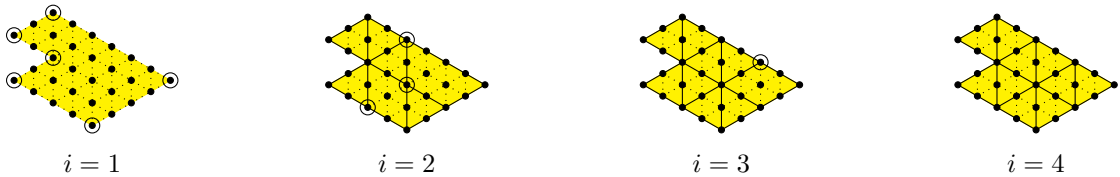

Algorithm~\ref{alg:triangulation} progresses iteratively.
In each iteration, we add edges originating from newly identified corners according to Observation~\ref{obs:incident_edges}.
New corners are determined by the intersections of these edges.
The first set of newly identified corners is given by the first class of the second phase. In order to compute the edges of the triangles, we establish circuits along the axes (see Figure~\ref{fig:triangulation:circuits1}).
However, we remove connections if they reach the boundary.
The algorithm results in a partition of the polygon into triangles.
The vertices of the triangles are given by $V$.
The edges of the triangles are given by the activated circuits.

\begin{figure}[htb]
    \centering
    \begin{subfigure}[b]{0.45\textwidth}
        \centering
        \begin{tikzpicture}
        \tikzmath{
            \scale = .45;
            \pins = 1;
            % amoebots
            \fr = 235;
            \fg = 235;
            \fb = 235;
            for \j in {0,1}{
                for \i in {-1,-2,-3,-4,-5}{
                    Amoebot(\j,0,\i);
                };
            };
            for \j in {2,3,4}{
                for \i in {-1,-2,-3,-4,-5,-6,-7}{
                    Amoebot(\j,0,\i);
                };
            };
            \fr = 255;
            \fg = 255;
            \fb = 0;
            Amoebot(1,0,-5);
            Amoebot(2,0,-6);
            % connections
            for \i in {-2,-3,-4}{
                \lr = 0;
                \lg = 255;
                \lb = 0;
                Circuit(0,0,\i,2,1,5,1);
            };
            for \i in {-2,-3,-4}{
                \lr = 255;
                \lg = 0;
                \lb = 0;
                Circuit(1,0,\i,1,1,4,1);
                \lr = 0;
                \lg = 255;
                \lb = 0;
                Circuit(1,0,\i,2,1,5,1);
                \lr = 0;
                \lg = 0;
                \lb = 255;
                Circuit(1,0,\i,3,1,6,1);
            };
            for \i in {-2,-3,-4,-5}{
                \lr = 255;
                \lg = 0;
                \lb = 0;
                Circuit(2,0,\i,1,1,4,1);
                \lr = 0;
                \lg = 255;
                \lb = 0;
                Circuit(2,0,\i,2,1,5,1);
                \lr = 0;
                \lg = 0;
                \lb = 255;
                Circuit(2,0,\i,3,1,6,1);
            };
            for \i in {-2,-3,-4,-5,-6}{
                \lr = 255;
                \lg = 0;
                \lb = 0;
                Circuit(3,0,\i,1,1,4,1);
                \lr = 0;
                \lg = 255;
                \lb = 0;
                Circuit(3,0,\i,2,1,5,1);
                \lr = 0;
                \lg = 0;
                \lb = 255;
                Circuit(3,0,\i,3,1,6,1);
            };
            for \i in {-2,-3,-4,-5,-6}{
                \lr = 0;
                \lg = 255;
                \lb = 0;
                Circuit(4,0,\i,2,1,5,1);
            };
            for \i in {1,2,3}{
                \lr = 0;
                \lg = 0;
                \lb = 255;
                Circuit(\i,0,-1,3,1,6,1);
            };
            \lr = 0;
            \lg = 0;
            \lb = 255;
            Circuit(1,0,-5,3,1,6,1);
            Circuit(3,0,-7,3,1,6,1);
            \lr = 0;
            \lg = 255;
            \lb = 0;
            Circuit(2,0,-6,2,1,5,1);
        }
        \end{tikzpicture}
        \caption{\centering}
        \label{fig:triangulation:circuits1}
    \end{subfigure}
    \hfill
    \begin{subfigure}[b]{0.45\textwidth}
        \centering
        \begin{tikzpicture}
        \tikzmath{
            \scale = .45;
            \pins = 2;
            % amoebots
            \fr = 235;
            \fg = 235;
            \fb = 235;
            for \j in {0,1}{
                for \i in {-1,-2,-3,-4,-5}{
                    Amoebot(\j,0,\i);
                };
            };
            for \j in {2,3,4}{
                for \i in {-1,-2,-3,-4,-5,-6,-7}{
                    Amoebot(\j,0,\i);
                };
            };
            \fr = 255;
            \fg = 255;
            \fb = 0;
            for \j in {0}{
                for \i in {-1,-3,-5}{
                    Amoebot(\j,0,\i);
                };
            };
            for \j in {2,4}{
                for \i in {-1,-3,-5,-7}{
                    Amoebot(\j,0,\i);
                };
            };
            % connections
            \lr = 255;
            \lg = 0;
            \lb = 0;
            for \j in {0}{
                for \i in {-2,-4}{
                    Circuit(\j,0,\i,2,1,5,1);
                    Circuit(\j,0,\i,2,2,5,2);
                };
            };
            for \j in {2,4}{
                for \i in {-2,-4,-6}{
                    Circuit(\j,0,\i,2,1,5,1);
                    Circuit(\j,0,\i,2,2,5,2);
                };
            };
            for \j in {1}{
                for \i in {-2,-4}{
                    Circuit(\j,0,\i,1,1,4,1);
                    Circuit(\j,0,\i,1,2,4,2);
                };
            };
            for \j in {3}{
                for \i in {-2,-4,-6}{
                    Circuit(\j,0,\i,1,1,4,1);
                    Circuit(\j,0,\i,1,2,4,2);
                };
            };
            for \j in {1}{
                for \i in {-1,-3,-5}{
                    Circuit(\j,0,\i,3,1,6,1);
                    Circuit(\j,0,\i,3,2,6,2);
                };
            };
            for \j in {3}{
                for \i in {-1,-3,-5,-7}{
                    Circuit(\j,0,\i,3,1,6,1);
                    Circuit(\j,0,\i,3,2,6,2);
                };
            };
        }
        \end{tikzpicture}
        \caption{\centering}
        \label{fig:triangulation:circuits2}
    \end{subfigure}
    \caption{
        %Utilized circuits.
        (a) shows the circuits along the axes.
        Thereby, each amoebot connects (i)~its northern pin to its southern pin, (ii)~its north-eastern pin to its south-western pin, and (iii)~its north-western pin to its south-eastern pin.
        No common orientation is required due to the symmetry of the connections.
        The amoebots on the boundary have removed all connections that reach the boundary.
        In particular, each yellow amoebot has removed the connection from its northern pin to its southern pin.
        Note that this works with an arbitrary number of pins (see Remark~\ref{rem:onepin}).
        (b) shows the circuits for each side of a triangle.
        The corners are depicted in yellow.
    }
\end{figure}
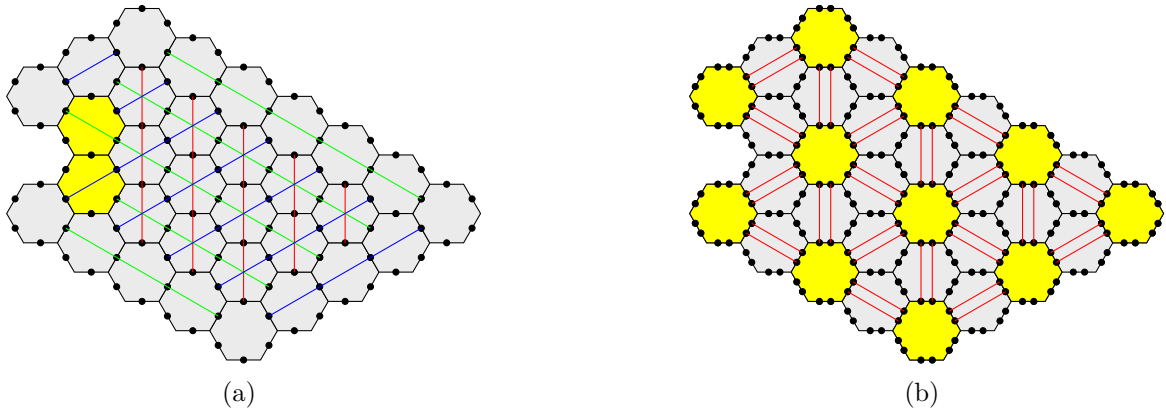

It remains to show how the information gathering in the first and fourth phase works.
We utilize the primitive for message transmission between neighboring amoebots in the first phase and between the endpoints of each triangle side in the fourth phase (see Section~\ref{sec:preliminaries}).
In order to apply the primitive for the latter, we have to connect the first pin of one endpoint to the last pin of the other endpoint and vice versa (see Figure~\ref{fig:triangulation:circuits2}).
The information is gathered iteratively.
The messages of the $i$-th iteration contain all known positions of amoebots or known triangles with respect to the transmission direction and up to distance $i$ (see Figure~\ref{fig:message}).
The positions up to distance $i + 1$ are known afterwards.
Recall that the required distance is constant for both phases.
This bounds the number of iterations and the size of the messages.
Finally, note that we cannot assume that the triangles are connected in the fourth phase.
We therefore gather the information for each face separately.
In this way, boundary corners are able to detect whether there is more than one face (see Appendix~\ref{app:recognition}).

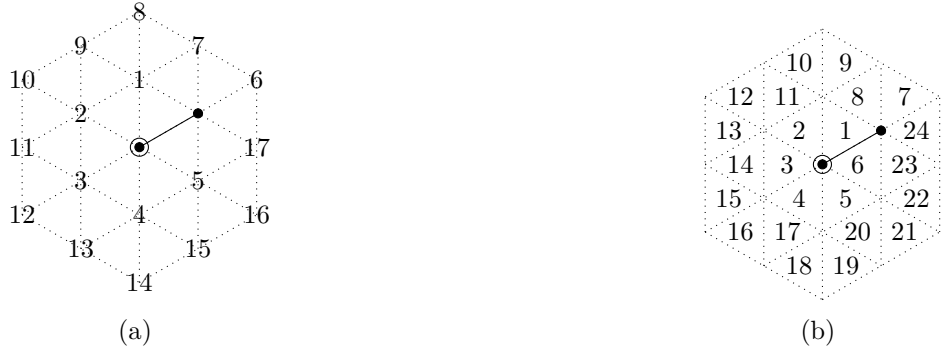
\begin{figure}[htb]
    \centering
    \begin{subfigure}[b]{0.45\columnwidth}
        \centering
        \begin{tikzpicture}
            \tikzmath{
                \scale = .9;
                \t = \scale/2;
                \h = sqrt(\scale^2 - (\scale/2)^2);
                {
                % amoebots
                    \filldraw (2*\h, 4*\t) circle (0.06);
                    \draw (2*\h, 4*\t) circle (0.12);
                    \filldraw (3*\h, 5*\t) circle (0.06);
                % grid
                    % N - S
                    \draw[dotted] (0*\h, 2*\t) -- (0*\h, 6*\t);
                    \draw[dotted] (1*\h, 1*\t) -- (1*\h, 7*\t);
                    \draw[dotted] (2*\h, 0*\t) -- (2*\h, 8*\t);
                    \draw[dotted] (3*\h, 1*\t) -- (3*\h, 7*\t);
                    \draw[dotted] (4*\h, 2*\t) -- (4*\h, 6*\t);
                    % SW - NE
                    \draw[dotted] (0*\h, 6*\t) -- (2*\h, 8*\t);
                    \draw[dotted] (0*\h, 4*\t) -- (3*\h, 7*\t);
                    \draw[dotted] (0*\h, 2*\t) -- (4*\h, 6*\t);
                    \draw[dotted] (1*\h, 1*\t) -- (4*\h, 4*\t);
                    \draw[dotted] (2*\h, 0*\t) -- (4*\h, 2*\t);
                    % NW - SE
                    \draw[dotted] (2*\h, 8*\t) -- (4*\h, 6*\t);
                    \draw[dotted] (1*\h, 7*\t) -- (4*\h, 4*\t);
                    \draw[dotted] (0*\h, 6*\t) -- (4*\h, 2*\t);
                    \draw[dotted] (0*\h, 4*\t) -- (3*\h, 1*\t);
                    \draw[dotted] (0*\h, 2*\t) -- (2*\h, 0*\t);
                % bond
                    \draw (2*\h, 4*\t) -- (3*\h, 5*\t);
                % face ids
                    \draw[black] (0*\h, 2*\t) node {\small $12$};
                    \draw[black] (0*\h, 4*\t) node {\small $11$};
                    \draw[black] (0*\h, 6*\t) node {\small $10$};
                    \draw[black] (1*\h, 1*\t) node {\small $13$};
                    \draw[black] (1*\h, 3*\t) node {\small $3$};
                    \draw[black] (1*\h, 5*\t) node {\small $2$};
                    \draw[black] (1*\h, 7*\t) node {\small $9$};
                    \draw[black] (2*\h, 0*\t) node {\small $14$};
                    \draw[black] (2*\h, 2*\t) node {\small $4$};
                    \draw[black] (2*\h, 6*\t) node {\small $1$};
                    \draw[black] (2*\h, 8*\t) node {\small $8$};
                    \draw[black] (3*\h, 1*\t) node {\small $15$};
                    \draw[black] (3*\h, 3*\t) node {\small $5$};
                    \draw[black] (3*\h, 7*\t) node {\small $7$};
                    \draw[black] (4*\h, 2*\t) node {\small $16$};
                    \draw[black] (4*\h, 4*\t) node {\small $17$};
                    \draw[black] (4*\h, 6*\t) node {\small $6$};
                };
            }
        \end{tikzpicture}
        \caption{\centering}
        \label{fig:message1}
    \end{subfigure}
    \hfill
    \begin{subfigure}[b]{0.45\columnwidth}
        \centering
        \begin{tikzpicture}
            \tikzmath{
                \scale = .9;
                \t = \scale/2;
                \h = sqrt(\scale^2 - (\scale/2)^2);
                {
                % amoebots
                    \filldraw (2*\h, 4*\t) circle (0.06);
                    \draw (2*\h, 4*\t) circle (0.12);
                    \filldraw (3*\h, 5*\t) circle (0.06);
                % grid
                    % N - S
                    \draw[dotted] (0*\h, 2*\t) -- (0*\h, 6*\t);
                    \draw[dotted] (1*\h, 1*\t) -- (1*\h, 7*\t);
                    \draw[dotted] (2*\h, 0*\t) -- (2*\h, 8*\t);
                    \draw[dotted] (3*\h, 1*\t) -- (3*\h, 7*\t);
                    \draw[dotted] (4*\h, 2*\t) -- (4*\h, 6*\t);
                    % SW - NE
                    \draw[dotted] (0*\h, 6*\t) -- (2*\h, 8*\t);
                    \draw[dotted] (0*\h, 4*\t) -- (3*\h, 7*\t);
                    \draw[dotted] (0*\h, 2*\t) -- (4*\h, 6*\t);
                    \draw[dotted] (1*\h, 1*\t) -- (4*\h, 4*\t);
                    \draw[dotted] (2*\h, 0*\t) -- (4*\h, 2*\t);
                    % NW - SE
                    \draw[dotted] (2*\h, 8*\t) -- (4*\h, 6*\t);
                    \draw[dotted] (1*\h, 7*\t) -- (4*\h, 4*\t);
                    \draw[dotted] (0*\h, 6*\t) -- (4*\h, 2*\t);
                    \draw[dotted] (0*\h, 4*\t) -- (3*\h, 1*\t);
                    \draw[dotted] (0*\h, 2*\t) -- (2*\h, 0*\t);
                % bond
                    \draw (2*\h, 4*\t) -- (3*\h, 5*\t);
                % face ids
                    \draw[black] (0.6*\h, 2*\t) node {\small $16$};
                    \draw[black] (0.4*\h, 3*\t) node {\small $15$};
                    \draw[black] (0.6*\h, 4*\t) node {\small $14$};
                    \draw[black] (0.4*\h, 5*\t) node {\small $13$};
                    \draw[black] (0.6*\h, 6*\t) node {\small $12$};
                    \draw[black] (1.6*\h, 1*\t) node {\small $18$};
                    \draw[black] (1.4*\h, 2*\t) node {\small $17$};
                    \draw[black] (1.6*\h, 3*\t) node {\small $4$};
                    \draw[black] (1.4*\h, 4*\t) node {\small $3$};
                    \draw[black] (1.6*\h, 5*\t) node {\small $2$};
                    \draw[black] (1.4*\h, 6*\t) node {\small $11$};
                    \draw[black] (1.6*\h, 7*\t) node {\small $10$};
                    \draw[black] (2.4*\h, 1*\t) node {\small $19$};
                    \draw[black] (2.6*\h, 2*\t) node {\small $20$};
                    \draw[black] (2.4*\h, 3*\t) node {\small $5$};
                    \draw[black] (2.6*\h, 4*\t) node {\small $6$};
                    \draw[black] (2.4*\h, 5*\t) node {\small $1$};
                    \draw[black] (2.6*\h, 6*\t) node {\small $8$};
                    \draw[black] (2.4*\h, 7*\t) node {\small $9$};
                    \draw[black] (3.4*\h, 2*\t) node {\small $21$};
                    \draw[black] (3.6*\h, 3*\t) node {\small $22$};
                    \draw[black] (3.4*\h, 4*\t) node {\small $23$};
                    \draw[black] (3.6*\h, 5*\t) node {\small $24$};
                    \draw[black] (3.4*\h, 6*\t) node {\small $7$};
                };
            }
        \end{tikzpicture}
        \caption{\centering}
        \label{fig:message2}
    \end{subfigure}
    \caption{
        (a) shows an exemplary message for the first phase.
        The nodes indicate two neighboring amoebots $u$ and $v$ where $u$ is the circled one.
        Amoebot $u$ assigns a truth value to each possible amoebot with respect to edge $\{u,v\}$ and sends them to $v$.
        (b) shows an exemplary message for the fourth phase.
        The nodes indicate the the endpoint of a side $u$ and $v$ where the $u$ is the circled one.
        Amoebot $u$ assigns a truth value to each possible triangle with respect to edge $\{u,v\}$ and sends them to $v$.
    }
    \label{fig:message}
\end{figure}

Altogether, we arrive at the following theorem:

\begin{theorem}
\label{th:usr}
    Let shape $\mathcal S$ be connected and minimal.
    The amoebot structure can detect a representation of the shape $V(T(\mathcal S))$.
    It requires $O(1)$ rounds if the amoebots share a common chirality.
    Otherwise, it requires $O(\log n)$ rounds, w.h.p.
\end{theorem}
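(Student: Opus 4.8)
The plan is to prove the runtime and correctness claims separately, and to dispatch runtime first. Assuming common chirality, I would argue that each of the four phases takes $O(1)$ rounds. Phase~(i) has every amoebot collect the occupancy pattern of its ball of radius $3|\mathcal S|+1$; since this radius is a constant, iterating the neighbor-to-neighbor message-transmission primitive of Section~\ref{sec:preliminaries} a constant number of times suffices, each round pushing the known horizon outward by one. Phase~(ii) is purely local: each amoebot already knows its $2$-neighborhood from phase~(i), so classifying itself as boundary corner, boundary edge, or interior (or reporting failure) costs no further communication. Phase~(iii) runs Algorithm~\ref{alg:triangulation}, whose number of iterations is governed only by the combinatorial structure of $\mathcal S$ and not by the scale $\sigma$, so we may hard-code a constant iteration bound, each iteration being a single beep on the axis circuits of Figure~\ref{fig:triangulation:circuits1}. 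Phase~(iv) again uses the transmission primitive, now along triangle sides (Figure~\ref{fig:triangulation:circuits2}), gathering triangle positions up to the constant distance $|\mathcal S|+1$. For the no-chirality case I would prepend the chirality-agreement protocol of Section~\ref{subsec:chirality}, which runs in $O(\log n)$ rounds w.h.p., yielding the stated bound.

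For correctness on inputs that are genuine representations I would split on the scale. If $\sigma < 4$, the representation has diameter $O(|\mathcal S|)$, so the radius-$(3|\mathcal S|+1)$ view gathered in phase~(i) contains the entire structure; since only finitely many representations with $\sigma<4$ exist and they are stored in the state machine, each amoebot decides membership exactly and the algorithm answers yes. If $\sigma \ge 4$, phase~(i) necessarily fails to match any small representation, and I would establish three lemmas in sequence: (a)~phase~(ii) detects exactly the boundary of $\mathcal S$, relying on the fact that $\sigma\ge 4$ forces each triangle edge to carry at least three interior nodes, so that the $2$-neighborhood of every node is unambiguous and falls into one of the enumerated classes of Figure~\ref{fig:neighborhoods}; (b)~given the correct boundary, Algorithm~\ref{alg:triangulation} reconstructs the triangulation of $\mathcal S$, which I would prove by induction on the iteration index, using Observation~\ref{obs:incident_edges} to argue that the axis circuits emitted from the currently known corners intersect precisely at the next layer of corners, so a node joins $N$ iff it is a genuine corner on the relevant grid line; and (c)~phase~(iv) then recognizes the triangulation as $V(T(\mathcal S))$ because each corner sees its $(|\mathcal S|+1)$-neighborhood of triangles, which is enough to match against the finitely many corner-neighborhoods of $\mathcal S$.

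For correctness on inputs that are not representations, I would show the algorithm rejects at the earliest phase whose invariant is violated: phase~(ii) halts with a negative answer when some $2$-neighborhood is unclassifiable (for instance an isolated boundary edge), and phase~(iii) halts negatively if the triangulation exceeds the hard-coded iteration bound or terminates with untriangulated nodes remaining. Any boundary that is detected but does not match $\mathcal S$ survives to phase~(iv), where the local triangle comparison fails. Here minimality of $\mathcal S$ is essential: it rules out the degenerate situation of Figures~\ref{fig:shape3} and~\ref{fig:shape4}, guaranteeing that a representation of $\mathcal S$ is not confused with a representation of a coarser shape, so the yes/no decision is well defined. The possibility of a boundary enclosing several faces (Figure~\ref{fig:shape5}) is handled by running the phase~(iv) gathering separately per face, so the boundary corners detect a disconnection and reject.

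The hardest part will be the inductive correctness of phase~(iii) together with its scale-independence. One must verify that the simultaneous propagation of beeps along the three families of axis circuits, with connections severed at the boundary, makes a node receive signals on at least two circuits exactly when it is a true triangle corner of the next layer and never spuriously; this requires tracking that partially-triangulated interior nodes do not accidentally meet the two-circuit condition before the corners defining them are discovered. Proving that the number of layers depends only on $\mathcal S$ (so that the iteration cap is legitimate) amounts to showing the triangulation recursion on $\mathcal S$ terminates in a shape-determined number of steps, uniformly over all transformations $T$; I expect this to be the main technical obstacle, and would attack it by exhibiting an explicit correspondence between the abstract corners of $\mathcal S$ and the layers discovered by the algorithm that is invariant under translation, rotation, and scaling.
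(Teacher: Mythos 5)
Your proposal follows essentially the same route as the paper: the same four-phase decomposition, the same gathering lemmas for phases~(i) and~(iv), the same reliance on $\sigma \ge 4$ for unambiguous $2$-neighborhood classification, the hard-coded iteration cap for Algorithm~\ref{alg:triangulation} justified by scale-independence, and the chirality-agreement prefix for the $O(\log n)$ case. The only real divergence is that the step you flag as the main technical obstacle -- scale-independence of the number of triangulation layers -- is dispatched by the paper in one line (Lemma~\ref{lem:usr:bound}): since the algorithm only uses circuits and circuit delay is independent of length, the iteration pattern depends only on the combinatorics of the triangulation of $\mathcal S$, which is exactly the correspondence you proposed to construct explicitly.
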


%%
%% Bibliography
\printbibliography

\appendix

\section{Message Transmission}
\label{app:preliminaries}

We consider the message transmission primitive for the special case of a single pin per bond.
The single pin has to be used for both sending and receiving messages.
However, if both amoebots decided to send a message at the same time, the messages would interfere with each other.
Hence, we first define a priority by a local leader election (see Section~\ref{sec:preliminaries}).
Each message transmission is initiated by a beep.
The first subsequent round is used to determine the sender and receiver.
The prioritized amoebot, i.e., the local leader, beeps if it has initiated the transmission.
Consequently, the other amoebot cancels its transmission if it has initiated one at the same time.
If the prioritized amoebot does not beep, then the other amoebot is free to send its message.
To keep the transmission rates of both amoebots fair, we may transfer the priority to the receiving amoebot after each transmission.
Thus, the transmission of constant-sized messages between two amoebots takes $O(1)$ rounds after initial $\Theta(\log n)$ rounds, w.h.p.

\section{Leader Election}
\label{app:leader}

\algdef{SE}[DOWHILE]{Do}{doWhile}{\algorithmicdo}[1]{\algorithmicwhile\ #1}%
\begin{algorithm}[ht]
\caption{Leader Election Protocol from a global perspective}
\label{alg:leader_election}
\begin{algorithmic}[1]
\State $C_1 \gets S$ \Comment{Initialize set of candidates}
\Do \Comment{Phase $1$ tournament}
	\State Each $u \in C_1$ tosses a coin and either sets $u.c_1 \gets \mathit{HEADS}$ or $u.c_1 \gets \mathit{TAILS}$
	\State $H_1 \gets \{u \in C_1 \mid u.c_1 = \mathit{HEADS}\}$ 
	\State $T_1 \gets C_1 \setminus H_1$
    \If{$H_1 \neq \emptyset$}
    	\State $C_1 \gets C_1 \setminus T_1$
    \EndIf
\doWhile{$H_1 \neq \emptyset \wedge T_1 \neq \emptyset$}
\State
\State $i \gets 0$
\While{$i < \kappa$} \Comment{Phase $2$ tournaments}
	\State $C_2 \gets S$
	\Do
		\State Each $u \in C_1$ tosses a coin and either sets $u.c_1 \gets \mathit{HEADS}$ or $u.c_1 \gets \mathit{TAILS}$
		\State Each $u \in C_2$ tosses a coin and either sets $u.c_2 \gets \mathit{HEADS}$ or $u.c_2 \gets \mathit{TAILS}$
		\State $H_1 \gets \{u \in C_1 \mid u.c_1 = \mathit{HEADS}\}$
		\State $H_2 \gets \{u \in C_2 \mid u.c_2 = \mathit{HEADS}\}$
		\State $T_1 \gets C_1 \setminus H_1$
		\State $T_2 \gets C_2 \setminus H_2$
 	    \If{$H_1 \neq \emptyset$}
 	   		\State $C_1 \gets C_1 \setminus T_1$
 	    \EndIf
  	    \If{$H_2 \neq \emptyset$}
    		\State $C_2 \gets C_2 \setminus T_2$
        \EndIf
	\doWhile{$H_2 \neq \emptyset \wedge T_2 \neq \emptyset$}
	\State $i \gets i + 1$
\EndWhile
\State
\State \Return $C_1$ \Comment{W.h.p., $|C_1| = 1$}
\end{algorithmic}
\end{algorithm}

We give a detailed analysis on the leader election algorithm described in Section~\ref{sec:leader}.
First, we bound the runtime and the number of remaining canidates after the first phase of the protocol has finished.

\begin{lemma} \label{lem:leader_election:t1}
	The first phase of the protocol finishes after $\Theta(\log n)$ rounds, w.h.p., with $|C_1| = O(\log n)$.
\end{lemma}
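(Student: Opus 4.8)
The plan is to analyze the first phase through a single clean coupling with a binomial random variable, and then to read off all three claims (the $O(\log n)$ and $\Omega(\log n)$ runtime bounds and the $O(\log n)$ bound on the surviving candidates) from the tail behaviour of that binomial. First I would set up the coupling. In the do-while loop, a \emph{productive} iteration, i.e. one with $H_1 \neq \emptyset$ and $T_1 \neq \emptyset$, replaces $C_1$ by $H_1$, so every survivor must have tossed $\mathit{HEADS}$; by induction, after $t$ productive iterations $C_1$ is exactly the set of candidates that tossed $\mathit{HEADS}$ in each of the first $t$ iterations. Imagining every one of the $n$ initial candidates tossing an independent coin in every round, let $Y_t$ count those whose first $t$ tosses are all $\mathit{HEADS}$. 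Then $Y_t \sim \mathrm{Bin}(n, 2^{-t})$ with $\E[Y_t] = n\,2^{-t}$, the sequence $Y_0 \geq Y_1 \geq \dots$ is non-increasing, and as long as the phase has not yet stopped we have $|C_1| = Y_t$ after iteration $t$. The loop stops at the first iteration $\tau$ whose tosses all agree; in the coupling this is the first $i$ with $Y_i = 0$ (all $\mathit{TAILS}$) or $Y_i = Y_{i-1}$ (all $\mathit{HEADS}$). In either stopping case no candidate is removed, so the final set satisfies $|C_1| = Y_{\tau-1}$.

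For the \emph{upper} bound on the runtime I would observe that if the phase runs for more than $t$ iterations then iterations $1,\dots,t$ were all productive, so in particular $Y_t \geq 1$. Hence, by Markov's inequality, $\Pr[\text{phase runs} > t] \leq \Pr[Y_t \geq 1] \leq \E[Y_t] = n\,2^{-t}$, which is at most $n^{-c}$ once $t = (c+1)\log_2 n$. Thus the phase finishes within $O(\log n)$ iterations, i.e. $O(\log n)$ rounds, w.h.p.

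For the \emph{lower} bound on the runtime and for the bound on $|C_1|$ I would show that $Y_t$ stays large for $\Theta(\log n)$ steps. Fix a small constant $\epsilon > 0$ and set $T = \epsilon \log_2 n$. Then $\E[Y_T] = n\,2^{-T} = n^{1-\epsilon}$, so a Chernoff bound gives $Y_T \geq \tfrac{1}{2} n^{1-\epsilon}$ with probability $1 - e^{-\Omega(n^{1-\epsilon})}$, and by monotonicity $Y_t \geq \tfrac{1}{2} n^{1-\epsilon}$ for all $t \leq T$. Condition on this event. Whenever the current candidate count $Y_{i-1}$ is at least some threshold $m$, the probability that iteration $i$ is a stopping iteration is exactly $2^{1-Y_{i-1}} \leq 2^{1-m}$. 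Taking $m = \tfrac{1}{2} n^{1-\epsilon}$ and a union bound over the $T = O(\log n)$ iterations shows that, w.h.p., no stopping iteration occurs among the first $T$ iterations; since moreover $Y_t \geq 2$ throughout, the phase runs for at least $\epsilon \log_2 n = \Omega(\log n)$ iterations, which together with the previous paragraph gives the $\Theta(\log n)$ runtime. Taking instead the weaker threshold $m = C\log_2 n$ gives stopping probability at most $2n^{-C}$ per iteration; a union bound over the (w.h.p. at most $O(\log n)$) iterations of the whole phase shows that, w.h.p., stopping never happens while $Y_{i-1} \geq C\log_2 n$. Consequently $Y_{\tau-1} < C\log_2 n$, and therefore $|C_1| = Y_{\tau-1} = O(\log n)$, w.h.p.

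The step requiring the most care is the coupling itself, specifically the bookkeeping around the stopping iteration. One must verify that the survivors are precisely the ``all-$\mathit{HEADS}$'' candidates, so that the clean $\mathrm{Bin}(n, 2^{-t})$ description is valid, and that the stopping iteration—whether triggered by all $\mathit{HEADS}$ or all $\mathit{TAILS}$—removes \emph{no} candidate, so that the reported $|C_1|$ equals the count $Y_{\tau-1}$ entering that iteration rather than something smaller. The probabilistic heart is then the two-sided control of $Y_t$: the easy upper tail bounds the runtime from above, while the lower tail must be combined with a union bound over iterations to simultaneously rule out premature termination and to guarantee that the final candidate set has size $O(\log n)$.
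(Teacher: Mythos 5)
Your proof is correct, but it takes a genuinely different route from the paper's. The paper argues iteration by iteration: a Chernoff bound shows that each productive iteration shrinks $|C_1|$ by a constant factor (and by no more than a constant factor), w.h.p., as long as $|C_1| = \Omega(\log n)$, and the $\Theta(\log n)$ iteration count and the $O(\log n)$ bound on the survivors are then assembled from these per-iteration estimates, with a somewhat informal closing argument about when a terminating (all-$\mathit{HEADS}$ or all-$\mathit{TAILS}$) iteration occurs. You instead couple the entire phase with a single binomial: the survivors of $t$ productive iterations are exactly the all-$\mathit{HEADS}$ candidates, so $|C_1| = Y_t \sim \mathrm{Bin}(n, 2^{-t})$, and the stopping probability at iteration $i$ is exactly $2^{1 - Y_{i-1}}$. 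This buys you a one-line first-moment upper bound on the runtime (via $\Pr[Y_t \geq 1] \leq n\,2^{-t}$) and, more importantly, a cleaner derivation of $|C_1| = O(\log n)$ at termination: the paper obtains this implicitly from the fact that the tournament cannot stall while $|C_1| = \Omega(\log n)$, whereas you get it directly by union-bounding, over the $O(\log n)$ iterations, the event that all of at least $C\log_2 n$ coins agree. Your bookkeeping around the stopping iteration (no candidate is removed in the final iteration, so the output is $Y_{\tau-1}$) is exactly right and matches Algorithm~\ref{alg:leader_election}. One small remark: the proof of Lemma~\ref{lem:leader_election:t2} uses the slightly sharper fact that the first phase lasts at least $\log n - \gamma\log\log n$ iterations; your method delivers this as well if you take $T = \log_2 n - \gamma\log_2\log_2 n$ (so that $\E[Y_T] = \log_2^{\gamma} n$) in place of $T = \epsilon\log_2 n$, since the per-iteration stopping probability $2^{1-\Omega(\log^{\gamma} n)}$ is still superpolynomially small for $\gamma > 1$.
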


\begin{proof}
	We first show that in each iteration of the first phase, the number of remaining candidates in $C_1$ gets reduced by some constant factor $k$, w.h.p., $\frac{1}{2} < k < 1$, as long as $|C_1| \geq \frac{6}{(2k-1)^2} c \log n = O(\log n)$, $c > 0$ constant.
	Consider one iteration of the tournament in which $|C_1| \geq \frac{6}{(2k-1)^2} c \log n$.
	For each candidate $u \in C_1$ define 	
	\begin{equation*}
   		X_u =
   		\begin{cases}
     		 1, & \text{if }u.c_1 = \mathit{HEADS} \\
    		 0, & \text{if }u.c_1 = \mathit{TAILS}.
   		\end{cases}
	\end{equation*}
	We have that $\Pr[X_u = 1] = \frac{1}{2}$.
	Define $X = \sum_{u \in C_1} X_u$.
	Then, $\E[X] = \frac{1}{2} |C_1|$.
	Using Chernoff bounds, we obtain
	\begin{eqnarray*}
		\Pr[X \geq k \cdot |C_1|] & = & \Pr \left[X \geq (1+(2k-1))\cdot \frac{1}{2} |C_1| \right]\\
		& \leq & \exp \left(-\frac{(2k-1)^2 \cdot \frac{1}{2}|C_1|}{3} \right)\\
		& = & \exp \left(-\frac{(2k-1)^2 \cdot |C_1|}{6} \right)\\
		& \leq & \exp \left(-c \log n\right)\\
		& < & n^{-c}.
	\end{eqnarray*}
	Therefore, w.h.p., at most $k \cdot |C_1|$ candidates remain after one single iteration.
	
	Analogously, one can show that at least a constant fraction of candidates from $C_1$ remains candidates in the next iteration. 
	
	Now, we can compute the maximum number of iterations $i$ needed until $|C_1|$ gets reduced from $n$ to $\frac{6}{(2k-1)^2} c \log n = O(\log n)$.
	For this, we have to solve $k^i \cdot |C_1| = \frac{6}{(2k-1)^2} c \log n$.
	Trivially, $i = O(\log n)$.
%	We have
%	\begin{align*}
%		&        &  k^i \cdot n &= \frac{6}{(2k-1)^2} \cdot c \log n        		  \\
%		\Leftrightarrow  &       &  i \cdot \log k + \log n&= \log \left(\frac{6}{(2k-1)^2}\right)  + \log c + \log \log n 
%	\end{align*}
%	which leads to $i \in O(\log n)$.
%	and therefore, we obtain $$i = \frac{\log 6 - 2\log(2k-1)  + \log c + \log \log n - \log n}{\log k}.$$
%	Observe that the denominator is negative for $k < 1$ constant.
%	Together with the fact that $c > 0$ is a constant, we obtain $i = O(\log n)$.
	By similar arguments, it follows that the minimum number of rounds $j$ needed to reach $|C_1| = \frac{6}{(2k-1)^2} c \log n$ is $j = \Omega(\log n)$.
	Thus, after $\Theta(\log n)$ rounds, we have reduced $|C_1|$ from $n$ to $O(\log n)$.
	
	We now only need to wait another $O(\log n)$ rounds to reach an iteration in which either $\forall u \in C_1: u.c_1 = \mathit{HEADS}$ or $\forall u \in C_1: u.c_1 = \mathit{TAILS}$, as in each iteration where the above condition does not hold, $|C_1|$ gets reduced by at least $1$ candidate.
	Thus, after $O(\log n)$ of such iterations, only a constant amount of candidates remain in $C_1$ and the first phase terminates with at least a constant probability from this point forward.
	This concludes the proof.
\end{proof}

Next, we also bound the runtime and the number of remaining candidates after the second phase of the protocol has finished.

\begin{lemma} \label{lem:leader_election:t2}
	The second phase of the protocol finishes after $\Theta(\log n)$ rounds, w.h.p., with $|C_1| = 1$, w.h.p.
\end{lemma}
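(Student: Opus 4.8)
The plan is to split the claim into its runtime half and its correctness half, and to handle the final reduction of $C_1$ by a supermartingale-plus-Markov argument in the spirit of Lemma~\ref{lem:alignment2}, since the clean Chernoff concentration of Lemma~\ref{lem:leader_election:t1} is no longer available once $|C_1|$ has already dropped to $O(\log n)$.

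First I would control the timer. Each of the $\kappa$ phase-$2$ iterations re-initializes $C_2 \gets S$ and runs the $C_1$- and $C_2$-tournaments in lockstep until the $C_2$-tournament terminates. Since $C_2$ starts with $|C_2| = n$ candidates, the analysis of Lemma~\ref{lem:leader_election:t1} applies verbatim to it: the $i$-th iteration lasts $T_i = \Theta(\log n)$ tournament steps, w.h.p., yielding both the upper bound needed for the runtime and a lower bound $T_i \geq a_0 \log n$ for a fixed constant $a_0$. As $\kappa$ is constant, a union bound gives that the total number of rounds is $4\sum_{i=1}^{\kappa} T_i = \Theta(\log n)$, w.h.p., which settles the runtime. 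Here it is essential that the coins driving the two tournaments are independent and that $C_1$ is \emph{never} reset, so its reductions accumulate across all $\kappa$ iterations and the total number of $C_1$ steps is $N := \sum_i T_i \geq \kappa a_0 \log n$, w.h.p.

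Next I would bound the surviving candidates. By Lemma~\ref{lem:leader_election:t1} phase $2$ starts with $|C_1| = O(\log n)$; moreover $C_1$ is monotonically non-increasing, and a lone candidate always survives a tournament step (if it alone remains, all coins trivially agree), so once $|C_1| = 1$ it stays at $1$. Writing $X_t = |C_1| - 1$ after $t$ tournament steps, a candidate survives a step iff it flips $\mathit{HEADS}$ or all $m = |C_1|$ candidates flip $\mathit{TAILS}$, whence $\E[\,|C_1|_{t+1} \mid |C_1|_t = m\,] = m(\tfrac12 + 2^{-m})$, and using $m\,2^{-m} \leq \tfrac12$ for $m \geq 2$,
\begin{equation*}
    \E[X_{t+1} \mid X_t] \leq \tfrac12 X_t .
\end{equation*}
Iterating this supermartingale over a deterministic horizon of $a_0 \log n$ steps gives $\E[X_{a_0 \log n}] \leq 2^{-a_0 \log n}\cdot O(\log n) \leq n^{-c'}$ for a constant $c'$ that grows with $a_0$, so Markov's inequality yields $\Pr[X_{a_0 \log n} \geq 1] \leq n^{-c'}$. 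Since $N \geq a_0 \log n$ w.h.p. and $X$ is non-increasing, $X_N = 0$, i.e. $|C_1| = 1$, w.h.p.

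Finally I would combine the two halves by a union bound over the constantly many failure events (short timers and a failed reduction), all of which are inverse-polynomially small. The role of the constant $\kappa \geq 3$ is precisely to lengthen the effective horizon to $N \geq \kappa a_0 \log n$, pushing the exponent $c'$ in the Markov bound past any prescribed w.h.p.\ threshold while leaving room to absorb the timer-failure probabilities. The main obstacle is this very last reduction from $O(\log n)$ down to a single candidate: the constant-factor-per-round Chernoff argument of Lemma~\ref{lem:leader_election:t1} breaks once the count is $O(\log n)$, forcing the switch to the weaker expectation-based supermartingale bound, and one must take care to handle the random stopping time $N$ of the $C_2$-timer through its high-probability lower bound (exploiting monotonicity of $C_1$ and the independence of the two tournaments' coins) rather than conditioning on it directly.
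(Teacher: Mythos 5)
Your proof is correct, but the heart of it --- driving $|C_1|$ from $O(\log n)$ down to a single candidate --- goes by a genuinely different route than the paper's. The paper fixes a surviving candidate $v$ and observes that any other candidate $u$ can outlast all $K = \kappa\,(\log n - \gamma\log\log n)$ phase-two iterations only if its coin sequence agrees with $v$'s in every iteration (whenever the two coins differ, the one showing $\mathit{TAILS}$ is eliminated); this collision event has probability $2^{-K} < n^{-(\kappa-1)}$, and a union bound over the $O(\log n)$ remaining candidates gives failure probability $< n^{-(\kappa-2)}$. You instead compute the exact one-step survival probability $\tfrac{1}{2} + 2^{-m}$, derive the drift inequality $\E[X_{t+1}\mid X_t]\le\tfrac{1}{2}X_t$ for $X_t=|C_1|-1$, iterate over a deterministic horizon and finish with Markov --- essentially transplanting the argument of Lemma~\ref{lem:alignment2}. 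Both are sound and give comparable exponents. The paper's pairwise-collision argument is a little leaner (no drift computation, and the dependence of the exponent on $\kappa$ is immediate); your version is more systematic, reuses machinery already present elsewhere in the paper, and is actually more careful on one point the paper glosses over, namely how to handle the random stopping time dictated by the $C_2$-timer (you replace it by its w.h.p.\ lower bound using the independence of the two tournaments' coins and the monotonicity of $|C_1|$). One small cosmetic remark: writing $\E[X_{a_0\log n}]\le 2^{-a_0\log n}\cdot O(\log n)$ implicitly conditions on the w.h.p.\ event that phase one left only $O(\log n)$ candidates; either condition on that event explicitly or fall back on the crude bound $X_0\le n$, which still yields an inverse-polynomial tail for $a_0$ large enough.
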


\begin{proof}
	Since the tournament on $C_2$ starts with $n$ amoebots, we immediately get that the second phase finishes after $\Theta(\log n)$ rounds, w.h.p. due to Lemma~\ref{lem:leader_election:t1}. 
	
	It therefore remains to show that the tournament on $C_1$ reduces $|C_1|$ from $O(\log n)$ candidates to one single candidate after $\Theta(\log n)$ rounds, w.h.p.
	From the proof of Lemma~\ref{lem:leader_election:t1} it follows that the first phase runs for at least $\log n - \gamma \log \log n$ rounds, for an appropriately chosen constant $\gamma > 0$.
	Assume that after the first phase there are $c \log n + 1$ candidates left in the set $C_1$.
	The second phase consists of $\kappa \cdot (\log n - \gamma \log \log n)$ iterations, $\kappa \geq 3$ constant.
	For a candidate $u \in V$, let $B(v,i) = 1$ if $u.c_1 = HEADS$ in the $i$-th iteration of the second phase and $B(v,i) = 0$ if $u.c_1 = TAILS$ in the $i$-th iteration of the second phase.
	Fix the amoebot $v \in V$ that is going to be elected as the leader.
	We show for every amoebot $u \neq v$ that the probability that $B(u,i) = B(v,i)$ holds in each of the $\kappa \cdot (\log n - \gamma \log \log n)$ iterations is negligible.
	We get
	\begin{eqnarray*}
		&   & \Pr[\forall i \in \{1,\ldots,\kappa \cdot (\log n - \gamma \log \log n) \}: B(u,i) = B(v,i)]\\
		& = & \left(\frac{1}{2}\right)^{\kappa \cdot (\log n - \gamma \log \log n)} < \frac{1}{n^{\kappa -1}}
%		& = & \frac{2^{\kappa \gamma \log \log n}}{2^{\kappa \log n}}\\
%		& = & \frac{\log^{\kappa \gamma} n}{n^{\kappa}}\\
%		& = & \frac{1}{n^{\kappa -1}} \cdot \frac{\log^{\kappa \gamma} n}{n}\\
%		& < & \frac{1}{n^{\kappa -1}}\\
	\end{eqnarray*}

	By the union bound, we show that the probability that no leader has been elected after $\kappa \cdot (\log n - \gamma \log \log n)$ iterations is negligible, i.e., we compute the probability that $\exists u \neq v: \forall i \in \{1,\ldots, \kappa \cdot (\log n - \gamma \log \log n) \}: B(u,i) = B(v,i)$.
	Denote this event by $NOLEADER$.
	We get $$\Pr[NOLEADER] < c \log n \cdot \frac{1}{n^{\kappa -1}} < \frac{1}{n^{\kappa -2}}.$$
\end{proof}

Combining Lemmas~\ref{lem:leader_election:t1} and \ref{lem:leader_election:t2} proves Theorem~\ref{th:leader_election}.

\section{Compass Alignment}
\label{app:alignment}

This appendix states the full proofs of Lemmas~\ref{lem:alignment1} and~\ref{lem:alignment2}.

\begin{proof}[Proof of Lemma~\ref{lem:alignment1}]
    We prove the lemma by induction.
    The statement holds trivially for $\mathcal{R}_0$ since all amoebots are candidates at the beginning.
    
    Suppose the statement holds after the $t$-th iteration.
    Consider region $R \in \mathcal{R}_t$ and its candidates $C_R$.
    By induction hypothesis, $C_R \neq \emptyset$.
    Note that candidates which have tossed $\mathit{HEADS}$ do not revoke their candidacy.
    Therefore, suppose that all candidates have tossed $\mathit{TAILS}$ and hence also region $R$.
    The candidacies of all candidates in $C_R$ are only revoked if region $R$ fuses with another region $R'$.
    Region $R'$ either has failed its coin toss or has tossed $\mathit{HEADS}$.
    In both cases there exists a candidate $u \in C_{R'}$ such that $u.c = \mathit{HEADS}$.
    Candidate $u$ is contained in the fused region and does not revoke its candidacy.
    Thus, each region in $\mathcal{R}_{t+1}$ contains at least one candidate.
\end{proof}

\begin{proof}[Proof of Lemma~\ref{lem:alignment2}]
    Let $C_t$ denote the candidates over all regions after the $t$-th iteration. For each candidate $u \in C_t$, we define
    \begin{equation*}
        Y_u =
        \begin{cases}
            1, & \text{if } u \in C_{t+1} \\
            0, & \text{otherwise}.
        \end{cases}
    \end{equation*}
    Let $R_u$ be the region that contains $u$.
    First, consider the case that the coin toss of region $R_u$ has not been successful.
    The candidacy of $u$ is revoked if $u$ has tossed $\mathit{TAILS}$.
    Therefore, we have
    \begin{equation*}
        \Pr[Y_u = 0 \mid \text{coin toss of region $R_u$ is not successful}] = \frac{1}{2}.
    \end{equation*}
    Next, consider the case that the coin toss of region $R_u$ has been successful.
    The candidacy of $u$ is revoked if $u$ has tossed $\mathit{TAILS}$ and if at least one neighbor of $R_u$ has not tossed $\mathit{TAILS}$.
    The former is given by $\Pr[u.c = \mathit{TAILS}] = \frac{1}{2}$.
    The latter is given by $\Pr[\operatorname N'(R_u) \neq \emptyset] \geq \frac{1}{2}$.
    Note that $\operatorname N(R) \neq \emptyset$.
    Otherwise, $|C| = 1$ already holds.
    Therefore, we have
    \begin{equation*}
        \Pr[Y_u = 0 \mid \text{coin toss of region $R_u$ is successful}] \geq \frac{1}{4}.
    \end{equation*}
    Overall, we obtain $\Pr[Y_u = 0] \geq \frac{1}{4}$ and $\Pr[Y_u = 1] \leq \frac{3}{4}$.
    Then, $\E[Y_u] \leq \frac{3}{4}$.
    
    Let $X_t = |C_t| - 1$ denote the number of candidates after the $t$-th iteration (minus one).
    The value for iteration $t+1$ can be expressed by $X_{t+1} = \sum_{u \in C_t}Y_u - 1$.
    Then,
    \begin{align*}
        \E[X_{t+1} \mid X_t] &= \sum_{u \in C_t} \E[Y_u] - 1 \\
        &\leq \sum_{u \in C_t} \frac{3}{4} - 1 \\
        &= \frac{3}{4} |C_t| - 1 \\
        &= \frac{3}{4} (X_t + 1) - 1 \\
        &= \frac{3}{4} X_t - \frac{1}{4} \\
        &\leq \frac{3}{4} X_t
    \intertext{and}
        \E[X_{t+2} \mid X_t] &= \sum_X \E[X_{t+2} \mid X_{t+1}=X] \cdot \Pr[X_{t+1} = X \mid X_t] \\
        &\leq \sum_X \frac{3}{4} X \cdot \Pr[X_{t+1} = X \mid X_t] \\
        &= \frac{3}{4} \sum_X X \cdot \Pr[X_{t+1} = X \mid X_t] \\
        &= \frac{3}{4} \E[X_{t+1} \mid X_t] \\
        &\leq \left(\frac{3}{4}\right)^2 X_t.
    \end{align*}
    By continuing the argument, we obtain for any constant $c > 1$
    \begin{equation*}
        \E[X_{t+c \log n} \mid X_t] \leq \left(\frac{3}{4}\right)^{c\log n} X_t \leq \left(\frac{3}{4}\right)^{c\log n} n \leq n^{-c'}
    \end{equation*}
    for some $c' = \Theta(c)$.
    For the second inequality we are utilizing the fact that $X_t \leq n$.
    
    By applying the Markov inequality, we obtain $\Pr[X_{t + c \log n} \geq 1] \leq n^{-c'}$.
    Note that a single iteration requires $O(1)$ rounds.
    Thus, $|\bigcup_{R \in \mathcal R} C_R| = 1$ holds after $O(\log n)$ rounds, w.h.p.
\end{proof}

\section{Parallelograms with Specific Side Ratio}
\label{app:parallelogram}

In this appendix, we study the recognition of parallelograms with specific side ratio.
We assume that the corners of the parallelogram coincide with the vertices in $G_{eqt} = (V, E)$, and that the sides of the parallelogram are in parallel to the axes.
An amoebot structure $S$ forms a parallelogram $\mathcal P$ iff $\mathcal P \cap V = S$.
First, we elaborate the difference to the shapes composed of triangles in Section~\ref{subsec:recognition}.
At first glance, one could think that our definition of parallelograms is only a subset of the shapes composed of triangles since parallelograms can be constructed with triangles.
However, we only require that the corners of the parallelogram coincide with the vertices in $G_{eqt}$ instead of all corners of the triangles (see Figure~\ref{fig:shape_def}).
Thus, our definition of parallelograms is not a subset of the shapes composed of triangles.

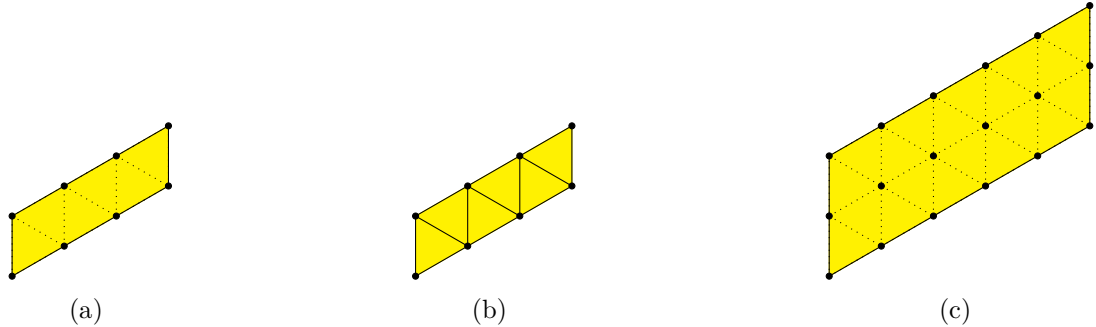
\begin{figure}[htp]
    \centering
    \begin{subfigure}[b]{0.25\textwidth}
        \centering
        \begin{tikzpicture}
            \tikzmath{
                \scale = .8;
                \t = \scale/2;
                \h = sqrt(\scale^2 - (\scale/2)^2);
                {
                    \filldraw[yellow]
                    (0*\h, 0*\t) --
                    (0*\h, 2*\t) --
                    (3*\h, 5*\t) --
                    (3*\h, 3*\t) --
                    cycle;
                    \draw
                    (0*\h, 0*\t) --
                    (0*\h, 2*\t) --
                    (3*\h, 5*\t) --
                    (3*\h, 3*\t) --
                    cycle;
                };
                for \j in {0,1,...,3}{
                    for \i in {0,2}{
                        {
                            \filldraw (\j*\h, \i*\t+\j*\t) circle (0.04);
                        };
                    };
                    {
                        \draw[dotted] (\j*\h, 0*\t+\j*\t) -- (\j*\h, 2*\t+\j*\t);
                    };
                };
                for \j in {0,1,...,2}{
                    for \i in {0,2}{
                        {
                            \draw[dotted] (\j*\h, \i*\t+\j*\t) -- (\j*\h+\h, \i*\t+\j*\t+\t);
                        };
                    };
                    for \i in {2}{
                        {
                            \draw[dotted] (\j*\h, \i*\t+\j*\t) -- (\j*\h+\h, \i*\t+\j*\t-\t);
                        };
                    };
                };
            }
        \end{tikzpicture}
        \caption{\centering}
    \end{subfigure}
    \hfill
    \begin{subfigure}[b]{0.25\textwidth}
        \centering
        \begin{tikzpicture}
            \tikzmath{
                \scale = .8;
                \t = \scale/2;
                \h = sqrt(\scale^2 - (\scale/2)^2);
                {
                    \filldraw[yellow]
                    (0*\h, 0*\t) --
                    (0*\h, 2*\t) --
                    (3*\h, 5*\t) --
                    (3*\h, 3*\t) --
                    cycle;
                };
                for \j in {0,1,...,3}{
                    for \i in {0,2}{
                        {
                            \filldraw (\j*\h, \i*\t+\j*\t) circle (0.04);
                        };
                    };
                    {
                        \draw (\j*\h, 0*\t+\j*\t) -- (\j*\h, 2*\t+\j*\t);
                    };
                };
                for \j in {0,1,...,2}{
                    for \i in {0,2}{
                        {
                            \draw (\j*\h, \i*\t+\j*\t) -- (\j*\h+\h, \i*\t+\j*\t+\t);
                        };
                    };
                    for \i in {2}{
                        {
                            \draw (\j*\h, \i*\t+\j*\t) -- (\j*\h+\h, \i*\t+\j*\t-\t);
                        };
                    };
                };
            }
        \end{tikzpicture}
        \caption{\centering}
    \end{subfigure}
    \hfill
    \begin{subfigure}[b]{0.35\textwidth}
        \centering
        \begin{tikzpicture}
            \tikzmath{
                \scale = .8;
                \t = \scale/2;
                \h = sqrt(\scale^2 - (\scale/2)^2);
                {
                    \filldraw[yellow]
                    (0*\h, 0*\t) --
                    (0*\h, 4*\t) --
                    (5*\h, 9*\t) --
                    (5*\h, 5*\t) --
                    cycle;
                    \draw
                    (0*\h, 0*\t) --
                    (0*\h, 4*\t) --
                    (5*\h, 9*\t) --
                    (5*\h, 5*\t) --
                    cycle;
                };
                for \j in {0,1,...,5}{
                    for \i in {0,2,4}{
                        {
                            \filldraw (\j*\h, \i*\t+\j*\t) circle (0.04);
                        };
                    };
                    {
                        \draw[dotted] (\j*\h, 0*\t+\j*\t) -- (\j*\h, 4*\t+\j*\t);
                    };
                };
                for \j in {0,1,...,4}{
                    for \i in {0,2,4}{
                        {
                            \draw[dotted] (\j*\h, \i*\t+\j*\t) -- (\j*\h+\h, \i*\t+\j*\t+\t);
                        };
                    };
                    for \i in {2,4}{
                        {
                            \draw[dotted] (\j*\h, \i*\t+\j*\t) -- (\j*\h+\h, \i*\t+\j*\t-\t);
                        };
                    };
                };
            }
        \end{tikzpicture}
        \caption{\centering}
    \end{subfigure}
    \caption{
        Consider a parallelogram with height $h$ and length $l = 2 \cdot h + 1$.
        (a)~shows the parallelogram with height $h = 1$.
        (b) shows the same parallelogram as a shape composed of triangles.
        (c) shows the parallelogram with height $h = 2$.
        However, we cannot scale the shape of (b) to this parallelogram.
    }
    \label{fig:shape_def}
\end{figure}

Our algorithms make use of the global circuit and circuits along the axes.
These are explained in Sections~\ref{sec:leader} and~\ref{subsec:recognition}.
As an intermediate step, we consider the detection of a parallelogram with arbitrary side ratio.
Many basic geometric shapes can be easily detected by an amoebot structure.

\begin{lemma}
\label{lem:parallelogram}
    An amoebot structure can detect a parallelogram with arbitrary side ratio.
    A constant number of rounds is required.
\end{lemma}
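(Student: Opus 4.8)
The plan is to reduce the recognition of a parallelogram to a purely local classification of neighborhoods, followed by a constant-round verification of how the corners are arranged along the boundary. Throughout I rely only on primitives established earlier: neighbor message exchange and the global circuit (Section~\ref{sec:leader}), circuits built from \emph{antipodal} pin connections along the axes (Section~\ref{subsec:recognition}), and the $O(1)$ consensus primitive. All of these are chirality-free, since the relevant notions (which neighboring nodes are occupied, which two bonds are collinear, which bond is antipodal to which) are physical and do not depend on any local labeling.

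First I would have every amoebot learn, in $O(1)$ rounds, which of its six neighbors are occupied and classify itself as \emph{interior} (all six present), \emph{edge} (a straight boundary segment along one axis), a \emph{$60^\circ$-corner}, a \emph{$120^\circ$-corner}, or \emph{invalid} (anything else: a reflex turn, a bump, a node of degree $\le 1$, or the boundary of a hole). Any invalid amoebot beeps on the global circuit, and the structure is rejected if a beep is heard. The key geometric fact to establish here is that, for a connected structure, passing this test forces $S$ to be a non-degenerate convex lattice polygon all of whose sides are parallel to the axes: local convexity at every boundary node rules out holes and concavities, so the boundary is a single convex cycle whose interior angles are each $60^\circ$ or $120^\circ$.

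Next I classify which convex axis-parallel polygons can occur. Writing $a$ and $b$ for the numbers of $60^\circ$- and $120^\circ$-corners, the exterior-angle sum gives $2a+b=6$, whose solutions are exactly the triangle ($a{=}3$), the parallelogram and the trapezoid (both $a{=}2, b{=}2$), the pentagon ($a{=}1$), and the hexagon ($a{=}0$). The crux is that the parallelogram is singled out among these by two \emph{local-to-the-corners} conditions: (T1)~every $60^\circ$-corner is flanked along the boundary by two $120^\circ$-corners, and (T2)~every $120^\circ$-corner is flanked by two $60^\circ$-corners. A short case check over the five polygon types shows that (T1)$\wedge$(T2) holds precisely for the parallelogram: the triangle and trapezoid fail (T1), while the pentagon and hexagon fail (T2). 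I would stress that the trapezoid and pentagon are the delicate cases, and that it is essential to use \emph{both} conditions rather than a mere corner count, which beeps cannot compute.

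Finally I verify (T1) and (T2) in $O(1)$ rounds. For this I build one circuit per boundary \emph{side}: each edge node joins its two collinear (along-the-side) pins, turning each maximal straight side into a single circuit whose only boundary participants are its two endpoint corners; a corner wires each of its two incident side-bonds into the corresponding side circuit while keeping the two circuits separate, so one pin per bond suffices. Over two rounds each corner announces its type on each incident side circuit (beep in round~$1$ if it is a $60^\circ$-corner, in round~$2$ if it is a $120^\circ$-corner) and reads off the far corner's type from the round in which it itself stays silent. Each corner then checks its instance of (T1)/(T2) locally and beeps on the global circuit upon any violation; the structure is accepted iff no amoebot beeped in either the classification phase or the corner phase. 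Since each phase uses a constant number of rounds and no chirality assumption, this detects an arbitrary parallelogram in $O(1)$ rounds. The main work in turning this sketch into a full proof is the geometric bookkeeping of the first step (local convexity $\Rightarrow$ convex axis-parallel polygon) together with the exhaustive but routine case analysis justifying that (T1)$\wedge$(T2) characterizes exactly the parallelograms.
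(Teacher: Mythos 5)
Your proposal is correct and follows the same two-stage architecture as the paper's proof: a local classification of $1$-neighborhoods that forces the structure to be a convex axis-aligned lattice polygon (rejecting anything with a reflex corner, pinch point, or hole), followed by a constant-round circuit test that singles out the parallelogram among the surviving convex types. Where you differ is in the second stage. The paper lets the $60^\circ$-corners (nodes with $2$ contiguous neighbors) beep on the full straight-line circuits along the axes and accepts iff both corner types are present and every $120^\circ$-corner receives a beep on two of its axis circuits; you instead build one circuit per maximal boundary side, exchange corner types between the two endpoints of each side, and check the mutual flanking conditions (T1) and (T2). Both tests correctly reject the triangle, trapezoid, pentagon and hexagon, and your version has the advantage of making the case analysis over the five convex types (via $2a+b=6$) explicit, which the paper leaves implicit. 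One small discrepancy: the paper explicitly \emph{accepts} degenerate parallelograms (a single amoebot, or a line, detected via the ``single neighbor'' and ``two opposing neighbors'' types), whereas you classify nodes of degree $\le 1$ as invalid and would reject them; since the lemma is later invoked for non-degenerate parallelograms this is harmless, but to match the paper's statement you would add the paper's type~(i)/(vi) branch for lines. The ``geometric bookkeeping'' you defer (local convexity of every boundary node implies a simple convex boundary and no holes) is genuinely needed --- it follows from a turning-number argument, since every admissible boundary type contributes a nonnegative turn, which is incompatible with the $-360^\circ$ total turn around a hole --- but it is the same obligation the paper's proof silently carries.
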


\begin{proof}
    We utilize an arbitrary number of pins.
    A single amoebot can locally determine that it is a (degenerate) parallelogram.
    Suppose there are more than one amoebot.
    We gather information about the neighborhoods of the amoebots by assigning a round to each type.
    Each amoebot activates the global circuit in the corresponding round.
    We distinguish between the following types of neighborhoods: (i)~a single neighbor, (ii) 2 contiguous neighbors, (iii) 3 contiguous neighbors, (iv) 4 contiguous neighbors, (v) 6 neighbors, (vi) 2 opposing neighbors, and (vii) all remaining neighborhoods.
    
    Type (vii) nodes rule out a parallelogram immediately.
    If the structure has type (i) nodes, then all nodes have to be of type (i) or (vi).
    In this case the amoebot structure is a degenerate parallelogram.
    If the structure has no type (i) nodes, then all nodes have to be of type (ii), (iii), (iv) or (v) and there has to be a node of both types (ii) and (iii).
    In this case the type (ii) nodes activate their circuits along the axes.
    If the type (iii) nodes receive a beep on two circuits along the axes, then the structure is a parallelogram.
\end{proof}

\begin{remark}
    Lines, triangles, hexagons and trapezoids can be detected in a similar fashion.
\end{remark}

We are able to adopt results for shape recognition in a related model \cite{DBLP:conf/mfcs/GmyrHKKRS18}:
A single robot with a constant number of pebbles is placed on a finite and connected subgraph of the infinite regular triangular grid graph $G_{eqt} = (V,E)$ induced by some subset $T \subseteq V$.
It can move on the subgraph and carry a set of pebbles which can be placed on and removed from nodes.
A node can contain at most one pebble.
The robot can sense the state of adjacent nodes.

Algorithms for this model can be easily simulated within the amoebot model.
An amoebot is placed on each node of the subgraph.
The current position of the robot, the number of the pebbles it carries and the marked nodes can be stored locally within the amoebots.
The state of the adjacent nodes is given by messages from neighboring amoebots.
The computational power of the robot and an amoebot are equivalent.
Note that an initial position for the robot has to be determined.
For that, we perform a leader election (see Section~\ref{sec:leader}).
This takes $\Theta(\log n)$ rounds, w.h.p.

W.l.o.g., we adopt the point of view of~\cite{DBLP:conf/mfcs/GmyrHKKRS18} and assume that the parallelogram is aligned to the axes in northern and north-eastern direction.
However, we define the length of a side by the number of edges instead of the number of nodes.
Let $h$ denote the number of edges in northern direction (height) and $l$ the number of edges in the north-eastern direction (length) where $h \leq l$ holds.

We move the robot to a corner of type (iii) (see Lemma~\ref{lem:parallelogram}).
W.l.o.g., we assume that the robot is moved to the north-western corner.
For this purpose, we utilize circuits along the axes.
If the robot is not positioned on the boundary, we let the amoebot that simulates the robot split one of the circuits along the axes and activate one of the resulting parts.
The new position of the robot is given by the amoebot at the boundary that receives the beep.
If the robot is positioned on the boundary but not on a corner, we let the amoebot that simulates the robot split the circuits along the boundary and activate one of the resulting parts.
The new position of the robot is given by the amoebot on the corner that receives the beep.
If the robot is positioned on a corner of type (ii), we let the amoebot that simulates the robot activate one of its circuits along the boundary.
The new position of the robot is given by the amoebot at the other end of the circuit.
This is a corner of type (iii).
Altogether, this takes $O(1)$ rounds.

\begin{theorem}
\label{th:linear}
    The amoebot structure can detect a parallelogram with linear side ratio, i.e., $l = ah + b$ where $a,b \in \N$ are constants.
    It requires $O(a+b)=O(1)$ rounds after determining an initial postion for the robot.
    Altogether, it requires $\Theta(\log n)$ rounds, w.h.p.
\end{theorem}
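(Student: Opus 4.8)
The plan is to reduce the statement to a pure measuring task: once the structure is known to be a parallelogram and the robot sits on a fixed corner, verify $l = ah + b$. First I would run the arbitrary–side–ratio test of Lemma~\ref{lem:parallelogram} in $O(1)$ rounds and reject immediately if it fails. The initial position of the robot is fixed by a leader election (Theorem~\ref{th:leader_election}), costing $\Theta(\log n)$ rounds, w.h.p., and this is the only superconstant contribution. Using the axis circuits exactly as in the relocation argument preceding the theorem, I would move the robot to a type-(iii) corner, w.l.o.g.\ the north-western one. Fix the frame so that the south-western corner is $\mathbf 0$, the north-western corner is $h\mathbf n$ (the robot's position), the south-eastern corner is $l\mathbf e$, and the north-eastern corner is $h\mathbf n + l\mathbf e$, where $\mathbf n,\mathbf e$ are the two primitive side directions (height $h$ along $\mathbf n$, length $l$ along $\mathbf e$) and $\mathbf d = \mathbf e - \mathbf n$ is the third primitive grid direction.

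The key geometric observation is that a line along the third axis transfers one copy of the height onto the length coordinate: starting from a top-side point $h\mathbf n + p\mathbf e$ and moving $h$ steps in direction $\mathbf d = \mathbf e - \mathbf n$ lands on the bottom-side point $(p+h)\mathbf e$, advancing the length coordinate by exactly $h$. Since $\mathbf d$ is a grid axis, this line is a circuit along the axes, and so is the vertical line in direction $\mathbf n$ from a bottom point $p\mathbf e$ to the top point $h\mathbf n + p\mathbf e$. I would therefore ``bounce'' the robot between the two length-sides: from its current boundary position it splits the appropriate axis circuit, activates the half pointing into the structure, and relocates to the unique boundary amoebot that receives the beep, which is precisely the one-step relocation primitive used above. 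Each bounce costs $O(1)$ rounds, and an $\mathbf n$-bounce followed by a $\mathbf d$-bounce moves the robot from bottom length $p$ to bottom length $p+h$.

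Concretely I would start at the north-western corner (top length $0$), take one $\mathbf d$-bounce to reach bottom length $h$, and then repeat the pair ($\mathbf n$-bounce, $\mathbf d$-bounce) a further $a-1$ times, so that after $a$ diagonal bounces the robot sits at bottom length $ah$. Finally the robot walks $b$ steps along the bottom side in direction $\mathbf e$ and accepts iff it arrives exactly at the south-eastern corner $l\mathbf e$, that is, iff $ah + b = l$. If any bounce fails to land where a genuine $l = ah+b$ parallelogram would place it (e.g.\ a circuit beep returns from a boundary reached too early, or the final $b$-walk over- or undershoots the corner), the robot rejects. There are $a + (a-1) + b = O(a+b) = O(1)$ bounces and walk steps, each of $O(1)$ rounds, so the measurement after positioning runs in $O(1)$ rounds; combined with the leader election the total is $\Theta(\log n)$, w.h.p.

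The step I expect to be the main obstacle is soundness rather than completeness: I must argue that every parallelogram with $l \neq ah+b$ is rejected. This amounts to showing that the bouncing construction either overshoots the length side (when $l < ah$, so some intermediate target $kh$ exceeds $l$ and a circuit beep emanating into the structure returns from a boundary reached before the expected opposite side) or lands the final $b$-walk off the south-eastern corner (when $l \neq ah+b$), and that both failure modes are locally detectable. The diagonal identity $\mathbf d = \mathbf e - \mathbf n$ together with the corner-type information from Lemma~\ref{lem:parallelogram} (which lets the robot distinguish a premature boundary hit from the intended interior-to-boundary landing) are the facts that make this case analysis go through.
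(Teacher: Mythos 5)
Your proposal is correct and follows essentially the same route as the paper: both verify the parallelogram via Lemma~\ref{lem:parallelogram}, place the robot at a type-(iii) corner after an $\Theta(\log n)$-round leader election, and then simulate the zig-zag measuring walk of Gmyr et al., accelerated to $O(1)$ rounds per leg by splitting and beeping on circuits along the axes, finishing with $b$ boundary steps and a corner-arrival check. The only differences are cosmetic (you terminate the walk at the south-eastern rather than the north-eastern corner), though you could additionally note, as the paper does, that no common chirality or compass is needed because each bounce direction is determined by mirroring at the boundary.
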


\begin{proof}
    We utilize an arbitrary number of pins.
    Suppose that the amoebot structure is indeed a non-degenerate parallelogram and that it has already verified this fact by Lemma~\ref{lem:parallelogram}.
    Otherwise, a negative result can be returned.
    Determine an initial position for the robot by leader election and move it to the north-western corner.

    We simulate and adjust the algorithm for parallelograms with linear side ratio stated by~\cite{DBLP:conf/mfcs/GmyrHKKRS18}:
    The robot moves $ah$ nodes to the north-eastern direction by performing a zig-zag pattern.
    Thereby, the following loop is executed $a$ times:
    (i) move south-east until the southern side is reached, and (ii) move north until the northern side is reached.
    Afterwards additional $b$ steps in north-eastern direction are made.
    The robot returns a negative answer if it is unable to perform a movement at any point of time or if it has not arrived at the north-eastern corner of the structure at the end of the algorithm.
    Figure~\ref{fig:linear} illustruates the algorithm.

    An amoebot structure with circuits along the axes is able to simply transfer the robot to the other side of the parallelogram instead of moving it node by node.
    This reduces the running time to $O(a+b)$.
    
    Note that the amoebots do neither have to agree on a common chirality nor on a common orientation.
    While performing the zig-zag pattern, the direction of the simulated robot is mirrored on the boundary.
    The additional $b$ steps along the boundary are performed into the direction that only requires a rotation by $60^\circ$.
\end{proof}

\begin{figure}[htp]
    \centering
    \begin{subfigure}[b]{0.45\textwidth}
        \centering
        \begin{tikzpicture}
            \tikzmath{
                \scale = .75;
                \t = \scale/2;
                \h = sqrt(\scale^2 - (\scale/2)^2);
                {
                    \filldraw[yellow]
                    (0*\h, 0*\t) --
                    (0*\h, 6*\t) --
                    (8*\h, 14*\t) --
                    (8*\h, 8*\t) --
                    cycle;
                };
                for \j in {0,1,...,8}{
                    for \i in {0,2,4,6}{
                        {
                            \filldraw (\j*\h, \i*\t+\j*\t) circle (0.04);
                        };
                    };
                    {
                        \draw[dotted] (\j*\h, 0*\t+\j*\t) -- (\j*\h, 6*\t+\j*\t);
                    };
                };
                for \j in {0,1,...,7}{
                    for \i in {0,2,4,6}{
                        {
                            \draw[dotted] (\j*\h, \i*\t+\j*\t) -- (\j*\h+\h, \i*\t+\j*\t+\t);
                        };
                    };
                    for \i in {2,4,6}{
                        {
                            \draw[dotted] (\j*\h, \i*\t+\j*\t) -- (\j*\h+\h, \i*\t+\j*\t-\t);
                        };
                    };
                };
                {
                    \draw[thick, ->, color = red] (0*\h, 6*\t) -- (3*\h, 3*\t);
                    \draw[thick, ->, color = red] (3*\h, 3*\t) -- (3*\h, 9*\t);
                    \draw[thick, ->, color = red] (3*\h, 9*\t) -- (6*\h, 6*\t);
                    \draw[thick, ->, color = red] (6*\h, 6*\t) -- (6*\h, 12*\t);
                    \draw[thick, ->, color = green] (6*\h, 12*\t) -- (7*\h, 13*\t);
                    \draw[thick, ->, color = green] (7*\h, 13*\t) -- (8*\h, 14*\t);
                };
            }
        \end{tikzpicture}
        \caption{\centering}
        \label{fig:linear}
    \end{subfigure}
    \hfill
    \begin{subfigure}[b]{0.45\textwidth}
        \centering
        \begin{tikzpicture}
            \def\s{.6};  % scale
            
        % subfigure 1
            
            % active amoebots
            \def\i{0};
            \def\j{0};
            \fill[yellow]
            ({\s/2+\i*\s*3/2+\j*\s*3/2},{heightEqt(\s)*\i-heightEqt(\s)*\j}) --
            ({\s)/2*3+\i*\s*3/2+\j*\s*3/2},{heightEqt(\s)*\i-heightEqt(\s)*\j}) --
            ({\s*2+\i*\s*3/2+\j*\s*3/2},{heightEqt(\s)+heightEqt(\s)*\i-heightEqt(\s)*\j}) --
            ({\s)/2*3+\i*\s*3/2+\j*\s*3/2},{heightEqt(\s)*2+heightEqt(\s)*\i-heightEqt(\s)*\j}) --
            ({\s/2+\i*\s*3/2+\j*\s*3/2},{heightEqt(\s)*2+heightEqt(\s)*\i-heightEqt(\s)*\j}) --
            ({\i*\s*3/2+\j*\s*3/2},{heightEqt(\s)+heightEqt(\s)*\i-heightEqt(\s)*\j}) -- cycle;
            
            \def\i{-2};
            \def\j{2};
            \fill[yellow]
            ({\s/2+\i*\s*3/2+\j*\s*3/2},{heightEqt(\s)*\i-heightEqt(\s)*\j}) --
            ({\s)/2*3+\i*\s*3/2+\j*\s*3/2},{heightEqt(\s)*\i-heightEqt(\s)*\j}) --
            ({\s*2+\i*\s*3/2+\j*\s*3/2},{heightEqt(\s)+heightEqt(\s)*\i-heightEqt(\s)*\j}) --
            ({\s)/2*3+\i*\s*3/2+\j*\s*3/2},{heightEqt(\s)*2+heightEqt(\s)*\i-heightEqt(\s)*\j}) --
            ({\s/2+\i*\s*3/2+\j*\s*3/2},{heightEqt(\s)*2+heightEqt(\s)*\i-heightEqt(\s)*\j}) --
            ({\i*\s*3/2+\j*\s*3/2},{heightEqt(\s)+heightEqt(\s)*\i-heightEqt(\s)*\j}) -- cycle;
            
            \def\i{-4};
            \def\j{4};
            \fill[yellow]
            ({\s/2+\i*\s*3/2+\j*\s*3/2},{heightEqt(\s)*\i-heightEqt(\s)*\j}) --
            ({\s)/2*3+\i*\s*3/2+\j*\s*3/2},{heightEqt(\s)*\i-heightEqt(\s)*\j}) --
            ({\s*2+\i*\s*3/2+\j*\s*3/2},{heightEqt(\s)+heightEqt(\s)*\i-heightEqt(\s)*\j}) --
            ({\s)/2*3+\i*\s*3/2+\j*\s*3/2},{heightEqt(\s)*2+heightEqt(\s)*\i-heightEqt(\s)*\j}) --
            ({\s/2+\i*\s*3/2+\j*\s*3/2},{heightEqt(\s)*2+heightEqt(\s)*\i-heightEqt(\s)*\j}) --
            ({\i*\s*3/2+\j*\s*3/2},{heightEqt(\s)+heightEqt(\s)*\i-heightEqt(\s)*\j}) -- cycle;

            \foreach \j in {0,1,...,4} {
                \def\i{-\j};
                
                % amoebot
                \draw[black, thick]
                ({\s/2+\i*\s*3/2+\j*\s*3/2},{heightEqt(\s)*\i-heightEqt(\s)*\j}) --
                ({\s)/2*3+\i*\s*3/2+\j*\s*3/2},{heightEqt(\s)*\i-heightEqt(\s)*\j}) --
                ({\s*2+\i*\s*3/2+\j*\s*3/2},{heightEqt(\s)+heightEqt(\s)*\i-heightEqt(\s)*\j}) --
                ({\s)/2*3+\i*\s*3/2+\j*\s*3/2},{heightEqt(\s)*2+heightEqt(\s)*\i-heightEqt(\s)*\j}) --
                ({\s/2+\i*\s*3/2+\j*\s*3/2},{heightEqt(\s)*2+heightEqt(\s)*\i-heightEqt(\s)*\j}) --
                ({\i*\s*3/2+\j*\s*3/2},{heightEqt(\s)+heightEqt(\s)*\i-heightEqt(\s)*\j}) -- cycle;
            
                % pins
                \filldraw ({\s/2+\s/3+\i*\s*3/2+\j*\s*3/2},{heightEqt(\s)*\i-heightEqt(\s)*\j}) circle (0.04);
                \filldraw ({\s/2+\s/3*2+\i*\s*3/2+\j*\s*3/2},{heightEqt(\s)*\i-heightEqt(\s)*\j}) circle (0.04);
                
                \filldraw ({\s/2+\s/3+\i*\s*3/2+\j*\s*3/2},{heightEqt(\s)*2+heightEqt(\s)*\i-heightEqt(\s)*\j}) circle (0.04);
                \filldraw ({\s/2+\s/3*2+\i*\s*3/2+\j*\s*3/2},{heightEqt(\s)*2+heightEqt(\s)*\i-heightEqt(\s)*\j}) circle (0.04);
            }
            
            % connections
            \def\i{0};
            \def\j{0};
            \draw
            ({\s/2+\s/3+\i*\s*3/2+\j*\s*3/2},{heightEqt(\s)*\i-heightEqt(\s)*\j})
            -- ({\s/2+\s/3*2+\i*\s*3/2+\j*\s*3/2},{heightEqt(\s)*2+heightEqt(\s)*\i-heightEqt(\s)*\j});
            \draw[dashed]
            ({\s/2+\s/3*2+\i*\s*3/2+\j*\s*3/2},{heightEqt(\s)*\i-heightEqt(\s)*\j})
            -- ({\s/2+\s/3+\i*\s*3/2+\j*\s*3/2},{heightEqt(\s)*2+heightEqt(\s)*\i-heightEqt(\s)*\j});
            
            \def\i{-1};
            \def\j{1};
            \draw[dashed]
            ({\s/2+\s/3+\i*\s*3/2+\j*\s*3/2},{heightEqt(\s)*\i-heightEqt(\s)*\j})
            -- ({\s/2+\s/3*2+\i*\s*3/2+\j*\s*3/2},{heightEqt(\s)*2+heightEqt(\s)*\i-heightEqt(\s)*\j});
            \draw
            ({\s/2+\s/3*2+\i*\s*3/2+\j*\s*3/2},{heightEqt(\s)*\i-heightEqt(\s)*\j})
            -- ({\s/2+\s/3+\i*\s*3/2+\j*\s*3/2},{heightEqt(\s)*2+heightEqt(\s)*\i-heightEqt(\s)*\j});
            
            \def\i{-2};
            \def\j{2};
            \draw
            ({\s/2+\s/3+\i*\s*3/2+\j*\s*3/2},{heightEqt(\s)*\i-heightEqt(\s)*\j})
            -- ({\s/2+\s/3*2+\i*\s*3/2+\j*\s*3/2},{heightEqt(\s)*2+heightEqt(\s)*\i-heightEqt(\s)*\j});
            \draw[dashed]
            ({\s/2+\s/3*2+\i*\s*3/2+\j*\s*3/2},{heightEqt(\s)*\i-heightEqt(\s)*\j})
            -- ({\s/2+\s/3+\i*\s*3/2+\j*\s*3/2},{heightEqt(\s)*2+heightEqt(\s)*\i-heightEqt(\s)*\j});
            
            \def\i{-3};
            \def\j{3};
            \draw[dashed]
            ({\s/2+\s/3+\i*\s*3/2+\j*\s*3/2},{heightEqt(\s)*\i-heightEqt(\s)*\j})
            -- ({\s/2+\s/3*2+\i*\s*3/2+\j*\s*3/2},{heightEqt(\s)*2+heightEqt(\s)*\i-heightEqt(\s)*\j});
            \draw
            ({\s/2+\s/3*2+\i*\s*3/2+\j*\s*3/2},{heightEqt(\s)*\i-heightEqt(\s)*\j})
            -- ({\s/2+\s/3+\i*\s*3/2+\j*\s*3/2},{heightEqt(\s)*2+heightEqt(\s)*\i-heightEqt(\s)*\j});
            
            \def\i{-4};
            \def\j{4};
            \draw
            ({\s/2+\s/3+\i*\s*3/2+\j*\s*3/2},{heightEqt(\s)*\i-heightEqt(\s)*\j})
            -- ({\s/2+\s/3*2+\i*\s*3/2+\j*\s*3/2},{heightEqt(\s)*2+heightEqt(\s)*\i-heightEqt(\s)*\j});
            \draw[dashed]
            ({\s/2+\s/3*2+\i*\s*3/2+\j*\s*3/2},{heightEqt(\s)*\i-heightEqt(\s)*\j})
            -- ({\s/2+\s/3+\i*\s*3/2+\j*\s*3/2},{heightEqt(\s)*2+heightEqt(\s)*\i-heightEqt(\s)*\j});
            
            \def\i{-4};
            \def\j{4};
            \draw[black, below] ({\s/2+\s/2+\i*\s*3/2+\j*\s*3/2},{heightEqt(\s)*\i-heightEqt(\s)*\j-.1}) node {$i = 1$};
                
        % subfigure 2
                
            % active amoebots
            \def\i{0};
            \def\j{0};
            \fill[yellow]
            ({\s/2+\i*\s*3/2+\j*\s*3/2+4*\s},{heightEqt(\s)*\i-heightEqt(\s)*\j}) --
            ({\s)/2*3+\i*\s*3/2+\j*\s*3/2+4*\s},{heightEqt(\s)*\i-heightEqt(\s)*\j}) --
            ({\s*2+\i*\s*3/2+\j*\s*3/2+4*\s},{heightEqt(\s)+heightEqt(\s)*\i-heightEqt(\s)*\j}) --
            ({\s)/2*3+\i*\s*3/2+\j*\s*3/2+4*\s},{heightEqt(\s)*2+heightEqt(\s)*\i-heightEqt(\s)*\j}) --
            ({\s/2+\i*\s*3/2+\j*\s*3/2+4*\s},{heightEqt(\s)*2+heightEqt(\s)*\i-heightEqt(\s)*\j}) --
            ({\i*\s*3/2+\j*\s*3/2+4*\s},{heightEqt(\s)+heightEqt(\s)*\i-heightEqt(\s)*\j}) -- cycle;
            
            \def\i{-4};
            \def\j{4};
            \fill[yellow]
            ({\s/2+\i*\s*3/2+\j*\s*3/2+4*\s},{heightEqt(\s)*\i-heightEqt(\s)*\j}) --
            ({\s)/2*3+\i*\s*3/2+\j*\s*3/2+4*\s},{heightEqt(\s)*\i-heightEqt(\s)*\j}) --
            ({\s*2+\i*\s*3/2+\j*\s*3/2+4*\s},{heightEqt(\s)+heightEqt(\s)*\i-heightEqt(\s)*\j}) --
            ({\s)/2*3+\i*\s*3/2+\j*\s*3/2+4*\s},{heightEqt(\s)*2+heightEqt(\s)*\i-heightEqt(\s)*\j}) --
            ({\s/2+\i*\s*3/2+\j*\s*3/2+4*\s},{heightEqt(\s)*2+heightEqt(\s)*\i-heightEqt(\s)*\j}) --
            ({\i*\s*3/2+\j*\s*3/2+4*\s},{heightEqt(\s)+heightEqt(\s)*\i-heightEqt(\s)*\j}) -- cycle;
            
            \foreach \j in {0,1,...,4} {
                \def\i{-\j};
                
                % amoebot
                \draw[black, thick]
                ({\s/2+\i*\s*3/2+\j*\s*3/2+4*\s},{heightEqt(\s)*\i-heightEqt(\s)*\j}) --
                ({\s)/2*3+\i*\s*3/2+\j*\s*3/2+4*\s},{heightEqt(\s)*\i-heightEqt(\s)*\j}) --
                ({\s*2+\i*\s*3/2+\j*\s*3/2+4*\s},{heightEqt(\s)+heightEqt(\s)*\i-heightEqt(\s)*\j}) --
                ({\s)/2*3+\i*\s*3/2+\j*\s*3/2+4*\s},{heightEqt(\s)*2+heightEqt(\s)*\i-heightEqt(\s)*\j}) --
                ({\s/2+\i*\s*3/2+\j*\s*3/2+4*\s},{heightEqt(\s)*2+heightEqt(\s)*\i-heightEqt(\s)*\j}) --
                ({\i*\s*3/2+\j*\s*3/2+4*\s},{heightEqt(\s)+heightEqt(\s)*\i-heightEqt(\s)*\j}) -- cycle;
            
                % pins
                \filldraw ({\s/2+\s/3+\i*\s*3/2+\j*\s*3/2+4*\s},{heightEqt(\s)*\i-heightEqt(\s)*\j}) circle (0.04);
                \filldraw ({\s/2+\s/3*2+\i*\s*3/2+\j*\s*3/2+4*\s},{heightEqt(\s)*\i-heightEqt(\s)*\j}) circle (0.04);
                
                \filldraw ({\s/2+\s/3+\i*\s*3/2+\j*\s*3/2+4*\s},{heightEqt(\s)*2+heightEqt(\s)*\i-heightEqt(\s)*\j}) circle (0.04);
                \filldraw ({\s/2+\s/3*2+\i*\s*3/2+\j*\s*3/2+4*\s},{heightEqt(\s)*2+heightEqt(\s)*\i-heightEqt(\s)*\j}) circle (0.04);
            }
            
            % connections
            \def\i{0};
            \def\j{0};
            \draw
            ({\s/2+\s/3+\i*\s*3/2+\j*\s*3/2+4*\s},{heightEqt(\s)*\i-heightEqt(\s)*\j})
            -- ({\s/2+\s/3*2+\i*\s*3/2+\j*\s*3/2+4*\s},{heightEqt(\s)*2+heightEqt(\s)*\i-heightEqt(\s)*\j});
            \draw[dashed]
            ({\s/2+\s/3*2+\i*\s*3/2+\j*\s*3/2+4*\s},{heightEqt(\s)*\i-heightEqt(\s)*\j})
            -- ({\s/2+\s/3+\i*\s*3/2+\j*\s*3/2+4*\s},{heightEqt(\s)*2+heightEqt(\s)*\i-heightEqt(\s)*\j});
            
            \def\i{-1};
            \def\j{1};
            \draw
            ({\s/2+\s/3+\i*\s*3/2+\j*\s*3/2+4*\s},{heightEqt(\s)*\i-heightEqt(\s)*\j})
            -- ({\s/2+\s/3+\i*\s*3/2+\j*\s*3/2+4*\s},{heightEqt(\s)*2+heightEqt(\s)*\i-heightEqt(\s)*\j});
            \draw[dashed]
            ({\s/2+\s/3*2+\i*\s*3/2+\j*\s*3/2+4*\s},{heightEqt(\s)*\i-heightEqt(\s)*\j})
            -- ({\s/2+\s/3*2+\i*\s*3/2+\j*\s*3/2+4*\s},{heightEqt(\s)*2+heightEqt(\s)*\i-heightEqt(\s)*\j});
            
            \def\i{-2};
            \def\j{2};
            \draw[dashed]
            ({\s/2+\s/3+\i*\s*3/2+\j*\s*3/2+4*\s},{heightEqt(\s)*\i-heightEqt(\s)*\j})
            -- ({\s/2+\s/3*2+\i*\s*3/2+\j*\s*3/2+4*\s},{heightEqt(\s)*2+heightEqt(\s)*\i-heightEqt(\s)*\j});
            \draw
            ({\s/2+\s/3*2+\i*\s*3/2+\j*\s*3/2+4*\s},{heightEqt(\s)*\i-heightEqt(\s)*\j})
            -- ({\s/2+\s/3+\i*\s*3/2+\j*\s*3/2+4*\s},{heightEqt(\s)*2+heightEqt(\s)*\i-heightEqt(\s)*\j});
            
            \def\i{-3};
            \def\j{3};
            \draw[dashed]
            ({\s/2+\s/3+\i*\s*3/2+\j*\s*3/2+4*\s},{heightEqt(\s)*\i-heightEqt(\s)*\j})
            -- ({\s/2+\s/3+\i*\s*3/2+\j*\s*3/2+4*\s},{heightEqt(\s)*2+heightEqt(\s)*\i-heightEqt(\s)*\j});
            \draw
            ({\s/2+\s/3*2+\i*\s*3/2+\j*\s*3/2+4*\s},{heightEqt(\s)*\i-heightEqt(\s)*\j})
            -- ({\s/2+\s/3*2+\i*\s*3/2+\j*\s*3/2+4*\s},{heightEqt(\s)*2+heightEqt(\s)*\i-heightEqt(\s)*\j});
            
            \def\i{-4};
            \def\j{4};
            \draw
            ({\s/2+\s/3+\i*\s*3/2+\j*\s*3/2+4*\s},{heightEqt(\s)*\i-heightEqt(\s)*\j})
            -- ({\s/2+\s/3*2+\i*\s*3/2+\j*\s*3/2+4*\s},{heightEqt(\s)*2+heightEqt(\s)*\i-heightEqt(\s)*\j});
            \draw[dashed]
            ({\s/2+\s/3*2+\i*\s*3/2+\j*\s*3/2+4*\s},{heightEqt(\s)*\i-heightEqt(\s)*\j})
            -- ({\s/2+\s/3+\i*\s*3/2+\j*\s*3/2+4*\s},{heightEqt(\s)*2+heightEqt(\s)*\i-heightEqt(\s)*\j});
            
            \def\i{-4};
            \def\j{4};
            \draw[black, below] ({\s/2+\s/2+\i*\s*3/2+\j*\s*3/2+4*\s},{heightEqt(\s)*\i-heightEqt(\s)*\j-.1}) node {$i = 2$};
            
        % subfigure 3
                
            % active amoebots
            \def\i{0};
            \def\j{0};
            \fill[yellow]
            ({\s/2+\i*\s*3/2+\j*\s*3/2+8*\s},{heightEqt(\s)*\i-heightEqt(\s)*\j}) --
            ({\s)/2*3+\i*\s*3/2+\j*\s*3/2+8*\s},{heightEqt(\s)*\i-heightEqt(\s)*\j}) --
            ({\s*2+\i*\s*3/2+\j*\s*3/2+8*\s},{heightEqt(\s)+heightEqt(\s)*\i-heightEqt(\s)*\j}) --
            ({\s)/2*3+\i*\s*3/2+\j*\s*3/2+8*\s},{heightEqt(\s)*2+heightEqt(\s)*\i-heightEqt(\s)*\j}) --
            ({\s/2+\i*\s*3/2+\j*\s*3/2+8*\s},{heightEqt(\s)*2+heightEqt(\s)*\i-heightEqt(\s)*\j}) --
            ({\i*\s*3/2+\j*\s*3/2+8*\s},{heightEqt(\s)+heightEqt(\s)*\i-heightEqt(\s)*\j}) -- cycle;
            
            \foreach \j in {0,1,...,4} {
                \def\i{-\j};
                
                % amoebot
                \draw[black, thick]
                ({\s/2+\i*\s*3/2+\j*\s*3/2+8*\s},{heightEqt(\s)*\i-heightEqt(\s)*\j}) --
                ({\s)/2*3+\i*\s*3/2+\j*\s*3/2+8*\s},{heightEqt(\s)*\i-heightEqt(\s)*\j}) --
                ({\s*2+\i*\s*3/2+\j*\s*3/2+8*\s},{heightEqt(\s)+heightEqt(\s)*\i-heightEqt(\s)*\j}) --
                ({\s)/2*3+\i*\s*3/2+\j*\s*3/2+8*\s},{heightEqt(\s)*2+heightEqt(\s)*\i-heightEqt(\s)*\j}) --
                ({\s/2+\i*\s*3/2+\j*\s*3/2+8*\s},{heightEqt(\s)*2+heightEqt(\s)*\i-heightEqt(\s)*\j}) --
                ({\i*\s*3/2+\j*\s*3/2+8*\s},{heightEqt(\s)+heightEqt(\s)*\i-heightEqt(\s)*\j}) -- cycle;
            
                % pins
                \filldraw ({\s/2+\s/3+\i*\s*3/2+\j*\s*3/2+8*\s},{heightEqt(\s)*\i-heightEqt(\s)*\j}) circle (0.04);
                \filldraw ({\s/2+\s/3*2+\i*\s*3/2+\j*\s*3/2+8*\s},{heightEqt(\s)*\i-heightEqt(\s)*\j}) circle (0.04);
                
                \filldraw ({\s/2+\s/3+\i*\s*3/2+\j*\s*3/2+8*\s},{heightEqt(\s)*2+heightEqt(\s)*\i-heightEqt(\s)*\j}) circle (0.04);
                \filldraw ({\s/2+\s/3*2+\i*\s*3/2+\j*\s*3/2+8*\s},{heightEqt(\s)*2+heightEqt(\s)*\i-heightEqt(\s)*\j}) circle (0.04);
            }
            
            % connections
            \def\i{0};
            \def\j{0};
            \draw
            ({\s/2+\s/3+\i*\s*3/2+\j*\s*3/2+8*\s},{heightEqt(\s)*\i-heightEqt(\s)*\j})
            -- ({\s/2+\s/3*2+\i*\s*3/2+\j*\s*3/2+8*\s},{heightEqt(\s)*2+heightEqt(\s)*\i-heightEqt(\s)*\j});
            \draw[dashed]
            ({\s/2+\s/3*2+\i*\s*3/2+\j*\s*3/2+8*\s},{heightEqt(\s)*\i-heightEqt(\s)*\j})
            -- ({\s/2+\s/3+\i*\s*3/2+\j*\s*3/2+8*\s},{heightEqt(\s)*2+heightEqt(\s)*\i-heightEqt(\s)*\j});
            
            \def\i{-1};
            \def\j{1};
            \draw
            ({\s/2+\s/3+\i*\s*3/2+\j*\s*3/2+8*\s},{heightEqt(\s)*\i-heightEqt(\s)*\j})
            -- ({\s/2+\s/3+\i*\s*3/2+\j*\s*3/2+8*\s},{heightEqt(\s)*2+heightEqt(\s)*\i-heightEqt(\s)*\j});
            \draw[dashed]
            ({\s/2+\s/3*2+\i*\s*3/2+\j*\s*3/2+8*\s},{heightEqt(\s)*\i-heightEqt(\s)*\j})
            -- ({\s/2+\s/3*2+\i*\s*3/2+\j*\s*3/2+8*\s},{heightEqt(\s)*2+heightEqt(\s)*\i-heightEqt(\s)*\j});
            
            \def\i{-2};
            \def\j{2};
            \draw
            ({\s/2+\s/3+\i*\s*3/2+\j*\s*3/2+8*\s},{heightEqt(\s)*\i-heightEqt(\s)*\j})
            -- ({\s/2+\s/3+\i*\s*3/2+\j*\s*3/2+8*\s},{heightEqt(\s)*2+heightEqt(\s)*\i-heightEqt(\s)*\j});
            \draw[dashed]
            ({\s/2+\s/3*2+\i*\s*3/2+\j*\s*3/2+8*\s},{heightEqt(\s)*\i-heightEqt(\s)*\j})
            -- ({\s/2+\s/3*2+\i*\s*3/2+\j*\s*3/2+8*\s},{heightEqt(\s)*2+heightEqt(\s)*\i-heightEqt(\s)*\j});
            
            \def\i{-3};
            \def\j{3};
            \draw
            ({\s/2+\s/3+\i*\s*3/2+\j*\s*3/2+8*\s},{heightEqt(\s)*\i-heightEqt(\s)*\j})
            -- ({\s/2+\s/3+\i*\s*3/2+\j*\s*3/2+8*\s},{heightEqt(\s)*2+heightEqt(\s)*\i-heightEqt(\s)*\j});
            \draw[dashed]
            ({\s/2+\s/3*2+\i*\s*3/2+\j*\s*3/2+8*\s},{heightEqt(\s)*\i-heightEqt(\s)*\j})
            -- ({\s/2+\s/3*2+\i*\s*3/2+\j*\s*3/2+8*\s},{heightEqt(\s)*2+heightEqt(\s)*\i-heightEqt(\s)*\j});
            
            \def\i{-4};
            \def\j{4};
            \draw
            ({\s/2+\s/3+\i*\s*3/2+\j*\s*3/2+8*\s},{heightEqt(\s)*\i-heightEqt(\s)*\j})
            -- ({\s/2+\s/3+\i*\s*3/2+\j*\s*3/2+8*\s},{heightEqt(\s)*2+heightEqt(\s)*\i-heightEqt(\s)*\j});
            \draw[dashed]
            ({\s/2+\s/3*2+\i*\s*3/2+\j*\s*3/2+8*\s},{heightEqt(\s)*\i-heightEqt(\s)*\j})
            -- ({\s/2+\s/3*2+\i*\s*3/2+\j*\s*3/2+8*\s},{heightEqt(\s)*2+heightEqt(\s)*\i-heightEqt(\s)*\j});
            
            \def\i{-4};
            \def\j{4};
            \draw[black, below] ({\s/2+\s/2+\i*\s*3/2+\j*\s*3/2+8*\s},{heightEqt(\s)*\i-heightEqt(\s)*\j-.1}) node {$i = 3$};
        \end{tikzpicture}
        \caption{\centering}
        \label{fig:add_tree}
    \end{subfigure}
    \caption{
        (a) Let $a = 2$ and $b = 2$.
        The parallelogram with height $h = 3$ and length $l = 2 \cdot 3 + 2 = 8$ is aligned to the axes in northern and north-eastern direction.
        The robot starts at the north-western corner.
        The red arrows indicate the zig-zag pattern.
        The green arrows indicate the additional $b$ steps.
        (b) shows the binary add tree structure.
        W.l.o.g., the first pin of each bond is depicted on the left.
        The solid line indicates the active circuit.
        The dashed line indicates the inactive circuit.
        The yellow amoebots indicate set $L_i$.
    }
\end{figure}
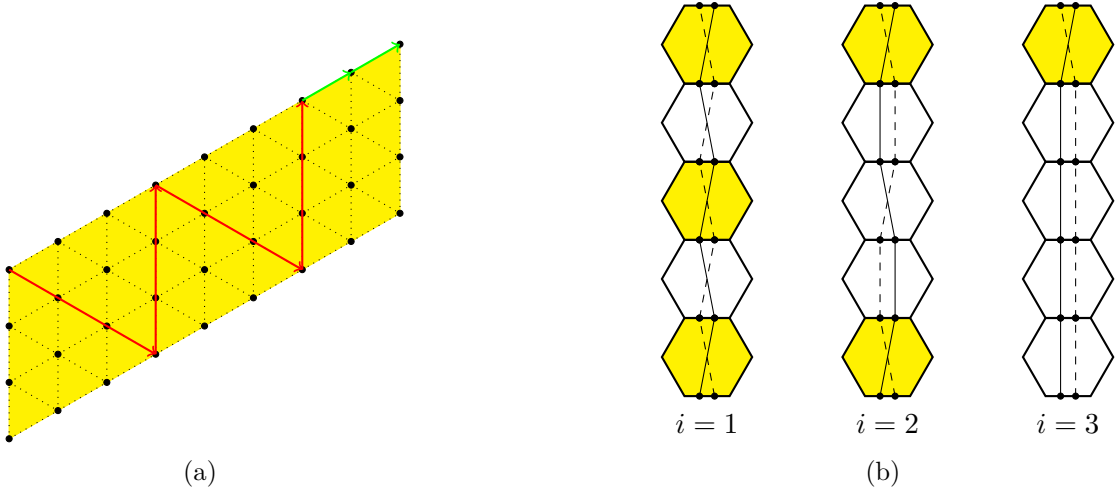

\begin{remark}
    The same adaption for $b \in \Z$ can be applied as stated by~\cite{DBLP:conf/mfcs/GmyrHKKRS18}.
    The additional $|b|$ steps are performed after the first loop if $|b| < h$ holds.
    The function values for small $h = O(|b|)$ can be encoded within the finite state machines of the amoebots and be checked priorly.
\end{remark}

\begin{theorem}
\label{th:polynomial}
    The amoebot structure can detect a parallelogram with polynomial side ratio, i.e., $l = p(h)$ where $p(\cdot)$ is a polynomial of constant degree $d$.
    It requires $\Theta(\log n)$ rounds, w.h.p.
\end{theorem}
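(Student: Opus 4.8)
The plan is to reduce the problem to verifying the single arithmetic relation $l = p(h)$ on a structure that has already been certified to be a non-degenerate parallelogram, and then to evaluate the constant-degree polynomial $p$ by binary arithmetic carried out in parallel over the whole structure. First I would run the parallelogram test of Lemma~\ref{lem:parallelogram} and, if it succeeds, determine an initial position by a leader election (Section~\ref{sec:leader}) and move the designated amoebot to the north-western corner exactly as in the proof of Theorem~\ref{th:linear}; both steps cost $\Theta(\log n)$ rounds, w.h.p., and the leader election also supplies the $\Omega(\log n)$ lower bound that makes the bound tight. It is worth recording at the outset that, since the number of nodes satisfies $n = \Theta(h \cdot l) = \Theta(h^{d+1})$, both $h$ and $l$ have only $\Theta(\log n)$ bits; this is what keeps every binary operation cheap, and the finitely many cases with $h = O(1)$ can be hard-coded into the finite state machines.

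The second ingredient is to convert the two side lengths from their \emph{unary} geometric encoding (a line of $h+1$, respectively $l+1$, amoebots) into binary. For this I would use the binary add tree of Figure~\ref{fig:add_tree}: along a single axis the amoebots are recursively grouped into blocks of sizes $1, 2, 4, \dots$, and at level $i$ the circuits restricted to the axis let the representatives $L_i$ of the surviving blocks detect whether the residual count is odd, which yields bit $i$ of the length. Because there are only $\lceil \log_2 h \rceil = O(\log n)$ levels and each level reconfigures the axis circuits and beeps a constant number of times, the binary representations of $h$ and $l$ become available after $O(\log n)$ rounds, each bit stored in one amoebot along the corresponding side.

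With $h$ in binary I would evaluate $p(h) = (\cdots(a_d h + a_{d-1})h + \cdots)h + a_0$ by Horner's scheme, which uses only a constant number (depending on $d$) of binary additions, multiplications by the fixed constants $a_i$, and multiplications by $h$. The decisive point is that each such operation on $\Theta(\log n)$-bit operands can be implemented as a bounded-fan-in arithmetic circuit of depth $O(\log n)$ laid out on the two-dimensional interior of the parallelogram: ripple-carry addition has depth linear in the bit length, and the product $h \cdot v$ is computed by an array (carry-save) multiplier whose critical path is again linear in the bit length. Since a combinational circuit of depth $\Delta$ is simulated by the reconfigurable circuits in $\Delta$ rounds, and the structure hosts $n = \omega(\log^2 n)$ amoebots in which to place the $O(\log^2 n)$ gates, each operation costs $O(\log n)$ rounds and the whole evaluation costs $O(\log n)$ rounds. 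Finally I would compare the resulting binary value of $p(h)$ with the binary representation of $l$ bitwise along a circuit and return a positive answer iff they agree; the symmetry of the axis circuits (Remark~\ref{rem:onepin}) and the mirroring trick from Theorem~\ref{th:linear} let all of this run without a common chirality or compass.

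The main obstacle I anticipate is the parallel binary multiplication: I must lay out a shallow multiplier on the grid using only local bonds, route its carries and partial products through axis-aligned circuits that respect the boundary (and any holes), and argue that its depth --- hence its round complexity --- stays $O(\log n)$ rather than the $O(\log^2 n)$ that a naive shift-and-add would incur. Establishing this layout and its depth bound, together with the correctness of the add tree for every $h$ (including the edge cases where a block is exhausted exactly at the boundary), is where the real work lies; the surrounding reductions and the final comparison are routine given Lemma~\ref{lem:parallelogram} and the primitives of Section~\ref{sec:leader}.
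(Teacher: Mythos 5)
Your overall scaffolding (Lemma~\ref{lem:parallelogram}, leader election, moving the designated amoebot to a corner, and the observation that small $h$ can be hard-coded) matches the paper, but from there you take a genuinely different route. The paper does \emph{not} convert $h$ and $l$ to binary or evaluate $p$ arithmetically. Following~\cite{DBLP:conf/mfcs/GmyrHKKRS18}, it rewrites $p(h)=\sum_i a_i (h)_i$ in falling factorials, uses that $g_i(h)=(h)_i/\lcm_i(h)$ is periodic with constant period $\lcm(1,\dots,i-1)$, and therefore only ever needs (i) the residues $h \bmod \lcm(1,\dots,i-1)$ --- sums modulo a \emph{constant}, which is exactly what the binary add tree of Figure~\ref{fig:add_tree} delivers in $O(\log n)$ rounds --- and (ii) a purely geometric marking of the multiples of $h-i$ along the northern side, obtained by a zig-zag circuit built from equilateral triangles of side $h-i$ (with a virtual copy for the interval $[l,2l]$). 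The pebble then makes only the constant number $|a_i\cdot g_i(h)|$ of hops between marked positions. All arithmetic on $\Theta(\log n)$-bit quantities is thereby avoided; the only numbers ever held in an amoebot's state are bounded by constants depending on $d$ and the $a_i$.

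This matters because the step you yourself flag as ``where the real work lies'' is a genuine gap, not a routine verification. Three concrete issues remain open in your plan. First, the paper's add tree computes a sum modulo a constant $k$ at the head of the chain; extracting the full $\Theta(\log n)$-bit binary representation of $h$ (and of $l$) and storing it distributed, one bit per amoebot, is a different protocol that you would have to specify, including which amoebot holds which bit and how those bits are later delivered to the rows and columns of your multiplier. Second, your carry-save array needs an $m\times m$ mesh with $m=\Theta(d\log h)$ embedded with constant congestion; this fits as a sub-mesh of the parallelogram only when $h+1\geq m$, i.e.\ $h\geq c\,d\log h$, so the set of exceptional heights you must hard-code is ``finitely many depending on $d$'' rather than just $h=O(1)$, and you must also argue that checking $l=p(h)$ for each such fixed $h$ takes $O(1)$ rounds. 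Third, Horner's scheme requires moving the $\Theta(\log n)$-bit intermediate value from the outputs of one multiplication to the inputs of the next (or reusing the array in place), which needs $\Theta(\log n)$ simultaneously disjoint circuits and a consistent local orientation of the array anchored at the leader. None of these appears fatal, but until the layout, routing, and depth claims are proved as a lemma, the proposal is a plausible alternative architecture rather than a proof; the paper's number-theoretic reduction is precisely the device that makes all of this unnecessary.
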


\begin{proof}
    Suppose that the amoebot structure is indeed a non-degenerate parallelogram and that it has already verified this fact by Lemma~\ref{lem:parallelogram}.
    Otherwise, a negative result can be returned.
    Determine an initial position for the robot by leader election and move it to the north-western corner.

    We make use of another algorithm stated by~\cite{DBLP:conf/mfcs/GmyrHKKRS18} that is outlined in the following to a level of detail sufficient to understand the correctness of our circuit based algorithm. 

    Let $(x)_i := x\cdot(x-1)\cdots(x-i+1)$ be the falling factorial of $x$, $\lcm_i(x) := \lcm(x,\dots,x-1+1)$ where $\lcm$ is the least common multiple and $g_i(x) := (x)_i/\lcm_i(x)$.
    Note that $\lcm_i(x) | (x)_i$ and $g_i(x)$ is periodic with period $\lcm(1,\dots,i-1)$~\cite{hong2008periodicity}.
    Thus, all possible values of $g_i(\cdot)$ for all $i \in \{ 0,\dots,d\}$ can be encoded within the finite state machine of the robot.
    By traversing the parallelogram once from north to south, it can determine values $g_i(h)$ for all $i \in \{0,\dots,d\}$.
    Thereafter, it moves back to the north-western corner.
    
    Then, the robot transforms the input polynomial into the form $p(h) = a_d \cdot (h)_d + a_{d-1} \cdot (h)_{d-1} + \dots + a_0$.
    Sequentially from $d$ to 0, it moves the pebble by $|a_i \cdot (h)_i|$ along the northern side.
    Note that $a_i \cdot (h)_i = a_i \cdot g_i(h) \cdot \lcm_i(h)$.
    The algorithm returns a positive result if the robot is located at the north-eastern corner at the end of the algorithm.

    During this process, the robot stays in the interval $[0, 2l]$ if $h$ is sufficiently large such that $|a_i \cdot (h)_i + \dots + a_0| \leq p(h)$ for all $i \in \{0,\dots,d\}$.
    The robot mirrors its movements for the second half of the interval so that for any value $l+i$, the pebble is positioned at node $l-i$.
    Whenever the robot tries to exceed the interval, a negative result can be returned.
    If $h$ is not sufficiently large, i.e., $h = O(\max_i(|a_i|))$, then the robot can compute $p(h)$ and thereby, check $l = p(h)$ directly.
    
    Next, we show how to accelerate the algorithm with the help of circuits.
    We assume that the amoebot structure agrees on a common chirality and on a common orientation.
    Otherwise, the chirality agreement and compass alignment algorithms are performed (see Sections~\ref{subsec:chirality} and~\ref{subsec:compass}).
    This requires $O(\log n)$ rounds, w.h.p.
    Contrary to the orignial algorithm,
    a virtual copy of the parallelogram is utilized for the interval $[l,2l]$ so that whenever the simulated robot trespasses the eastern side it reenters on the western side (see Figure~\ref{fig:polynomial}).
    
    Due to the periodicity of $g_i(\cdot)$, it is enough to determine $h'_i = h \mod \lcm(1,\dots,i-1)$ for all $i \in \{ 0,\dots,d\}$.
    These values can be computed in time $O(\log n)$ by establishing a \emph{binary add tree structure} iteratively.
    
    We describe the general idea of the technique.
    In a preparatory step, we apply the pin labeling primitive (see Section~\ref{sec:preliminaries}).
    This takes $\Theta(\log n)$ rounds, w.h.p.
    Consider any connected chain $C$ of $n$ amoebots where, initially, each amoebot $u$ holds an value $x_u \in \{ 0, \dots, k-1\}$ for some constant $k$.
    The goal is to compute the sum of all values modulo $k$, i.e., $x = \sum_{u\in C} x_u \mod k$.
    Each amoebot knows its predecessor and its successor in the chain.
    Let $L_0$ contain every amoebot of the chain.
    For each $i \geq 1$, let $L_i$ denote the set of amoebots holding partial sums of $x$ after the $i$-th iteration.
    
    Iteration $i \geq 1$ is performed as follows.
    Each amoebot $u \in L_{i-1}$ connects its pins crossed,
    i.e., the first pin to the predecessor to the second pin to the successor and vice versa.
    Each amoebot $u \not\in L_{i-1}$ connects its pins in parallel (see Figure~\ref{fig:add_tree}).
    Then, the first amoebot of the chain activates the circuit connected to the first pin to its successor.
    Each amoebot $u \in L_{i-1}$ that has received a beep on the circuit connected to its first pin to its successor is added to $L_i$.
    Obviously, $L_i$ contains every second amoebot of $L_{i-1}$.
    
    Thereafter, each amoebot $u \in L_{i-1}$ removes all of its connections.
    Consider $k$ subsequent rounds.
    Each amoebot $u \in L_{i-1} \setminus L_i$ beeps on the round $r = x_u + 1$ on the circuit connected to one of the pins of the predecessor.
    This signal is received by the preceding amoebot $u \in L_i$.
    Each amoebot $u \in L_i$ adds the respective value to its own.
    Clearly, the first amoebot of the chain holds the total sum after $O(\log n)$ rounds.
    
    Now, consider the western side as the chain (see Figure~\ref{fig:add_tree}).
    For each amoebot, we define the northern neighbor as the predecessor an the southern neighbor as the successor.
    Initially, each amoebot except the first one holds the value 1.
    The first amoebot is initialized with 0.
    We apply the binary add tree technique with $k = \lcm(1, \dots, i-1)$ for all $i \in \{ 0, \dots, d\}$ to compute $h'_i$ at the north-western corner of the parallelogram, which can compute the corresponding $g_i(h)$ values.
    This takes $O(d \cdot \log h) = O(\log n)$ rounds.
    
    It remains to show how to simulate the movement of the pebble by $|a_i \cdot g_i(h) \cdot \lcm_i(h)|$ nodes with the help of circuits.
    The multiples of $h-i$ for all $i \in \{0,\dots,d\}$ are identified during a preprocessing phase that, for each $i \in \{0,\dots,d\}$, constructs the following circuit:
    (i)~amoebots on the northern side connect the south-eastern pin to the southern pin,
    (ii)~amoebots with a distance greater than $i$ from the southern side connect the north-western pin to the south-eastern pin and the northern pin to the southern pin, and
    (iii) amoebots with distance $i$ from the southern side connect the north-western pin to the northern pin (see Figure~\ref{fig:polynomial}).
    Note that the information about the distance from the southern side is only required up to distance $d$.
    
    Then, the north-western corner activates its circuit to the south-east.
    Amoebots on the northern side that receive a signal are the multiples of $h-i$ since the circuit is composed of equilateral triangles of length $h-i$.
    To identify the multiples for the second half of the interval, i.e., $[l,2l]$, we have to continue the circuit on the virtual copy.
    First, the amoebot on the eastern side that received the beep activates the circuit along the north-eastern axis.
    Second, the amoebot on the western side that received the beep activates the circuit to the south-east.
    Again, amoebots on the northern side that receive a signal are the multiples of $h-i$.
    %The activation is transfered back to the western side where the corresponding circuit is activated to indentify the multiples for the second half of the interval, i.e., $[l,2l]$.
    The preprocessing takes $O(d)$ rounds.
    
    Finally, the pebble can be moved by $|a_i \cdot g_i(h) \cdot \lcm_i(h)|$ nodes as follows.
    Note that each amoebot can determine if its position relative to the western side is a multiple of $\lcm_i(h)$.
    A circuit along the northern side is split at these amoebots.
    The pebble is transfered along these circuits into the respective direction for $|a_i \cdot g_i(h)|$ times.
    Thus, $O(\sum_i |a_i \cdot g_i(\cdot)|)$ rounds are required.
\end{proof}

\begin{remark}
    We may utilize the same structure formed for the binary add tree as a \emph{binary search structure}.
    Let $k$ be constant.
    Set $L_0$ contains only the amoebots we are interested in.
    The remaining amoebots are just used to transfer the beep on the chain.
    Sets $L_i$ for all $i \in \{ 1, \dots, k\}$ are determined as described for the binary add tree structure.
    We can apply a binary search on these sets starting from set $L_k$.
    Note that due to the constant-size memory of the amoebots, $k$ has to be constant.
    This implies that $|L_0| \leq 2^k$ has to hold.
    
    The binary search structure can be utilized to improve the runtime of the presented algorithm in Theorem~\ref{th:linear}.
    Consider the zig-zag pattern in Figure~\ref{fig:linear}.
    Similar to the zig-zag patterns in Theorem~\ref{th:polynomial}, we can combine the path to a single chain.
    Let $L_0$ contain all amoebots on the northern side.
    The amoebot in the north-western corner may activate the circuits formed by the zig-zag pattern and along the north-eastern axis to determine the chain and the set $L_0$.
    Then, we use binary search to find the $a$-th element.
    The additional $b$ steps can be done analogously.
    The chain starts at the amoebot of distance $ah$ and contains all amoebots on the northern side into the respective direction.
    Note that $a$ and $b$ are constant such that $k = O(\log a + \log b) = O(1)$ iterations suffice.
    Alltogether, the runtime of Theorem~\ref{th:linear} improves to $O(\log a + \log b) = O(1)$.
    
    Furthermore, we may apply the binary search structure in the algorithm of Theorem~\ref{th:polynomial} to determine the $|a_i \cdot g_i(h)|$-th multiple.
    The chain starts at the current position of the pebble and contains all amoebots on the northern side into the respective direction.
    Set $L_0$ contains only the multiples.
    Again, note that $|a_i \cdot g_i(h)|$ is constant such that $k = O(\log |a_i \cdot g_i(h)|) = O(1)$ iterations suffice.
    Note, however, that does not lead to an improved overall runtime.
    
    %We believe that this structure can be of independent interest and can be utilized in a variety of other problems.
\end{remark}

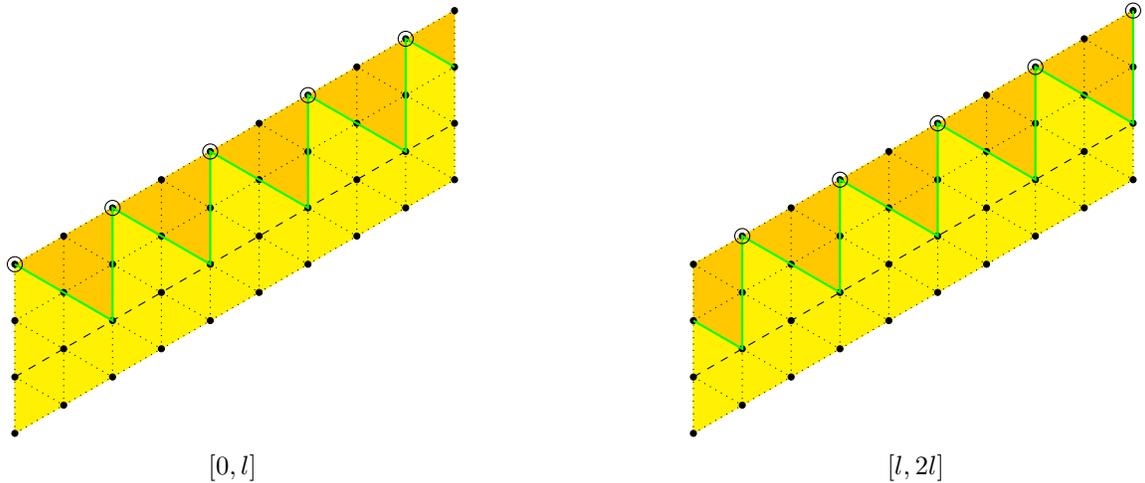
\begin{figure}[htp]
    \centering
    \begin{subfigure}[b]{0.45\textwidth}
        \centering
        \begin{tikzpicture}
            \tikzmath{
                \scale = .75;
                \t = \scale/2;
                \h = sqrt(\scale^2 - (\scale/2)^2);
                {
                    \filldraw[yellow]
                    (0*\h, 0*\t) --
                    (0*\h, 6*\t) --
                    (9*\h, 15*\t) --
                    (9*\h, 9*\t) --
                    cycle;
                    \fill[fill={rgb,255:red,255; green,200; blue,0}]
                    (0*\h, 6*\t) --
                    (2*\h, 4*\t) --
                    (2*\h, 8*\t) --
                    (4*\h, 6*\t) --
                    (4*\h, 10*\t) --
                    (6*\h, 8*\t) --
                    (6*\h, 12*\t) --
                    (8*\h, 10*\t) --
                    (8*\h, 14*\t) --
                    (9*\h, 13*\t) --
                    (9*\h, 15*\t) --
                    cycle;
                };
                for \j in {0,1,...,9}{
                    for \i in {0,2,4,6}{
                        {
                            \filldraw (\j*\h, \i*\t+\j*\t) circle (0.04);
                        };
                    };
                    {
                        \draw[dotted] (\j*\h, 0*\t+\j*\t) -- (\j*\h, 6*\t+\j*\t);
                    };
                };
                for \j in {0,1,...,8}{
                    for \i in {0,4,6}{
                        {
                            \draw[dotted] (\j*\h, \i*\t+\j*\t) -- (\j*\h+\h, \i*\t+\j*\t+\t);
                        };
                    };
                    for \i in {2,4,6}{
                        {
                            \draw[dotted] (\j*\h, \i*\t+\j*\t) -- (\j*\h+\h, \i*\t+\j*\t-\t);
                        };
                    };
                };
                {
                    \draw[dashed] (0*\h, 2*\t) -- (9*\h, 11*\t);
                    \draw[thick, color = green]
                    (0*\h, 6*\t) --
                    (2*\h, 4*\t) --
                    (2*\h, 8*\t) --
                    (4*\h, 6*\t) --
                    (4*\h, 10*\t) --
                    (6*\h, 8*\t) --
                    (6*\h, 12*\t) --
                    (8*\h, 10*\t) --
                    (8*\h, 14*\t) --
                    (9*\h, 13*\t);
                    \draw (0*\h, 6*\t) circle (0.1);
                    \draw (2*\h, 8*\t) circle (0.1);
                    \draw (4*\h, 10*\t) circle (0.1);
                    \draw (6*\h, 12*\t) circle (0.1);
                    \draw (8*\h, 14*\t) circle (0.1);
                };
            }
        \end{tikzpicture}
        \caption*{\centering $[0,l]$}
    \end{subfigure}
    \hfill
    \begin{subfigure}[b]{0.45\textwidth}
        \centering
        \begin{tikzpicture}
            \tikzmath{
                \scale = .75;
                \t = \scale/2;
                \h = sqrt(\scale^2 - (\scale/2)^2);
                {
                    \filldraw[yellow]
                    (0*\h, 0*\t) --
                    (0*\h, 6*\t) --
                    (9*\h, 15*\t) --
                    (9*\h, 9*\t) --
                    cycle;
                    \fill[fill={rgb,255:red,255; green,200; blue,0}]
                    (0*\h, 6*\t) --
                    (0*\h, 4*\t) --
                    (1*\h, 3*\t) --
                    (1*\h, 7*\t) --
                    (3*\h, 5*\t) --
                    (3*\h, 9*\t) --
                    (5*\h, 7*\t) --
                    (5*\h, 11*\t) --
                    (7*\h, 9*\t) --
                    (7*\h, 13*\t) --
                    (9*\h, 11*\t) --
                    (9*\h, 15*\t) --
                    cycle;
                };
                for \j in {0,1,...,9}{
                    for \i in {0,2,4,6}{
                        {
                            \filldraw (\j*\h, \i*\t+\j*\t) circle (0.04);
                        };
                    };
                    {
                        \draw[dotted] (\j*\h, 0*\t+\j*\t) -- (\j*\h, 6*\t+\j*\t);
                    };
                };
                for \j in {0,1,...,8}{
                    for \i in {0,4,6}{
                        {
                            \draw[dotted] (\j*\h, \i*\t+\j*\t) -- (\j*\h+\h, \i*\t+\j*\t+\t);
                        };
                    };
                    for \i in {2,4,6}{
                        {
                            \draw[dotted] (\j*\h, \i*\t+\j*\t) -- (\j*\h+\h, \i*\t+\j*\t-\t);
                        };
                    };
                };
                {
                    \draw[dashed] (0*\h, 2*\t) -- (9*\h, 11*\t);
                    \draw[thick, color = green]
                    (0*\h, 4*\t) --
                    (1*\h, 3*\t) --
                    (1*\h, 7*\t) --
                    (3*\h, 5*\t) --
                    (3*\h, 9*\t) --
                    (5*\h, 7*\t) --
                    (5*\h, 11*\t) --
                    (7*\h, 9*\t) --
                    (7*\h, 13*\t) --
                    (9*\h, 11*\t) --
                    (9*\h, 15*\t);
                    \draw (1*\h, 7*\t) circle (0.1);
                    \draw (3*\h, 9*\t) circle (0.1);
                    \draw (5*\h, 11*\t) circle (0.1);
                    \draw (7*\h, 13*\t) circle (0.1);
                    \draw (9*\h, 15*\t) circle (0.1);
                };
            }
        \end{tikzpicture}
        \caption*{\centering $[l,2l]$}
    \end{subfigure}
    \caption{
        Let $p(h) = 3 \cdot h \cdot (h-1) - 2 \cdot h - 3$.
        A parallelogram with height $h = 3$ and length $l = p(3) = 9$ is given.
        The green line indicates the activated circuit.
        The dashed line indicates distance $i$.
        The equilateral triangles are depicted in orange.
        The multiples of $h-i = 2$ are marked by circles.
    }
    \label{fig:polynomial}
\end{figure}

\section{Universal Shape Recognition}
\label{app:recognition}

We give a detailed analysis on the universal shape recognition algorithm described in Section~\ref{subsec:recognition}.
Consider the first phase.
Figure~\ref{fig:message1} shows an exemplary message.
The following lemma implies the correctness of the phase.

\begin{lemma}
\label{lem:usr:gathering1}
    Each amoebot knows the positions of all amoebots up to distance $i + 1$ after the $i$-th iteration.
    In particular, each amoebot knows the positions of all amoebots up to distance $3 \cdot |\mathcal S| + 1$ after the $3 \cdot |\mathcal S|$-th iteration.
\end{lemma}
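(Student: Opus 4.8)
The plan is to prove the statement by induction on the iteration index $i$, where throughout "distance" denotes hop distance $\operatorname{dist}_{G_S}(\cdot,\cdot)$ in the induced graph $G_S$, since information can propagate only along bonds. The invariant I would maintain is: after iteration $i$, every amoebot $v$ stores, expressed in its own local coordinate frame, the exact grid position of every amoebot $w$ with $\operatorname{dist}_{G_S}(v,w) \le i+1$, and it records no amoebot at any position that is not genuinely occupied within that radius. The "in particular" part then follows at once by instantiating $i = 3\cdot|\mathcal S|$, since $|\mathcal S|$, and hence the radius $3\cdot|\mathcal S|+1$, is a constant.

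For the base case $i = 0$, i.e.\ the initial state before any message is sent, each amoebot trivially knows its own position (distance $0$), and because a bond is formed exactly between adjacent occupied nodes, it knows which of its six incident grid cells are occupied and in which direction each occupied neighbor lies (distance $1$). Hence it already knows all amoebot positions up to distance $1 = 0+1$, establishing the invariant.

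For the inductive step I would assume the invariant after iteration $i$ and consider iteration $i+1$, in which each amoebot $u$ transmits to every neighbor $v$ the truth value of each candidate position up to distance $i+1$, encoded relative to the shared edge $\{u,v\}$ as in Figure~\ref{fig:message1}. Let $w$ be any amoebot with $\operatorname{dist}_{G_S}(v,w) = i+2$ and fix a shortest $v$--$w$ path $v = p_0, p_1, \dots, p_{i+2} = w$. Then $u := p_1$ is a neighbor of $v$, and the suffix $p_1,\dots,p_{i+2}$ is a walk of length $i+1$, certifying $\operatorname{dist}_{G_S}(u,w) \le i+1$; by the induction hypothesis $u$ already knows the position of $w$ after iteration $i$ and therefore includes it in its iteration-$(i+1)$ message to $v$. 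Conversely, any position reported by a neighbor $u$ is an actual amoebot at distance $\le i+1$ from $u$, hence at distance $\le i+2$ from $v$, so no spurious positions enter. Merging the messages received from all neighbors with its stored map, $v$ thus knows exactly the amoebot positions up to distance $i+2 = (i+1)+1$, re-establishing the invariant. I would add one sentence noting that the occupancy (rather than merely presence) of the surrounding cells is carried along the same way, since each amoebot also knows and forwards the emptiness of its six incident cells, which is what the first-phase representation test actually consumes.

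The main obstacle I anticipate is not the hop-distance bookkeeping but the coordinate consistency when $v$ merges the maps relayed by several neighbors and forwards them further. Here I would invoke the standing common-chirality assumption of this section together with the message-transmission primitive of Section~\ref{sec:preliminaries}: since $u$ knows $w$ relative to itself and can re-express it relative to the oriented edge $\{u,v\}$, and since $v$ reads that edge in its own frame with the same chirality, $v$ recovers $w$'s position relative to itself by a single grid-vector translation; because all reported coordinates describe one and the same underlying triangular grid, the maps obtained from distinct neighbors and across successive iterations agree wherever they overlap, so the merge is well defined. Finally, because the radius is the constant $3\cdot|\mathcal S|+1$, the number of candidate positions, and hence each message, has size $O(|\mathcal S|^2) = O(1)$, so every one of the $3\cdot|\mathcal S|$ iterations runs in $O(1)$ rounds via the primitive.
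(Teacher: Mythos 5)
Your proof is correct and follows essentially the same route as the paper's: an induction on the iteration index with the base case that each amoebot knows its own neighborhood, and the step that each amoebot forwards its known positions to all neighbors, extending the known radius by one per iteration. The shortest-path argument, the soundness direction, and the coordinate-frame discussion are just more explicit renderings of what the paper's terser proof leaves implicit.
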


\begin{proof}
    We prove the statement by induction.
    The induction base $i = 0$ holds trivially since each amoebot knows its own neighborhood, i.e., the positions of all amoebots up to distance $1$.
    Suppose that the statement holds up to the $i$-th iteration.
    Consider the $(i + 1)$-st iteration.
    By the induction hypothesis, each amoebot knows the positions of all amoebots up to distance $i + 1$.
    Each amoebot sends these positions to each neighbor.
    Each amoebot can combine the received positions to obtain the positions of all amoebots up to distance $i + 2$.
\end{proof}

Note that the amoebot structure is connected by assumption.
Each amoebot can directly compare the amoebot structure to the representations of the shape with $\sigma < 4$, which are encoded within the finite state machines of the amoebots.
Recall that $|\mathcal S|$ is constant.
Hence, the number of iterations and the size of the messages are bounded by a constant.
Thus, the first phase terminates after $O(1)$ rounds.

Next, consider the second phase.
Note that each amoebot knows its 2-neigh\-bor\-hood after the first phase.
They are therefore able to categorize themselves immediately in the second phase.
We utilize an additional round to notify the whole amoebot structure whether each amoebot was able to categorize itself by letting these activate the global circuit.
Thus, the second phase terminates after $O(1)$ rounds.
The following lemma proves the correctness of the phase.

\begin{lemma}
\label{lem:usr:boundary}
    The second phase behaves as follows.
    \begin{enumerate}
        \item[(i)] On any representation of the shape with $\sigma \geq 4$, the phase detects the boundary.
        \item[(ii)] On any representation of the shape with $\sigma < 4$, the phase behaves arbitrarily.
        \item[(iii)] On any other amoebot structure, the phase either fails to detect a boundary or detects a boundary unequal to the shape.
    \end{enumerate}
\end{lemma}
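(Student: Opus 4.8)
The plan is to argue entirely locally: the second phase is a map sending each amoebot's $2$-neighborhood either to one of the three classes (boundary corner, boundary edge, interior) or to ``undefined,'' and the detected boundary is the union of the incident edges the amoebots mark. I would first fix, for a representation $V(T(\mathcal S))$ with $\sigma \geq 4$, the geometric trichotomy of its occupied nodes: a node is a \emph{boundary corner} if it sits on a vertex of $T(\mathcal S)$ where $1,2,4$ or $5$ contiguous incident triangles lie in the shape, a \emph{boundary-edge node} if it lies in the relative interior of a boundary edge, and an \emph{interior node} otherwise. Every occupied node is exactly one of these, so it suffices to show that the $2$-neighborhood of a node determines which of the three it is, and that the resulting edge markings glue to the true boundary.

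For part~(i) the crux is the sub-claim that $\sigma \geq 4$ makes this local map injective on node-types. The threshold is exactly the statement that every triangular face of $T(\mathcal S)$ contains at least three strictly interior nodes, since a face of side length $\sigma$ has $\tfrac{(\sigma-1)(\sigma-2)}{2}$ interior nodes, which is $\geq 3$ iff $\sigma \geq 4$. I would use this to show that a boundary-edge node, whose two incident faces are large, sees within distance $2$ a clean half-plane pattern (filled on the shape side, empty on the other) that can be confused neither with a fully occupied interior neighborhood nor with any of the wedge patterns of Figure~\ref{fig:neighborhoods}; symmetrically, a corner always exhibits one of those wedge patterns and an interior node a fully occupied neighborhood. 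A short case check over the finitely many $2$-neighborhoods, up to rotation and reflection, then confirms that the three classes are pairwise disjoint and exhaustive for $\sigma \geq 4$. Since each amoebot knows its $2$-neighborhood from the first phase, every amoebot classifies itself correctly; and because a boundary edge $\{u,v\}$ is recognized consistently by both endpoints (their $2$-neighborhoods overlap on $\{u,v\}$ and its two incident triangles), the marked edges are precisely the boundary edges of the shape, proving~(i).

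Part~(ii) is vacuous: the claim only asserts ``arbitrary'' behavior for $\sigma < 4$, so there is nothing to establish. I would merely remark that Figures~\ref{fig:sigma0} and~\ref{fig:sigma1} exhibit the obstruction---for $\sigma = 2$ some corners fail to match any wedge pattern, and for $\sigma = 3$ some interior nodes fail to look interior---which is exactly why no guarantee is possible below the threshold, and recall that the first phase has already removed all such small-scale representations.

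For part~(iii) I would split on whether the phase halts successfully. If some amoebot's $2$-neighborhood matches none of the allowed patterns, the phase reports failure, the first alternative. Otherwise every amoebot classifies and a boundary is detected; here I argue by contradiction that this boundary cannot equal the boundary of any representation of $\mathcal S$ with $\sigma \geq 4$. If it did, the interior classifications, marching inward from the detected boundary, would force every node enclosed by it to be occupied with a full neighborhood, while any occupied node just outside would violate the boundary-edge pattern of an adjacent node; hence $S$ would coincide with that representation, contradicting that we are in the ``other'' case. The main obstacle is precisely this last step: local validity of all $2$-neighborhoods does not by itself certify a global representation, and I expect to need the fact---made rigorous together with Observation~\ref{obs:incident_edges} and partly deferred to phases three and four, as the surrounding text notes---that a closed boundary together with the forced interior fill determines the enclosed region uniquely. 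Keeping the write-up honest about what phase~2 alone guarantees versus what is only caught later is the delicate part.
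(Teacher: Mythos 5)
Your proposal is correct and follows essentially the same route as the paper's (much terser) proof: part (i) by exhaustiveness and disjointness of the $2$-neighborhood classes, with $\sigma \geq 4$ justified by each triangle having at least three inner nodes; part (ii) vacuously; and part (iii) by contradiction, observing that any occupied/unoccupied discrepancy between $S$ and the supposed representation would force extra boundary edges or a misclassification. The only difference is that you supply the interior-node count $(\sigma-1)(\sigma-2)/2 \geq 3$ and flag the local-to-global gap explicitly, both of which the paper leaves implicit.
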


\begin{proof}
    (i) The statement holds by construction.
    The classes include all occurring 2-neigh\-bor\-hoods.
    Note that the classes are disjoint.
    (ii) The statement holds trivially.
    (iii) Suppose that the phase detects a boundary that is equal to the shape.
    Otherwise, the statement holds trivially.
    Since the amoebot structure is not a representation of the shape, either an amoebot has to be located outside the shape or an amoebot has to be missing within the shape.
    Both cases lead to additional boundary edges that contradict the equality.
\end{proof}

\begin{remark}
    We do not care about the outcome for representations of the shape with $\sigma < 4$ because these are eliminated during the first phase.
\end{remark}

Next, we prove the correctness and the runtime of the third phase.
The correctness is given by the following lemma.

\begin{lemma}
\label{lem:usr:triangulation}
    Algorithm~\ref{alg:triangulation} computes a minimal triangulation of each face.
\end{lemma}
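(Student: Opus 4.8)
The plan is to prove three facts about Algorithm~\ref{alg:triangulation}: that the set $V$ it produces is exactly the set of triangulation vertices of the face (the \emph{corners}), that the circuit segments running between consecutive vertices are exactly the edges of the scaled triangles, and that the resulting partition is the coarsest (minimal) one. I fix a face $F$ together with the minimal triangulation induced on it by the scaled triangles of $\mathcal S$ and call its vertices the corners of $F$; thus a corner is a grid node at which two triangulation edges lying on two different axes meet.

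First I would establish the \emph{soundness} direction: every node the algorithm places into $N$ (hence into $V$) is a corner, and every activated circuit segment is a triangulation edge. The key tool is Observation~\ref{obs:incident_edges}. Starting from any corner $u$, each of the three axis-circuits through $u$ runs along shape edges until it is cut at the boundary, so it is a union of triangulation edges (a \emph{maximal segment}). Consequently, whenever a corner in $V$ beeps, the signal propagates only along maximal segments, and any node receiving a beep on a given axis-circuit must already lie on a triangulation edge of that axis. A node receiving beeps on at least two axis-circuits therefore lies on two triangulation edges of different axes; since edges of a planar triangulation meet only at shared endpoints, such a node is a vertex, i.e.\ a corner. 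The same reasoning shows that the activated portion of a circuit between two consecutive identified corners is precisely one triangulation edge.

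Next comes the \emph{completeness} direction, which I expect to be the main obstacle: I must show every corner of $F$ eventually enters $V$, so the while-loop reconstructs the full triangulation before $N$ empties. The initialization seeds $V$ with the phase-two boundary corners, and thereafter a corner $c$ joins $V$ once at least two of its maximal segments each contain a previously identified corner -- a threshold-$2$ bootstrap on the corner-incidence structure. The difficulty is twofold: a maximal segment through $c$ may terminate at a grid node on a \emph{straight} stretch of boundary, which is a triangulation vertex but not a phase-two boundary corner and hence unavailable at iteration $1$, so the process is genuinely multi-round (as Figure~\ref{fig:triangulation_example} illustrates); and one must guarantee that \emph{two} (not merely one) of $c$'s segments become active, ruling out cyclically mutually blocking corners. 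I would attack this with a minimal-counterexample/extremal argument: among all never-identified corners take one, $c$, that is extremal in a chosen axis order, and trace its incident maximal segments to the boundary via Observation~\ref{obs:incident_edges}, showing that along two of them the nearest corner is strictly more extremal and therefore already identified; this forces $c$ to be identified, a contradiction. Arranging the extremal choice so that it simultaneously controls two distinct segments of $c$ is the technical heart of the proof. Termination then follows because the number of corners of $F$ is bounded by a function of $\mathcal S$ alone, independent of the scale $\sigma$, which also justifies fixing the iteration count of the third phase as claimed in the main text.

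Finally, I would combine soundness and completeness: $V$ equals the corner set and the activated circuits equal the edge set, so the algorithm outputs exactly the triangulation of $F$ into the scaled triangles. Minimality is then immediate -- its vertices are only the forced triangulation vertices (no interior grid point and no straight-boundary grid point is ever identified), and its edges are maximal axis-segments cut only at genuine corners, so no two of its triangles can be fused into a larger triangle without deleting an actual junction. Hence the partition is the coarsest possible, matching the minimality of $\mathcal S$.
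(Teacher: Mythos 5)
Your proposal has a genuine gap at exactly the place you flag as ``the technical heart of the proof'': the completeness direction is a plan, not an argument. You never carry out the extremal/minimal-counterexample step that is supposed to guarantee that, for an as-yet-unidentified corner $c$, \emph{two} of its incident maximal axis segments already contain identified corners; without it, nothing rules out the loop emptying $N$ while corners of your target triangulation are still missing, and the asserted equality between $V$ and the corner set collapses. There is a second, structural problem with the framing itself: you fix the target as ``the minimal triangulation induced by the scaled triangles of $\mathcal S$,'' but Lemma~\ref{lem:usr:triangulation} must hold for \emph{every} face of an arbitrary amoebot structure that survives boundary detection --- the paper explicitly notes that the triangle scale may differ per face and may drop below $4$, so the face need not be a piece of any $T(\mathcal S)$ and Observation~\ref{obs:incident_edges} is not directly available. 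Before your soundness step (``every activated segment is a union of triangulation edges'') even makes sense, you would have to prove that an arbitrary $60^\circ$-polygonal face admits a minimal equal-size, grid-aligned triangulation whose maximal axis segments through vertices decompose into its edges. Finally, your closing minimality argument --- no two output triangles can be fused --- only excludes local coarsening; it does not imply global minimality, which requires showing that every corner the algorithm identifies is a vertex of \emph{every} valid equal-size triangulation of the face.

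The paper's proof sidesteps all of this by analyzing the terminal configuration directly rather than comparing against a pre-specified target. Since every amoebot in $V$ beeps on all circuits along the axes, every vertex of the final subdivision has all incident grid directions activated as edges, so every interior angle of every face of the subdivision is $60^\circ$ and every face is a triangle; equal size and alignment with $G_{eqt}$ follow from a local contradiction (a vertex of a smaller triangle lying on an edge of a larger adjacent one would itself have beeped on all axes and split the larger one); and minimality follows because the seed corners are unavoidable and each newly added corner is forced, via Observation~\ref{obs:incident_edges}, to be a vertex of any valid triangulation. This route needs no completeness claim and simultaneously yields the uniqueness statement of Corollary~\ref{cor:usr:triangulation}, which your proposal would instead have to assume up front. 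If you want to salvage your soundness/completeness architecture, the missing ingredients are (i) the existence and segment-decomposition properties of the minimal triangulation for arbitrary faces and (ii) a worked-out induction showing every one of its corners is eventually reached by beeps on two distinct axes.
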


\begin{proof}%[Proof of Lemma~\ref{lem:usr:triangulation}]
    First, note that the algorithm always terminates since there are only finite many amoebots.
    However, the algorithm works independently on each face.
    The only possibility for interactions are the circuits.
    But all connections at the boundaries are removed.
    So, consider a single face.
    
    The algorithm divides the given amoebot structure into faces.
    The vertices of the faces are given by $V$.
    The edges of the faces are given by the activated circuits.
    Note that the circuits along the boundaries are activated within the first iteration.
    
    The degree of each amoebot in $V$ is equal to the number of its neighbors since it sends a beep on all circuits along the axes.
    This implies that all corners of the faces have an angle of $60^\circ$.
    Thus, all faces are indeed triangles.
    Furthermore, all triangles have the same size.
    Suppose the contrary.
    Since the shape is connected, there have to be 2 triangles of different size with a common edge.
    At least one of the vertices of the smaller triangle has to lie on this edge.
    This vertex would activate all circuits along the axes (see Figure~\ref{fig:triangulation1}).
    This contradicts the existence of the greater triangle.
    Furthermore, the triangles coincide with the triangles of $G_{eqt}$.
    Otherwise, we can apply the same argument (see Figure~\ref{fig:triangulation2}).
    
    Finally, the number of triangles is minimal.
    Note that the boundary corners are necessary for all triangulations.
    Additional corners are only induced if not avoidable.
    Edges are added according to Observation~\ref{obs:incident_edges}.
\end{proof}

\begin{figure}[htb]
    \centering
    \begin{subfigure}[b]{0.19\columnwidth}
        \centering
        \begin{tikzpicture}
            \tikzmath{
                \scale = .3/2/11*36;
                \t = \scale/2;
                \h = sqrt(\scale^2 - (\scale/2)^2);
                % shape
                {
                    \filldraw[yellow]
                    (1*\h, 4*\t) --
                    (5*\h, 8*\t) --
                    (5*\h, 0*\t) --
                    cycle;
                    \filldraw[yellow]
                    (4*\h, 7*\t) --
                    (1*\h, 4*\t) --
                    (1*\h, 10*\t) --
                    cycle;
                };
                % particles
                for \i in {4,6,8,10}{
                    {
                        \filldraw (1*\h, \i*\t) circle (0.04);
                    };
                };
                for \i in {3,5,7,9}{
                    {
                        \filldraw (2*\h, \i*\t) circle (0.04);
                    };
                };
                for \i in {2,4,6,8}{
                    {
                        \filldraw (3*\h, \i*\t) circle (0.04);
                    };
                };
                for \i in {1,3,5,7}{
                    {
                        \filldraw (4*\h, \i*\t) circle (0.04);
                    };
                };
                for \i in {0,2,4,6,8}{
                    {
                        \filldraw (5*\h, \i*\t) circle (0.04);
                    };
                };
                {
                % corners
                    \draw (4*\h, 7*\t) circle (0.1);
                % bonds
                    \draw
                    (1*\h, 4*\t) --
                    (5*\h, 8*\t) --
                    (5*\h, 0*\t) --
                    cycle;
                    \draw
                    (4*\h, 7*\t) --
                    (1*\h, 4*\t) --
                    (1*\h, 10*\t) --
                    cycle;
                    % N - S
                    \draw[dotted] (2*\h, 3*\t) -- (2*\h, 9*\t);
                    \draw[dotted] (3*\h, 2*\t) -- (3*\h, 8*\t);
                    \draw[dotted] (4*\h, 1*\t) -- (4*\h, 7*\t);
                    % SW - NE
                    \draw[dotted] (1*\h, 6*\t) -- (3*\h, 8*\t);
                    \draw[dotted] (1*\h, 8*\t) -- (2*\h, 9*\t);
                    \draw[dotted] (5*\h, 2*\t) -- (4*\h, 1*\t);
                    \draw[dotted] (5*\h, 4*\t) -- (3*\h, 2*\t);
                    \draw[dotted] (5*\h, 6*\t) -- (2*\h, 3*\t);
                    % NW - SE
                    \draw[dotted] (4*\h, 7*\t) -- (5*\h, 6*\t);
                    \draw[dotted] (1*\h, 8*\t) -- (5*\h, 4*\t);
                    \draw[dotted] (1*\h, 6*\t) -- (5*\h, 2*\t);
                };
            }
        \end{tikzpicture}
        \caption{\centering}
        \label{fig:triangulation1}
    \end{subfigure}
    \hfill
    \begin{subfigure}[b]{0.19\columnwidth}
        \centering
        \begin{tikzpicture}
            \tikzmath{
                \scale = .3/2/11*36;
                \t = \scale/2;
                \h = sqrt(\scale^2 - (\scale/2)^2);
                % shape
                {
                    \filldraw[yellow]
                    (1*\h, 4*\t) --
                    (5*\h, 8*\t) --
                    (5*\h, 0*\t) --
                    cycle;
                    \filldraw[yellow]
                    (4*\h, 7*\t) --
                    (0*\h, 3*\t) --
                    (0*\h, 11*\t) --
                    cycle;
                };
                % particles
                for \i in {3,5,7,9,11}{
                    {
                        \filldraw (0*\h, \i*\t) circle (0.04);
                    };
                };
                for \i in {4,6,8,10}{
                    {
                        \filldraw (1*\h, \i*\t) circle (0.04);
                    };
                };
                for \i in {3,5,7,9}{
                    {
                        \filldraw (2*\h, \i*\t) circle (0.04);
                    };
                };
                for \i in {2,4,6,8}{
                    {
                        \filldraw (3*\h, \i*\t) circle (0.04);
                    };
                };
                for \i in {1,3,5,7}{
                    {
                        \filldraw (4*\h, \i*\t) circle (0.04);
                    };
                };
                for \i in {0,2,4,6,8}{
                    {
                        \filldraw (5*\h, \i*\t) circle (0.04);
                    };
                };
                {
                % corners
                    \draw (1*\h, 4*\t) circle (0.1);
                    \draw (4*\h, 7*\t) circle (0.1);
                % bonds
                    \draw
                    (1*\h, 4*\t) --
                    (5*\h, 8*\t) --
                    (5*\h, 0*\t) --
                    cycle;
                    \draw
                    (4*\h, 7*\t) --
                    (0*\h, 3*\t) --
                    (0*\h, 11*\t) --
                    cycle;
                    % N - S
                    \draw[dotted] (1*\h, 4*\t) -- (1*\h, 10*\t);
                    \draw[dotted] (2*\h, 3*\t) -- (2*\h, 9*\t);
                    \draw[dotted] (3*\h, 2*\t) -- (3*\h, 8*\t);
                    \draw[dotted] (4*\h, 1*\t) -- (4*\h, 7*\t);
                    % SW - NE
                    \draw[dotted] (0*\h, 5*\t) -- (3*\h, 8*\t);
                    \draw[dotted] (0*\h, 7*\t) -- (2*\h, 9*\t);
                    \draw[dotted] (0*\h, 9*\t) -- (1*\h, 10*\t);
                    \draw[dotted] (5*\h, 2*\t) -- (4*\h, 1*\t);
                    \draw[dotted] (5*\h, 4*\t) -- (3*\h, 2*\t);
                    \draw[dotted] (5*\h, 6*\t) -- (2*\h, 3*\t);
                    % NW - SE
                    \draw[dotted] (4*\h, 7*\t) -- (5*\h, 6*\t);
                    \draw[dotted] (0*\h, 9*\t) -- (5*\h, 4*\t);
                    \draw[dotted] (0*\h, 7*\t) -- (5*\h, 2*\t);
                    \draw[dotted] (0*\h, 5*\t) -- (1*\h, 4*\t);
                };
            }
        \end{tikzpicture}
        \caption{\centering}
        \label{fig:triangulation2}
    \end{subfigure}
    \hfill
    \begin{subfigure}[b]{0.19\columnwidth}
        \centering
        \begin{tikzpicture}
            \tikzmath{
                \scale = .55;
                \t = \scale/2;
                \h = sqrt(\scale^2 - (\scale/2)^2);
                % shape
                {
                    \filldraw[yellow]
                    (0*\h, 0*\t+2*0*\t) --
                    (0*\h, 0*\t+2*1*\t) --
                    (5*\h, 5*\t+2*1*\t) --
                    (5*\h, 5*\t+2*0*\t) --
                    cycle;
                };
                % particles
                for \j in {0,1,2,3,4,5}{
                    for \i in {0,1}{
                        {
                            \filldraw (\j*\h, \j*\t+2*\i*\t) circle (0.04);
                        };
                    };
                };
                % bonds
                for \j in {0,1,2,3,4,5}{
                    {
                        \draw[dotted] (\j*\h, \j*\t) -- (\j*\h, \j*\t+2*\t);
                    };
                };
                for \j in {0,1,2,3,4}{
                    {
                        \draw[dotted] (\j*\h, \j*\t+2*\t) -- (\j*\h+\h, \j*\t+3*\t);
                        \draw[dotted] (\j*\h, \j*\t+2*\t) -- (\j*\h+\h, \j*\t+\t);
                        \draw[dotted] (\j*\h, \j*\t) -- (\j*\h+\h, \j*\t+\t);
                    };
                };
                % corners
                {
                    \draw (0*\h, 0*\t+2*0*\t) circle (0.1);
                    \draw (0*\h, 0*\t+2*1*\t) circle (0.1);
                    \draw (5*\h, 5*\t+2*1*\t) circle (0.1);
                    \draw (5*\h, 5*\t+2*0*\t) circle (0.1);
                };
            }
        \end{tikzpicture}
        \caption{\centering}
        \label{fig:triangulation3}
    \end{subfigure}
    \hfill
    \begin{subfigure}[b]{0.19\columnwidth}
        \centering
        \begin{tikzpicture}
            \tikzmath{
                \scale = .55;
                \t = \scale/2;
                \h = sqrt(\scale^2 - (\scale/2)^2);
                % shape
                {
                    \filldraw[yellow]
                    (0*\h, 0*\t+2*0*\t) --
                    (0*\h, 0*\t+2*1*\t) --
                    (5*\h, 5*\t+2*1*\t) --
                    (5*\h, 5*\t+2*0*\t) --
                    cycle;
                };
                % particles
                for \j in {0,1,2,3,4,5}{
                    for \i in {0,1}{
                        {
                            \filldraw (\j*\h, \j*\t+2*\i*\t) circle (0.04);
                        };
                    };
                };
                % bonds
                for \j in {0,5}{
                    {
                        \draw (\j*\h, \j*\t) -- (\j*\h, \j*\t+2*\t);
                    };
                };
                for \j in {1,2,3,4}{
                    {
                        \draw[dotted] (\j*\h, \j*\t) -- (\j*\h, \j*\t+2*\t);
                    };
                };
                for \j in {0,4}{
                    {
                        \draw (\j*\h, \j*\t+2*\t) -- (\j*\h+\h, \j*\t+3*\t);
                        \draw (\j*\h, \j*\t+2*\t) -- (\j*\h+\h, \j*\t+\t);
                        \draw (\j*\h, \j*\t) -- (\j*\h+\h, \j*\t+\t);
                    };
                };
                for \j in {1,2,3}{
                    {
                        \draw (\j*\h, \j*\t+2*\t) -- (\j*\h+\h, \j*\t+3*\t);
                        \draw[dotted] (\j*\h, \j*\t+2*\t) -- (\j*\h+\h, \j*\t+\t);
                        \draw (\j*\h, \j*\t) -- (\j*\h+\h, \j*\t+\t);
                    };
                };
                % corners
                {
                    \draw (1*\h, 1*\t+2*0*\t) circle (0.1);
                    \draw (4*\h, 4*\t+2*1*\t) circle (0.1);
                };
            }
        \end{tikzpicture}
        \caption{\centering}
        \label{fig:triangulation4}
    \end{subfigure}
    \hfill
    \begin{subfigure}[b]{0.19\columnwidth}
        \centering
        \begin{tikzpicture}
            \tikzmath{
                \scale = .55;
                \t = \scale/2;
                \h = sqrt(\scale^2 - (\scale/2)^2);
                % shape
                {
                    \filldraw[yellow]
                    (0*\h, 0*\t+2*0*\t) --
                    (0*\h, 0*\t+2*1*\t) --
                    (5*\h, 5*\t+2*1*\t) --
                    (5*\h, 5*\t+2*0*\t) --
                    cycle;
                };
                % particles
                for \j in {0,1,2,3,4,5}{
                    for \i in {0,1}{
                        {
                            \filldraw (\j*\h, \j*\t+2*\i*\t) circle (0.04);
                        };
                    };
                };
                % bonds
                for \j in {0,1,4,5}{
                    {
                        \draw (\j*\h, \j*\t) -- (\j*\h, \j*\t+2*\t);
                    };
                };
                for \j in {2,3}{
                    {
                        \draw[dotted] (\j*\h, \j*\t) -- (\j*\h, \j*\t+2*\t);
                    };
                };
                for \j in {0,4}{
                    {
                        \draw (\j*\h, \j*\t+2*\t) -- (\j*\h+\h, \j*\t+3*\t);
                        \draw (\j*\h, \j*\t+2*\t) -- (\j*\h+\h, \j*\t+\t);
                        \draw (\j*\h, \j*\t) -- (\j*\h+\h, \j*\t+\t);
                    };
                };
                for \j in {1,2,3}{
                    {
                        \draw (\j*\h, \j*\t+2*\t) -- (\j*\h+\h, \j*\t+3*\t);
                        \draw[dotted] (\j*\h, \j*\t+2*\t) -- (\j*\h+\h, \j*\t+\t);
                        \draw (\j*\h, \j*\t) -- (\j*\h+\h, \j*\t+\t);
                    };
                };
                % corners
                {
                    \draw (1*\h, 1*\t+2*1*\t) circle (0.1);
                    \draw (4*\h, 4*\t+2*0*\t) circle (0.1);
                };
            }
        \end{tikzpicture}
        \caption{\centering}
        \label{fig:triangulation5}
    \end{subfigure}
    \caption{
        (a) shows the case where the triangles do not all have the same size, and (b) shows the case where the triangles do not coincide with the triangles of $G_{eqt}$.
        In both cases the algorithm identifies the circled nodes as corners.
        (c), (d), and (e) show the first iterations of Algorithm~\ref{alg:triangulation} on a parallelogram with height 1 and length $n/2 = 5$.
        Each iteration adds 2 edges.
    }
\end{figure}
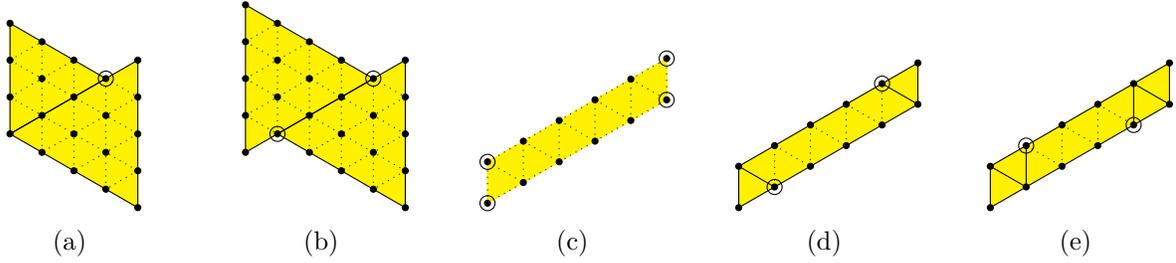

The next two lemmas deal with the runtime of the phase.
Lemma~\ref{lem:usr:arbitrary} shows that the runtime of Algorithm~\ref{alg:triangulation} is linear in the worst case.
However, Lemma~\ref{lem:usr:bound} shows that we can bound the runtime of the phase by a constant since $|\mathcal S|$ is constant.
Thus, the third phase terminates after $O(1)$ rounds.

\begin{lemma}
\label{lem:usr:arbitrary}
    Algorithm~\ref{alg:triangulation} may take $\Omega(n)$ rounds on an arbitrary amoebot structure.
\end{lemma}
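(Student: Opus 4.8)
The plan is to exhibit a single bad instance and trace the execution of Algorithm~\ref{alg:triangulation} on it. I would take the \emph{thin parallelogram} of height $1$ and length $\ell = \Theta(n)$, i.e.\ two parallel rows of $\Theta(n)$ amoebots each, aligned so that its long sides run along one axis while the two short sides are single edges (this is exactly the structure in Figures~\ref{fig:triangulation3}--\ref{fig:triangulation5}, drawn there for $\ell = 5$). Since $n = 2(\ell+1) = \Theta(\ell)$ and each iteration of the while-loop costs at least one round, it suffices to show that the algorithm runs for $\Omega(\ell)$ iterations.

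First I would pin down the starting configuration. The only boundary corners of this structure are its four geometric corners, two at each short side; hence after the second phase the initial set $N$ consists precisely of these four amoebots, all located in the two extreme columns. Every subsequent corner must therefore be discovered by the beep propagation, growing inward from the two ends.

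The heart of the argument is to show that the triangulated frontier advances by only a constant number of columns per iteration, which I would establish by exploiting the height-$1$ geometry. The only \emph{long} circuits along the axes are the two running along the top and bottom sides; every circuit in a diagonal direction connects amoebots of \emph{adjacent} columns only before hitting the top or bottom boundary, hence has constant length. A corner in $V$ beeping along a long side-circuit delivers a beep to \emph{every} amoebot of that side, but this is a single circuit, so by itself it never produces a new corner, since a new corner requires beeps on at least two circuits (Line~7 of Algorithm~\ref{alg:triangulation}). The mandatory \emph{second} beep can only arrive along one of the short diagonal circuits, and such a circuit reaches only the neighbouring column. Consequently, in iteration $i$ the newly identified corners lie in column $\Theta(i)$ counted from each end (compare Figures~\ref{fig:triangulation3}--\ref{fig:triangulation5}, where exactly two new corners appear per iteration and the frontier moves inward one column at a time). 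The two frontiers meet only after $\Omega(\ell)$ iterations, during each of which $N \neq \emptyset$, so the loop executes $\Omega(\ell) = \Omega(n)$ times.

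The step I expect to be the main obstacle is precisely this last one: arguing rigorously that no corner is discovered far ahead of the current frontier. The danger is the long side-circuits, which might naively appear to carry information across the whole structure in a single round. The resolution is the two-beep requirement together with the confinement of the second beep to adjacent columns forced by the small height. I would make this precise by an induction on $i$ that determines exactly which amoebots belong to $V$ after iteration $i$, verifying from the local pin pattern (Figure~\ref{fig:triangulation:circuits1}) that no amoebot beyond the claimed frontier receives beeps on two distinct circuits.
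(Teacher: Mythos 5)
Your proposal is correct and uses the same witness as the paper: the height-$1$ parallelogram of length $\Theta(n)$. The paper's proof is terser---it simply observes that the minimal triangulation of this structure has $\Omega(n)$ edges while Algorithm~\ref{alg:triangulation} adds only two edges per iteration---whereas you supply the justification for that constant per-iteration progress, namely that the requirement of receiving beeps on at least two axis circuits confines newly discovered corners to columns adjacent to the current frontier, because the only long circuits are the two row circuits (one per amoebot) and every column or diagonal circuit spans just two adjacent columns.
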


\begin{proof}
    Consider a parallelogram with height 1 and length $n/2$ (see Figures~\ref{fig:triangulation3} to~\ref{fig:triangulation5}).
    The minimal triangulation has $\Omega(n)$ edges.
    Algorithm~\ref{alg:triangulation} only adds two edges in each iteration.
\end{proof}

\begin{lemma}
\label{lem:usr:bound}
    The runtime of Algorithm~\ref{alg:triangulation} does only depend on the shape and not on its representation.
\end{lemma}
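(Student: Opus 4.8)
The plan is to show that the number of while-loop iterations of Algorithm~\ref{alg:triangulation} is an intrinsic combinatorial invariant of the shape $\mathcal S$, and in particular is unaffected by changing the scaling factor $\sigma$ (or the translation and rotation) of the representation. First I would invoke Lemma~\ref{lem:usr:triangulation} to observe that the set $V$ maintained by the algorithm grows monotonically and, upon termination, equals exactly the set of triangle corners, which are precisely the images under $T$ of the vertices of the triangles forming $\mathcal S$. Consequently the entire execution is faithfully described by a \emph{discovery process} on the finite abstract corner set $\mathcal C$ of $\mathcal S$: wave $1$ consists of the boundary corners, and a corner enters wave $i+1$ iff it is not yet in $V$ and is struck by beeps on at least two of the three axial circuits emanating from corners already in $V$. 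The number of iterations is exactly the number of nonempty waves.

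Next I would make the discovery relation explicit and purely geometric. By Observation~\ref{obs:incident_edges}, the axial circuits through a corner run along straight grid-lines until they reach the boundary, so the statement ``corner $c$ receives a beep from an identified corner $a$ in axis direction $\delta$'' is equivalent to the statement that $a$ and $c$ lie on a common maximal axis-$\delta$ segment that stays inside the structure. This is a relation among corners referring only to collinearity along the three grid directions and to the location of the boundary of $\mathcal S$ between them.

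The heart of the argument is a scale-invariance claim. Given two representations $V(T(\mathcal S))$ and $V(T'(\mathcal S))$, consider the similarity $\phi = T' \circ T^{-1}$ mapping the first onto the second. Because both $T$ and $T'$ align the edges of $\mathcal S$ with the grid axes of $G_{eqt}$, the rotational part of $\phi$ is a multiple of $60^\circ$, so $\phi$ permutes the three grid-axis directions, carries the corners of one triangulation bijectively onto the corresponding corners of the other, and maps the shape boundary to the shape boundary. Hence the relation ``$c$ and $a$ lie on a common maximal axial segment inside the structure'' is preserved by $\phi$ up to the induced permutation of $\{\delta_1,\delta_2,\delta_3\}$; since the trigger for discovery is ``struck on at least two of the three axial circuits'', and ``at least two of three'' is invariant under any permutation of the directions, $\phi$ sends each wave of one execution onto the corresponding wave of the other. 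An induction on the iteration index $i$ --- with base case that wave $1$ is the $\phi$-image of wave $1$ because ``boundary corner'' is a property of $\mathcal S$, and inductive step transporting the discovery trigger through $\phi$ --- then shows that the two executions have identical wave sequences, and in particular the same number of iterations.

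The main obstacle I expect is the bookkeeping that ties the physical beep propagation along the truncated axial circuits exactly to the abstract collinearity-inside-the-shape relation: one must verify that a circuit is cut by the boundary in precisely those places where the corresponding grid-line leaves $\mathcal S$, so that $\phi$-equivariance of the boundary really does yield $\phi$-equivariance of the circuits, and that (via Lemma~\ref{lem:usr:triangulation}) no non-corner amoebot is ever spuriously inserted into $V$ in either execution, which is what licenses treating the whole process as a dynamics on $\mathcal C$ alone. Once that correspondence is pinned down, the invariance of the iteration count under change of representation --- and hence its dependence on $\mathcal S$ only --- is immediate.
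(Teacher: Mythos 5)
Your proposal is correct, but it proves a different (and in fact harder) half of the statement than the paper does, and the two arguments are complementary rather than identical. The paper's entire proof is the observation that the algorithm only employs circuits and that the delay of a circuit is independent of its size --- i.e., each iteration of the while loop costs $O(1)$ rounds no matter how large the representation is --- and it leaves entirely implicit the claim that the \emph{number} of iterations is an invariant of $\mathcal S$. You do the opposite: you make the iteration-count invariance rigorous via the similarity $\phi = T' \circ T^{-1}$, noting that its rotational part is a multiple of $60^\circ$ so that it permutes the three axis directions, maps boundary corners to boundary corners and truncated axial circuits to truncated axial circuits, and hence (since the trigger ``beeped on at least two of three axial circuits'' is permutation-invariant) transports wave $i$ of one execution onto wave $i$ of the other by induction; this is a genuine proof of the part the paper hand-waves, and your appeal to Lemma~\ref{lem:usr:triangulation} to rule out spurious non-corner insertions into $V$ is the right way to reduce the dynamics to the abstract corner set. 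What your write-up is missing is the paper's one line: even with a representation-independent iteration count, the runtime in \emph{rounds} is only representation-independent because a beep traverses an arbitrarily long circuit in a single round; without that remark a larger $\sigma$ could still cost more rounds per iteration. Adding that sentence makes your argument a complete and substantially more rigorous replacement for the paper's proof.
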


\begin{proof} 
    The algorithm only employs circuits.
    The delay of a circuit does not depend on its size.
\end{proof}

Now, we consider the fourth phase.
Figure~\ref{fig:message2} shows an exemplary message.
The following lemma shows the correctness of the information gathering.

\begin{lemma}
\label{lem:usr:gathering2}
    Each triangle corner knows the positions of all triangles up to distance $i + 1$ after the $i$-th iteration.
    In particular, each triangle corner knows the positions of all triangles up to distance $|\mathcal S| + 1$ after the $|\mathcal S|$-th iteration.
\end{lemma}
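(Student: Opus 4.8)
The plan is to mirror the inductive argument used for Lemma~\ref{lem:usr:gathering1}, but to carry it out on the \emph{triangle adjacency graph} rather than on the amoebot grid $G_S$. The vertices of this graph are the triangle corners identified in the third phase, and two corners are adjacent exactly when they are the two endpoints of a common triangle side (equivalently, when they are joined by one of the activated circuits of Algorithm~\ref{alg:triangulation}). Distance between triangles is measured in this graph. First I would set up the communication channel: for each triangle side, the primitive for message transmission between its two endpoints is available, since the circuits in Figure~\ref{fig:triangulation:circuits2} connect the first pin of one endpoint to the last pin of the other and vice versa. Thus a triangle corner can send a constant-sized message to each corner adjacent to it in the triangle adjacency graph in $O(1)$ rounds.

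For the induction, the base case $i = 0$ asserts that each triangle corner knows the triangles up to distance $1$, i.e.\ those incident to it. This follows from the third phase: after the triangulation, a corner can read off from its incident activated circuits which triangles surround it. For the inductive step, assuming that after the $i$-th iteration each corner knows all triangles up to distance $i+1$, I would have each corner encode this knowledge as a message (as illustrated in Figure~\ref{fig:message2}, where each possible triangle with respect to the transmission direction is assigned a truth value) and transmit it across each incident triangle side. A corner then combines its own knowledge with the sets received from its neighbors, which yields knowledge of all triangles up to distance $i+2$, completing the step. Because $|\mathcal S|$ is constant, only a constant number of triangles lie within distance $|\mathcal S|+1$ of any corner, so the messages remain constant-sized and the number of iterations is bounded; this gives the ``in particular'' claim by taking $i = |\mathcal S|$.

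The main obstacle will be the handling of disconnected faces. Unlike in the first phase, where the amoebot structure is connected by assumption, the triangles produced by the third phase need not be connected: several faces may touch only at boundary corners (cf.\ Figure~\ref{fig:shape5}). Consequently the information must be gathered separately for each face, and I would argue that the transmission along triangle sides never crosses a face boundary, since a boundary corner does not forward information between two distinct faces meeting at it. This separation is exactly what lets a boundary corner detect whether more than one face is present, and I would defer the precise bookkeeping for this detection to the surrounding discussion of the fourth phase. The remaining verification --- that the combined sets indeed describe all triangles within distance $i+2$ and no spurious ones --- is routine given Observation~\ref{obs:incident_edges} and the topology of $G_{eqt}$.
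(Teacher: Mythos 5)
Your proposal is correct and follows essentially the same route as the paper, which proves this lemma by the same induction used for Lemma~\ref{lem:usr:gathering1}, transplanted to the triangle adjacency structure: the base case is that each corner knows its incident triangles, and the inductive step transmits and merges the known triangle positions across each triangle side. Your additional remarks on the communication channel and on gathering information separately per face match the paper's surrounding discussion of the fourth phase rather than the lemma's proof itself, but they do not change the argument.
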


\begin{proof}
    The proof works analogously to Lemma~\ref{lem:usr:gathering1}.
\end{proof}

Contrary to the first phase, we cannot assume that the triangles are connected.
We therefore gather the information for each face separately.
In order to do that, amoebots representing multiple boundary corners perform the information gathering for each boundary corner separately.
The boundary corners of these amoebots have to coincide in all gathered triangulations.
Otherwise, there is more than one face.
Hence, these amoebots can detect multiple faces and notify the whole amoebot structure about that by beeping the global circuit in an additional round.
If the global circuit is not activated, there is only a single face.
In this case, the amoebots can directly compare the triangulation to the shape.
Note that different triangulation would lead to different results.
However, the following corollary of Lemma~\ref{lem:usr:triangulation} shows that the triangulation is indeed unique.

\begin{corollary}
\label{cor:usr:triangulation}
    The minimal triangulation of each face is unique.
\end{corollary}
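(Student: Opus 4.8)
The plan is to prove uniqueness by showing that, once the boundary of a face is fixed, a minimal (uniform, grid-aligned) triangulation has no free choices left, so it must coincide with the triangulation produced by Algorithm~\ref{alg:triangulation}. Existence of such a minimal triangulation is already supplied by Lemma~\ref{lem:usr:triangulation}; what remains is rigidity. I would fix a single face $F$ (the algorithm treats faces independently, since all connections are severed at the boundary, so it suffices to argue per face) and consider an arbitrary minimal triangulation $T$ of $F$. From the structural facts extracted in the proof of Lemma~\ref{lem:usr:triangulation}---every face is a triangle with $60^\circ$ corners, all triangles share a common size, and they coincide with the triangles of $G_{eqt}$---a triangulation of $F$ is completely described by two data: the common scale $\sigma$ of its triangles and the placement of its corner set on the lattice.

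First I would argue that the scale is forced. Minimality means fewest triangles, i.e.\ largest $\sigma$; since the number of size-$\sigma$ triangles filling $F$ equals $\mathrm{Area}(F)$ divided by the area of one triangle (which is monotone in $\sigma$), minimizing the count is equivalent to maximizing $\sigma$, and the maximal admissible uniform scale for a fixed lattice polygon is a single well-defined value. Second I would argue that, given this $\sigma$, the placement is forced by the boundary corners. The boundary corners of $F$ are not a choice: they are exactly the corners detected in the second phase and are shared by every triangulation of $F$. Seeding from these corners, Observation~\ref{obs:incident_edges} dictates that each adjacent lattice node lies on a line segment of shape edges running to the boundary; hence the triangulation edges emanate from already-identified corners along the three axis directions at spacing $\sigma$, and every new corner is determined as the intersection of two such forced segments. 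Iterating this forcing leaves no degree of freedom, so any two minimal triangulations of $F$ agree corner-for-corner and edge-for-edge, which is the claimed uniqueness.

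The main obstacle I anticipate is ruling out two triangulations of the \emph{same} minimal count that differ only by a shift of the size-$\sigma$ sublattice (a phase ambiguity): a priori the $\sigma$-grid could be positioned in several ways inside $F$. The resolution is that the boundary corners pin the phase simultaneously in all three axis directions---because each boundary corner must itself be a triangulation vertex, its lattice coordinates fix the relevant coset of the sublattice, and a lattice polygon has enough non-collinear boundary corners to eliminate every remaining translation. A secondary subtlety is justifying that no mixed-size triangulation can undercut the uniform one in triangle count; here I would reuse the equal-size rigidity from Lemma~\ref{lem:usr:triangulation}, which shows that two adjacent triangles sharing an edge must have the same size, so uniformity is a consequence rather than an extra hypothesis, and the minimal count is attained only by the unique maximal-scale uniform triangulation.
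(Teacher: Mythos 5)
Your proof is correct and follows essentially the same route as the paper's: the paper's entire argument is that any other minimal triangulation would have to contain all corners produced by Algorithm~\ref{alg:triangulation} because the algorithm only adds unavoidable corners, which is exactly your forcing argument (boundary corners are mandatory vertices of any triangulation, and Observation~\ref{obs:incident_edges} propagates the remaining corners and edges with no remaining freedom). You supply considerably more detail than the paper's two-sentence proof --- the scale-maximization step and the pinning of the $\sigma$-sublattice phase --- but the underlying idea is identical.
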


\begin{proof}%[Proof of Corollary~\ref{cor:usr:triangulation}]
    The statement follows directly from Lemma~\ref{lem:usr:triangulation}.
    Any other minimal triangulation would have to include all corners determined by Algorithm~\ref{alg:triangulation} since it only includes unavoidable corners.
\end{proof}

The runtime analysis is analogous to the first phase with the addition of the previously mentioned round where the amoebots beep the global circuit in order to notify the whole amoebot structure about multiple faces.
Thus, the fourth phase terminates after $O(1)$ rounds.

Finally, we can prove the Theorem~\ref{th:usr}.

\begin{proof}[Proof of Theorem~\ref{th:usr}]
    The correctness of the theorem follows from the correctness of the phases.
    All phases only require a constant number of rounds.
    Note that we have to perform the chirality agreement algorithm if the amoebots do not share a common chirality.
\end{proof}

\end{document}